\begin{document}
\title{SGD$\_$Tucker: A Novel Stochastic Optimization Strategy for Parallel Sparse Tucker Decomposition}
\author{
Hao Li, \emph{Student Member}, \emph{IEEE},
Zixuan Li,
Kenli Li, \emph{Senior Member}, \emph{IEEE},
Jan S. Rellermeyer, \emph{Member}, \emph{IEEE},
Lydia Chen, \emph{Senior Member}, \emph{IEEE},
Keqin Li, \emph{Fellow}, \emph{IEEE}.
\IEEEcompsocitemizethanks
{
\IEEEcompsocthanksitem
Hao Li, Zixuan Li, Kenli Li and Keqin Li are with the College of Computer Science and Electronic Engineering, Hunan University, and National Supercomputing Center in Changsha, Hunan, China, 410082.
\IEEEcompsocthanksitem Corresponding author: Kenli Li.
\protect\\ E-mail:
lihao123@hnu.edu.cn (H.Li-9@tudelft.nl),
zixuanli@hnu.edu.cn,
lkl@hnu.edu.cn,
J.S.Rellermeyer@tudelft.nl,
Y.Chen-10@tudelft.nl,
lik@newpaltz.edu.
\IEEEcompsocthanksitem Hao Li is also with TU Delft, Netherlands.
\IEEEcompsocthanksitem J.S.Rellermeyer, Lydia Chen are with TU Delft, Netherlands.
\IEEEcompsocthanksitem Keqin Li is also with the Department of Computer Science, State University of New York, New Paltz, New York 12561, USA.
}
%
}

\markboth{}%
{Shell \MakeLowercase{\textit{et al.}}: Bare Advanced Demo of IEEEtran.cls for Journals}
\IEEEtitleabstractindextext{%
\begin{abstract}
Sparse Tucker Decomposition (STD) algorithms learn a core tensor and a group of factor matrices to obtain an optimal low-rank representation feature for the
\underline{H}igh-\underline{O}rder, \underline{H}igh-\underline{D}imension, and \underline{S}parse \underline{T}ensor (HOHDST).
However,
existing STD algorithms face the problem of intermediate variables explosion which results from the fact that the formation of those variables, i.e., matrices Khatri-Rao product, Kronecker product, and
matrix-matrix multiplication, follows the whole elements in sparse tensor.
The above problems prevent deep fusion of efficient computation and big data platforms.
To overcome the bottleneck,
a novel stochastic optimization strategy (SGD$\_$Tucker) is proposed for STD
which can automatically divide the high-dimension intermediate variables
into small batches of intermediate matrices.
Specifically, SGD$\_$Tucker only follows the randomly selected small samples rather than the whole elements, while maintaining the overall accuracy and convergence rate.
In practice,
SGD$\_$Tucker features the two distinct advancements over the state of the art.
First, SGD$\_$Tucker can prune the communication overhead for the core tensor in distributed settings.
Second, the low data-dependence of SGD$\_$Tucker enables fine-grained parallelization,
which makes SGD$\_$Tucker obtaining lower computational overheads with the same accuracy.
Experimental results show that SGD$\_$Tucker runs at least 2$X$ faster than the state of the art.
\renewcommand{\raggedright}{\leftskip=0pt \rightskip=0pt plus 0cm}
\raggedright
\end{abstract}
\begin{IEEEkeywords}
High-order, High-dimension and Sparse Tensor;
Low-rank Representation Learning;
Machine Learning Algorithm;
Sparse Tucker Decomposition;
Stochastic Optimization;
Parallel Strategy.
\end{IEEEkeywords}
}
\maketitle

\IEEEdisplaynontitleabstractindextext

\IEEEpeerreviewmaketitle
\section{Introduction} \label{section1}
\renewcommand{\raggedright}{\leftskip=0pt \rightskip=0pt plus 0cm}
\raggedright
\IEEEPARstart{T}{e}nsors are a widely used data representation style for interaction data in the Machine Learning (ML) application community \cite{ex102}, e.g,
in Recommendation Systems \cite{ex108},
Quality of Service (QoS) \cite{ex109},
Network Flow \cite{ex105},
Cyber-Physical-Social (CPS) \cite{ex211}, or
Social Networks \cite{ex107}.
In addition to applications in which the data is naturally represented in the form of tensors,
another common used case is the fusion in multi-view or multi-modality problems~\cite{ex104}.
Here, during the learning process,
each modality corresponds to a feature and the feature alignment involves fusion.
Tensors are a common form of feature fusion for multi-modal learning \cite{ex104, ex111, ex112, ex115}.
Unfortunately, tensors can be difficult to process in practice.
For instance, an $N$-order tensor comprises of the interaction relationship between $N$ kinds of attributes and
if each attribute has millions of items,
this results in a substantially large size of data \cite{ex110}.
As a remedy,
\emph{dimensionality reduction} can be used to represent the original state using much fewer parameters \cite{ex117}.

Specifically in the ML community,
Tensor Factorization (TF), as a classic dimensionality reduction technique, plays a key role for low-rank representation.
Xu et al., \cite{ex202} proposed a Spatio-temporal multi-task learning model via TF and in this work,
tensor data is of $5$-order, i.e., weather, traffic volume, crime rate, disease incidents, and time.
Meanwhile, this model made predictions through the time-order for the multi-task in weather, traffic volume, crime rate, and disease incidents orders and the relationship construction between those orders is via TF.
In the community of Natural Language Processing (NLP),
Liu et al., \cite{ex204} organized a mass of texts into a tensor and each slice is modeled as a sparse symmetric matrix.
Furthermore, the tensor representation is a widely-used form for Convolutional Neural Networks (CNNs), e.g., in the popular TensorFlow~\cite{ex127} framework, and
Kossaifi et al., \cite{ex203} took tensorial parametrization of a CNNs and pruned the parametrization by Tensor Network Decomposition (TND).
Meanwhile, Ju, et al., \cite{ex205} pruned and then accelerated the Restricted Boltzmann Machine (RBM) coupling with TF.
In the Computer Vision (CV) community,
Wang et al., \cite{ex210} modeled
various factors, i.e., pose and illumination, as an unified tensor and make disentangled representation by adversarial autoencoder via TF.
Zhang et al., \cite{ex201} constructed multi subspaces of multi-view data and then abstract factor matrices via TF for
the unified latent of each view.

\underline{H}igh-\underline{O}rder, \underline{H}igh-\underline{D}imension, and \underline{S}parse \underline{T}ensor (HOHDST) is a common situation in the big-data processing and ML application community \cite{ex113, ex114}.
Dimensionality reduction can also be used to find the low-rank space of HOHDST in ML applications \cite{ex119},
which can help to make prediction from existing data.
Therefore, it is non-trivial to learn the pattern from the existed information in a HOHDST and then make the corresponding prediction.
Sparse Tucker Decomposition (STD) is one of the most popular TF models for HOHDST, which can find the $N$-coordinate systems and those systems are tangled by a core tensor between each other \cite{ex121}.
Liu et al., \cite{ex239} proposed to accomplish the visual tensor completion via STD.
The decomposition process of STD involves the entanglement of $N$-factor matrices and core tensor and
the algorithms follow one of the following two directions:
1) search for optimal orthogonal coordinate system of factor matrices, e.g., High-order Orthogonal Iteration (HOOI)~\cite{ex214};
2) designing optimization solving algorithms~\cite{ex123}.
HOOI is a common solution for STD~\cite{ex122} and
able to find the $N$ orthogonal coordinate systems which are similar to Singular Value Decomposition (SVD), but requires frequent intermediate variables of Khatri-Rao and Kronecker products \cite{ex120}.

An interesting topic is that
stochastic optimization~\cite{ex221}, e.g., Count Sketch and Singleshot~\cite{ex222, ex223}, etc. can alleviate this bottleneck to a certain extent, depending on the size of dataset.
However, those methods depend on the count sketch matrix, which cannot be easily implemented in a distributed environment and is notoriously difficult to parallelize.
The Stochastic Gradient Descent (SGD) method can approximate the gradient from randomly selected subset and
it forms the basis of the state of art methods, e.g.,
variance SGD \cite{ex225},
average SGD \cite{ex229},
Stochastic Recursive Gradient \cite{ex230}, and
momentum SGD \cite{ex226}.
SGD is adopted to approximate the eventual optimal solver with lower space and computational complexities;
meanwhile, the low data-dependence makes the SGD method amenable to parallelization \cite{ex125}.
The idea of construction for the computational graph of the mainstream platforms, e.g., Tensorflow~\cite{ex127} and Pytorch~\cite{ex128}, is
based on the SGD~\cite{ex240} and
practitioners have already demonstrated its powerful capability on large-scale optimization problems.
The computational process of SGD only needs a batch of training samples rather than the full set which gives
the ML algorithm the low-dependence between each data block and low communication overhead \cite{ex227}.

There are three challenges to process HOHDST in a fast and accurate way:
1) how to define a suitable optimization function to find the optimal factor matrices and core tensor?
2) how to find an appropriate optimization strategy in a low-overhead way and then reduce the entanglement of the factor matrices with core tensor which
may produce massive intermediate matrices?
3) how to parallelize STD algorithms and make distributed computation with low communication cost?
In order to solve these problems, we present the main contributions of this work which are listed as follows:
\begin{enumerate}
  \item A novel optimization objective for STD is presented.
  This proposed objective function not only has a low number of parameters via coupling the Kruskal product (Section~\ref{Section31}) but also is approximated as a convex function;
  \item A novel stochastic optimization strategy is proposed for STD, SGD$\_$Tucker,
which can automatically divide the high-dimension intermediate variables
into small batches of intermediate matrices that only follows the index of the randomly selected small samples;
meanwhile, the overall accuracy and convergence are kept (Section \ref{Section33});
  \item The low data-dependence of SGD$\_$Tucker creates opportunities for fine-grained parallelization,
which makes SGD$\_$Tucker obtaining lower computational overhead with the same accuracy.
Meanwhile, SGD$\_$Tucker does not rely on the specific compressive structure of a sparse tensor (Section \ref{Section35}).
\end{enumerate}

To our best knowledge,
SGD$\_$Tucker is the first work that can divide the high-dimension intermediate matrices
of STD into small batches of intermediate variables, a critical step for fine-grained parallelization with low communication overhead.
In this work,
the related work is presented in Section~\ref{Sectionrela}.
The notations and preliminaries are introduced in Section~\ref{Section2}.
The SGD$\_$Tucker model as well as parallel and communication overhead on distributed environment for STD are showed in Section~\ref{Section3}.
Experimental results are illustrated in Section~\ref{Section4}.

\section{Related Studies} \label{Sectionrela}
For HOHDST, there are many studies to accelerate STD on the state of the art parallel and distributed platforms, i.e., OpenMP, MPI, CUDA, Hadoop, Spark, and OpenCL.
Ge et al.,  \cite{ex232} proposed distributed CANDECOMP/PARAFAC Decomposition (CPD) which is a special STD for HOHDST.
Shaden et al., \cite{ex233} used a Compressed Sparse Tensors (CSF) structure which can optimize the access efficiency for HOHDST.
Tensor-Time-Matrix-chain (TTMc)\cite{ex131} is a key part for \textcolor[rgb]{0.00,0.00,1.00}{Tucker Decomposition (TD)} and TTMc is
a data intensive task.
Ma et al., \cite{ex131} optimized the TTMc operation on GPU which
can take advantage of intensive and partitioned computational resource of GPU, i.e.,
a warp threads (32) are automatically synchronized and this mechanism is apt to matrices block-block multiplication.
Non-negative Tucker Decomposition (NTD) can extract the non-negative latent factor of a HOHDST,
which is widely used in ML community.
However, NTD need to construct the numerator and denominator and both of them involve TTMc.
Chakaravarthy et al., \cite{ex130} designed a mechanism which can
divide the TTMc into multiple independent blocks and then those tasks are allocated to distributed nodes.

HOOI is a common used TF algorithm which comprises of a series of TTMc matrices and SVD for the TTMc.
\cite{ex132} focused on dividing the computational task of TTMc into a list of independent parts and
a distributed HOOI for HOHDST is presented in \cite{ex235}.
However, the intermediate cache memory for TTMc of HOOI increased explosively.
Shaden et al., \cite{ex233} presented a parallel HOOI algorithm and
this work used a Compressed Sparse Tensors (CSF) structure which can improve the access efficiency and save the memory for HOHDST.
Oh and Park \cite{ex234, ex237} presented a parallelization strategy of ALS and CD for STD on OpenMP.
A heterogeneous OpenCL parallel version of ALS for STD is proposed on \cite{ex236}.
However, the above works are still focused on the naive algorithm parallelization which
cannot solve the problem of fine-grained parallelization.

\section{Notation and Preliminaries} \label{Section2}
The main notations include the tensors, matrices and vectors, along with their basic elements and operations (in Table \ref{table21}).
Fig.~\ref{fig202} illustrates the details of \textcolor[rgb]{0.00,0.00,1.00}{TD} including the tanglement of the core tensor $\mathcal{G}$ with the $N$ factor matrices $\textbf{A}^{(n)}$, $n$ $\in$ $\{N\}$.
The data styles for TF include sparse and dense tensors and STD is devoted to HOHDST.
Here,
basic definitions for STD and models are rendered in Section \ref{section21}.
Finally, the STD process for the HOHDST is illustrated in Section \ref{section23}.

\begin{table}[!htbp]
\setlength{\abovedisplayskip}{0pt}
\setlength{\belowdisplayskip}{0pt}
\renewcommand{\arraystretch}{1.0}
\caption{Table of symbols.}
\centering
\label{table21}
\tabcolsep1pt
\begin{tabular}{cl}
\hline
\hline
 Symbol\ \ &\ \ Definition\\
\hline
$I_{n}$ \ \ &\ \ The size of row in the $n$th factor matrix; \\
$J_{n}$ \ \ &\ \ The size of column in the $n$th factor matrix; \\
$\mathcal{X}$ \ \ &\ \ Input $N$ order tensor $\in$ $\mathbb{R}^{I_{1}\times I_{2}\times\cdots \times I_{N}}_{+}$; \\
$x_{i_{1},i_{2},\cdots,i_{n}}$\ \ &\ \ $i_{1},i_{2},\cdots,i_{n}$th element of tensor $\mathcal{X}$;\\
$\mathcal{G}$ \ \ &\ \ Core $N$ order tensor $\in$ $\mathbb{R}^{J_{1}\times J_{2}\times\cdots \times J_{N}}$;\\
$\textbf{X}$\ \ &\ \ Input matrix $\in$ $\mathbb{R}^{I_{1}\times I_{2}}_{+}$;\\
$\textbf{X}^{(n)}$\ \ &\ \ $n$th unfolding matrix for tensor $\mathcal{X}$;\\
Vec$_{n}$($\bm{\mathcal{X}})$\ \ &\ \ $n$th vectorization of a tensor $\mathcal{X}$;\\
$\Omega$\ \ &\ \ Index $(i_{1},\cdots,i_{n},\cdots,i_{N})$ of a tensor $\mathcal{X}$;\\
$\Omega^{(n)}_{M}$\ \ &\ \ Index $(i_{n},j)$ of $n$th unfolding matrix $\textbf{X}^{(n)}$;\\
$(\Omega^{(n)}_{M})_{i}$\ \ &\ \ Column index set in $i$th row of $\Omega^{(n)}_{M}$;\\
$(\Omega^{(n)}_{M})^{j}$\ \ &\ \ Row index set in $j$th column of $\Omega^{(n)}_{M}$;\\
$\Omega^{(n)}_{V}$\ \ &\ \ Index $i$ of $n$th unfolding vector Vec$_{n}$($\bm{\mathcal{X}}$);\\
$\{N\}$           \ \ &\ \ The ordered set $\{1,2,\cdots,N-1,N\}$;\\
$\textbf{A}^{(n)}$\ \ &\ \ $n$th feature matrix $\in$ $\mathbb{R}^{I_{n}\times J_{n}}$;\\
$a_{i_{n},:}^{(n)}$\ \ &\ \ $i_{n}$th row vector $\in$ $\mathbb{R}^{K_{n}}$ of $\textbf{A}^{(n)}$;\\
$a_{:,j}^{(n)}$\ \ &\ \ $j$th column vector $\in$ $\mathbb{R}^{K_{n}}$ of $\textbf{A}^{(n)}$;\\
$a_{i_{n},k_{n}}^{(n)}$\ \ &\ \ $k_{n}$th element of feature vector $a_{i_{n},:}^{(n)}$;\\
$\cdot \big/ (-,/)$\ \  &\ \ Element-wise multiplication$\big/$ division;\\
$\circ$\ \  &\ \ Outer production of vectors;\\
$\odot$\ \  &\ \ Khatri-Rao (columnwise Kronecker) product;\\
$\times$\ \  &\ \ Matrix product;\\
$\times_{(n)}$\ \  &\ \ $n$-Mode Tensor-Matrix product;\\
$\otimes$\ \  &\ \ Kronecker product.\\
 \hline
 \hline
\end{tabular}
\end{table}

\subsection{Basic Definitions} \label{section21}
\newtheorem{definition}{Definition}
\begin{definition}[Tensor Unfolding (Matricization)]
$n$th tensor unfolding (Matricization) refers to that
a low order matrix
$\textbf{X}^{(n)}\in\mathbb{R}^{I_{n}\times I_{1}\cdots I_{n-1}\cdot I_{n+1}\cdot\cdots\cdot I_{N}}$ stores all information of a tensor
$\mathcal{X}\in\mathbb{R}^{I_{1}\times I_{2}\times\cdots I_{n}\cdots \times I_{N}}$ and
the matrix element $x^{(n)}_{i_{n},j}$ of $\textbf{X}^{(n)}$ at the position $j=1+\sum_{k=1,n\neq k}^{N}\left[(i_{k}-1)\mathop{\prod}_{m=1,m\neq n}^{k-1}I_{m}\right]$ contains the tensor element $x_{i_{1},i_{2},\cdots,i_{n}}$ of a tensor $\bm{\mathcal{X}}$.
\end{definition}

\begin{definition}[Vectorization of a tensor $\bm{\mathcal{X}}$]
$n$th tensor vectorization refers to that a vector $x^{(n)}$ (Vec$_{n}$($\bm{\mathcal{X}}$) and Vec($\textbf{X}^{(n)}$)) stores all elements in the $n$th matricization $\textbf{X}^{(n)}$ of a tensor $\mathcal{X}$ and $x^{(n)}_{k}$ $=$ $\textbf{X}^{(n)}_{i,j}$, where $k=(j-1)I_{n}+i$.
\end{definition}

\begin{definition}[Tensor Approximation]
A $N$-order tensor $\bm{\mathcal{X}}$ $\in$ $\mathbb{R}^{I_{1}\times\cdots\times I_{N}}$ can be approximated by $\widehat{\bm{\mathcal{X}}}$ $\in$ $\mathbb{R}^{I_{1}\times\cdots\times I_{N}}$, as well as
a $N$-order residual or noisy tensor $\bm{\mathcal{E}}$ $\in$ $\mathbb{R}^{I_{1} \times\cdots\times I_{N}}$.
The low-rank approximation problem is defined as $\bm{\mathcal{X}}=\widehat{\bm{\mathcal{X}}}+\bm{\mathcal{E}}$,
where $\widehat{\bm{\mathcal{X}}}$ is denoted by a low-rank tensor.
\end{definition}

\begin{figure}
\centering
\includegraphics[width=3.5in,height=2.0in]{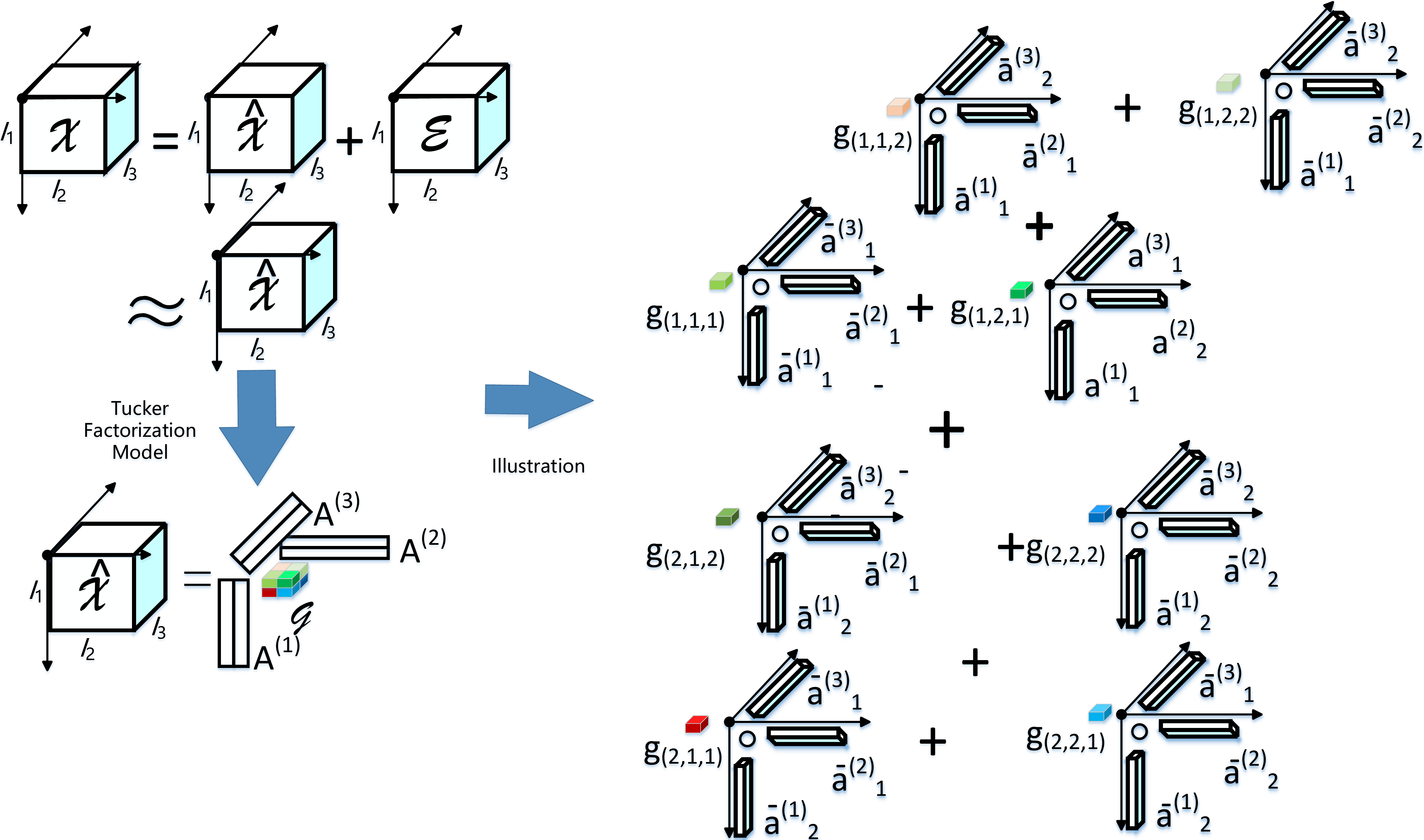}
\caption{Illustration of \textcolor[rgb]{0.00,0.00,1.00}{TD}.}
\label{fig202}
\end{figure}

\begin{definition}[$n$-Mode Tensor-Matrix product]
$n$-Mode Tensor-Matrix product is an operation which can reflect coordinate projection of a tensor $\bm{\mathcal{X}}$ $\in$ $\mathbb{R}^{I_{1}\times\cdots\times I_{N}}$ with projection matrix $\textbf{U}$ $\in$ $\mathbb{R}^{I_{n}\times J_{n}}$ into a tensor $(\bm{\mathcal{X}}\times_{(n)} \textbf{U})$ $\in$ $\mathbb{R}^{I_{1}\times\cdots \times I_{n-1}\times J_{n}\times\cdots  I_{N}}$ where $(\bm{\mathcal{X}}\times_{(n)} \textbf{U})_{i_{1}\times\cdots \times i_{n-1}\times j_{n}\times i_{n+1}\times\cdots\times  i_{N}}$ $=$ $\sum\limits_{i_{n}=1}^{I_{n}}$ $x_{i_{1}\times\cdots \times i_{n}\times \cdots\times i_{N}}u_{j_{n},i_{n}}$.
\end{definition}

\begin{definition}[$R$ Kruskal Product]
For an $N$-order tensor $\widehat{\bm{\mathcal{X}}}$ $\in$ $\mathbb{R}^{I_{1}\times\cdots \times I_{N}}$,
the $R$ Kruskal Product of $\widehat{\bm{\mathcal{X}}}$ is given by $R$ Kruskal product as:
$\widehat{\bm{\mathcal{X}}}=\sum_{r=1}^{R} a^{(1)}_{:,r}\circ\cdots\circ a^{(n)}_{:,r}\circ\cdots\circ a^{(N)}_{:,r}$.
\end{definition}

\begin{definition}[Sparse Tucker Decomposition (STD)]
For a $N$-order sparse tensor $\bm{\mathcal{X}}$ $\in$ $\mathbb{R}^{I_{1}\times\cdots \times I_{N}}$,
the STD of the optimal approximated $\widehat{\bm{\mathcal{X}}}$ is given by
$\widehat{\bm{\mathcal{X}}}=\mathcal{G}\times_{(1)}\textbf{A}^{(1)}\times_{(2)}\cdots\times_{(n)}\textbf{A}^{(n)}\times_{(n+1)}\cdots\times_{(N)}\textbf{A}^{(N)},
$
where $\mathcal{G}$ is the core tensor and
$\textbf{A}^{(n)}$, $n \in \{N\}$ are the low-rank factor matrices.
The rank of TF for a tensor is $[J_{1},\cdots,J_{n},\cdots,J_{N}]$.
The determination process for the core tensor $\mathcal{G}$ and factor matrices $\textbf{A}^{(n)}$, $n \in \{N\}$ follows the sparsity model of the sparse tensor $\bm{\mathcal{X}}$.
\end{definition}

\textcolor[rgb]{0.00,0.00,1.00}{In the convex optimization community, the literature \cite{ex243} gives the definition of Lipschitz-continuity  with constant $L$ and strong convexity with constant $\mu$.}

\begin{definition}[$L$-Lipschitz continuity]
\textcolor[rgb]{0.00,0.00,1.00}{A continuously differentiable function $f(\textbf{x})$ is called $L$-smooth on $\mathbb{R}^{r}$ if the gradient $\nabla f(\textbf{x})$ is $L$-Lipschitz continuous for any $\textbf{x}$, $\textbf{y}$ $\in$ $\mathbb{R}^{r}$, that is
$\|$ $\nabla f(\textbf{x})$ $-$ $\nabla f(\textbf{y})$ $\|_{2}$ $\leq$ $L$ $\|$ $\textbf{x}$ $-$ $\textbf{y}$ $\|_{2}$,
where $\|\bullet\|_{2}$ is $L_{2}$-norm $\|\textbf{x}\|_{2}$ $=$ $(\mathbb{\sum}_{k=1}^{r}x_{k}^{2})^{1/2}$ for a vector $\textbf{x}$.}
\end{definition}

\begin{definition}[$\mu$-Convex]
\textcolor[rgb]{0.00,0.00,1.00}{A continuously differentiable function $f(\textbf{x})$ is called strongly-convex on $\mathbb{R}^{r}$ if there exists a constant $\mu$ $>$ $0$ for any $\textbf{x}$, $\textbf{y}$ $\in$ $\mathbb{R}^{r}$, that is
$f(\textbf{x})$ $\geq$ $f(\textbf{y})$ $+$ $\nabla$ $f(\textbf{y})$ $(\textbf{x}-\textbf{y})^{T}$ $+$ $\frac{1}{2}\mu\|\textbf{x}-\textbf{y}\|_{2}^{2}$.}
\end{definition}

Due to limited spaces, we provide the description of basic optimization as  the supplementary material.

\subsection{Problems for Large-scale Optimization Problems} \label{section23}
Many ML tasks are transformed into the solvent of optimization problem~\cite{ex225, ex226, ex229, ex230} and the basis optimization problem is organized as:
 \begin{equation}\label{Original}
  \begin{aligned}
  \mathop{\arg\min}_{w \in \mathbb{R}^{R}} f(w)&=\underbrace{L\bigg(w\bigg|y_{i}, x_{i}, w\bigg)}_{Loss\ Function}+ \underbrace{\lambda_{w} R(w)}_{Regularization\ Item}\\
  &=\sum\limits_{i=1}^{N} L_{i}\bigg(w\bigg|y_{i}, x_{i}, w\bigg)+ \lambda_{w}R_{i}(w),
   \end{aligned}
\end{equation}
where $y_{i}$ $\in$ $\mathbb{R}^{1}$, $x_{i}$ $\in$ $\mathbb{R}^{R}$, $i\in \{N\}$, $w \in \mathbb{R}^{R}$ and the original optimization model needs gradient which should select all the samples $\{x_{i}|i \in \{N\}\}$ from the dataset $\Omega$ and the GD is presented as:
 \begin{equation}\label{GD}
  \begin{aligned}
  w&\leftarrow w- \gamma \frac{ \partial f_{\Omega}(w) }{ \partial w }\\
   &= w- \gamma \frac{1}{N}\sum\limits_{i=1}^{N} \frac{ \partial \bigg(L_{i}(w) + \lambda_{w}R_{i}(w) \bigg) }{ \partial w }.
   \end{aligned}
\end{equation}
The second-order solution, i.e., ALS and CD, etc, are deduced from the construction of GD from the whole dataset.
In large-scale optimization scenarios,
SGD is a common strategy \cite{ex225, ex226, ex229, ex230} and promises to obtain the optimal accuracy via a certain number of training epoches \cite{ex225, ex226, ex229, ex230}.
An $M$ entries set $\Psi$ is randomly selected from the set $\Omega$, and the SGD is presented as:
 \begin{equation}\label{SGD}
  \begin{aligned}
  w&\leftarrow w- \gamma \frac{ \partial f_{\Psi}(w) }{ \partial w }\\
   &\approx w- \gamma \frac{1}{M}\sum\limits_{i\in \Psi} \frac{ \partial \bigg(L_{i}(w) + \lambda_{w}R_{i}(w) \bigg) }{ \partial w }.
   \end{aligned}
\end{equation}
Equ. (\ref{SGD}) is an average SGD \cite{ex229}, and
the averaged SG can be applied to build the basic tool of the modern stochastic optimization strategies, e.g.,
Stochastic Recursive Gradient~\cite{ex230},
variance SGD~\cite{ex225}, or
momentum SGD~\cite{ex226},
which can retain robustness and fast convergence.
The optimization function can be packaged in the form of
$SGD\bigg(M, \lambda, \gamma, w, \frac{ \partial f_{\Psi}(w) }{ \partial w }\bigg)$.

\begin{figure}
\centering
\includegraphics[width=3.5in,height=3.0in]{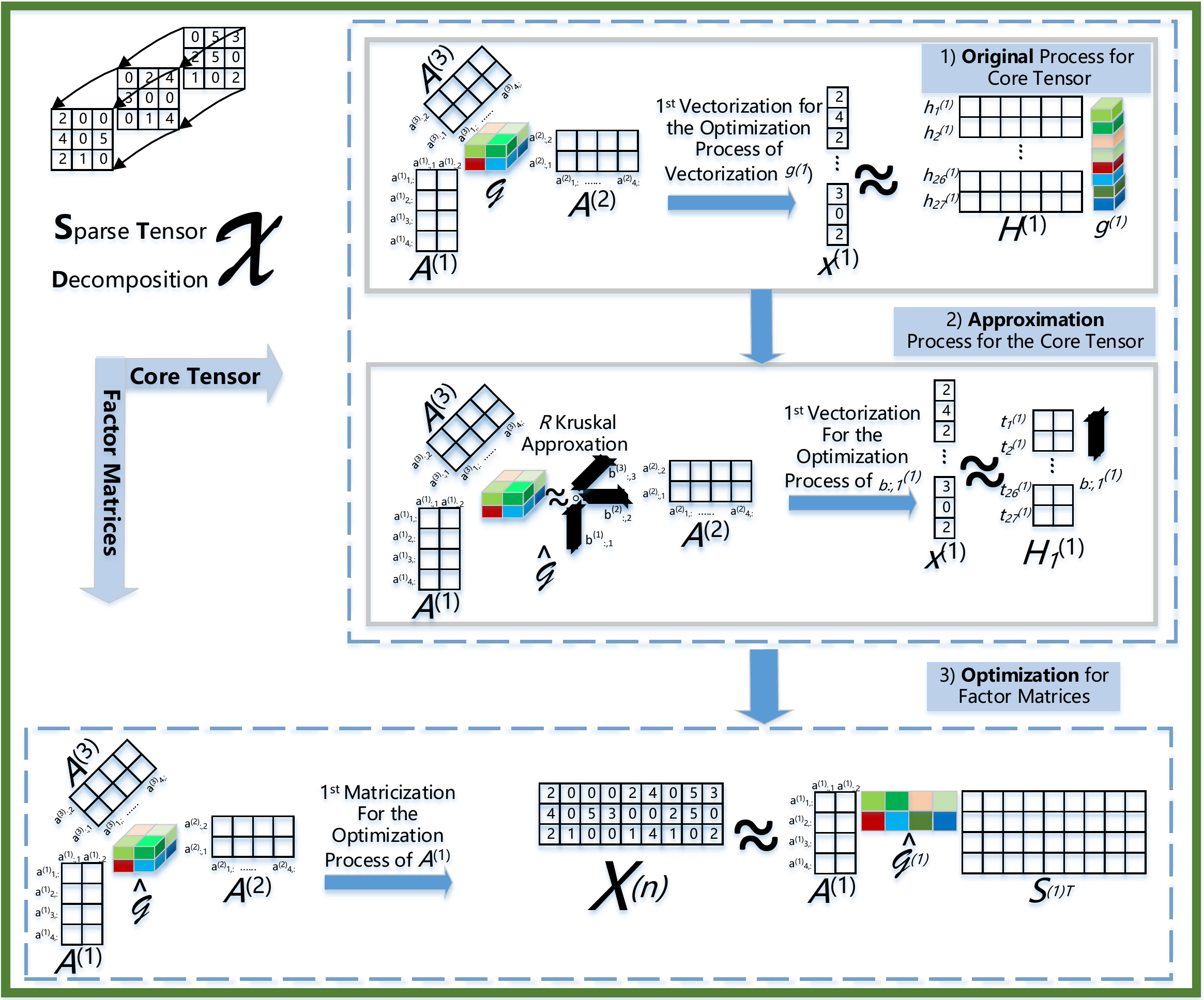}
\caption{Illustration of Optimization Process for SGD$\_$Tucker:
1) Original problem for core tensor,
2) Kruskal product for approximating core tensor,
3) Optimization process for factor matrices.}
\label{fig301}
\end{figure}

\section{SGD$\_$Tucker} \label{Section3}
$\textbf{Overview}$: In this section,
SGD$\_$Tucker is proposed to decompose the optimization objectives which involves frequent operations of intermediate matrices
into a problem which only needs the manipulations of small batches of intermediate matrices.
Fig.~\ref{fig301} illustrates the framework for our work.
As Fig.~\ref{fig301} shows, SGD$\_$Tucker comprises of an approximation process of the core tensor and an optimization for factor matrices.
Table~\ref{table31} records the intermediate variables which follow the problem deduction process.
Section~\ref{Section31} presents the deduction process for the core tensor (Lines 1-16, Algorithm \ref{alg31}),
and
Section~\ref{Section32} presents the proceeding process for the factor matrices of SGD$\_$Tucker (Lines 17-26, Algorithm \ref{alg31}).
Section~\ref{Section33} shows the stochastic optimization process for the proposed novel optimization objectives, which
illustrates the details of the process that how can the SGD$\_$Tucker divide the high-dimension intermediate matrices
$\bigg\{\textbf{H}^{(n)}, \textbf{S}^{(n)}, \textbf{E}^{(n)}\big|n \in \{N\}\bigg\}$ into a
small batches of intermediate matrices.
Section~\ref{Section35} shows the parallel and distributed strategies for SGD$\_$Tucker.
Finally, section~\ref{Section34} illustrates the analysis of space and time complexities.
Because the optimization process for approximating the core tensor and factor matrices are non-convex.
We fix the core tensor and optimize a factor matrix and fix factor matrices to optimize Kruskal product for core tensor.
This alternative minimization strategy is a convex solution \cite{ex234, ex237, ex236, ex241}.

\begin{table}[!htbp]
\setlength{\abovedisplayskip}{0pt}
\setlength{\belowdisplayskip}{0pt}
\renewcommand{\arraystretch}{1.0}
\caption{Table of Intermediate Varibles for SGD$\_$Tucker.}
\centering
\label{table31}
\tabcolsep1pt
\begin{tabular}{lll}
\hline
\hline
 Symbol & Description & Emerging \\
        &             & Equations \\
\hline
$\widehat{\bm{\mathcal{G}}}$ $\in$ $\mathbb{R}^{J_{1}\times J_{2}\times\cdots \times J_{N}}_{+}$ & $\mathbb{R}_{core}$ Kruskal                    & Equ. (\ref{CP-Decomposition});\\
                             & product for $\bm{\mathcal{G}}$        & \\
$\textbf{B}^{(n)}\in \mathbb{R}^{J_{n}\times R_{core}}$  & $n$th Kruskal matrix for $\widehat{\bm{\mathcal{G}}}$ & Equ. (\ref{CP-Decomposition});\\
$\widehat{g}^{(n)}$ $\in$ $\mathbb{R}^{\prod\limits_{n=1}^{N}J_{n}}$  & $n$th vectorization & Equ. (\ref{approximated_core_tensor_optimization});\\ 
                                                   & of tensor $\widehat{\bm{\mathcal{G}}}$ &\\
$\textbf{H}^{(n)}$ $\in$ $\mathbb{R}^{\prod\limits_{n=1}^{N}I_{n}\times \prod\limits_{n=1}^{N}J_{n}}$ & Coefficient for $\widehat{g}^{(n)}$                                       & Equ. (\ref{approximated_core_tensor_optimization});\\
$x^{(n)}$ $\in$ $\mathbb{R}^{\prod\limits_{n=1}^{N}I_{n}}$ & $n$th vectorization of     & Equ. (\ref{approximated_core_tensor_optimization});\\
                                                           & Tensor $\mathcal{X}$    & \\
$\textbf{Q}^{(n)}$ $\in$ $\mathbb{R}^{\prod\limits_{r=1, r\neq n}^{N}J_{r}\times R_{core}}$ & Coefficient of $\textbf{B}^{(n)}$ & Equ. (\ref{eventual0_approximated_core_tensor_optimization});\\
                    & for constructing $\widehat{g}^{(n)}$ & \\
$\textbf{U}^{(n)}$ $\in$ $\mathbb{R}^{J_{n}\times J_{n}}$ & Unity matrix & Equ. (\ref{Q_r});\\
$\textbf{O}^{(n)}_{r}$ $\in$ $\mathbb{R}^{\prod\limits_{r=1}^{N}J_{r}\times J_{n}}$  & Coefficient of $b^{(n)}_{:,r}$ & Equ.
(\ref{eventual1_approximated_core_tensor_optimization});\\
                    & for constructing $\widehat{g}^{(n)}$ & \\
$x^{(n)}_{r_{core}}$ $\in$ $\mathbb{R}^{\prod\limits_{r=1}^{N}I_{r}}$ & Intermediate vector of $x^{(n)}$    & Equ. (\ref{eventual_approximated_core_tensor_optimization});\\
$\textbf{H}^{(n)}_{r_{core}}$ $\in$ $\mathbb{R}^{\prod\limits_{n=1}^{N}I_{n}\times J_{n}}$ & Coefficient of $b^{(n)}_{:,r}$ & Equ. (\ref{eventual_approximated_core_tensor_optimization});\\
                    & for constructing $\widehat{g}^{(n)}$ & \\
$\textbf{S}^{(n)}$ $\in$ $\mathbb{R}^{\prod\limits_{k=1,k\neq n}^{N} J_{n} \times \prod\limits_{k=1,k\neq n}^{N}I_{k}}$ & Coefficient of $\textbf{A}^{(n)}\widehat{\textbf{G}}^{(n)}$ & Equ. (\ref{optimization_for_factor});\\
                    & for constructing $\widehat{\bm{X}}^{(n)}$ & \\
$\textbf{E}^{(n)}$ $\in$ $\mathbb{R}^{J_{n} \times \prod\limits_{k=1,k\neq n}^{N}I_{k}}$ & Coefficient of $\textbf{A}^{(n)}$ & Equ. (\ref{Alternative_Optimization_for_Factor_Matrices});\\
                    & for constructing $\widehat{\bm{X}}^{(n)}$ &\\
$x^{(n)}_{\Psi^{(n)}_{V}}$ $\in$ $\mathbb{R}^{|\Psi^{(n)}_{V}|}$ & Partial vector $x^{(n)}$ from set $\Psi$  & Equ. (\ref{SGD_core_tensor});\\
$\textbf{H}_{\Psi^{(n)}_{V},:}^{(n)}$ $\in$ $\mathbb{R}^{|\Psi^{(n)}_{V}| \times \prod\limits_{n=1}^{N}J_{n}}$ & Partial matrix  of $\textbf{H}^{(n)}$  & Equ. (\ref{SGD_core_tensor});\\
$\bm{X}_{:,(\Psi^{(n)}_{M})_{i_{n}}}^{(n)}$ $\in$ $\mathbb{R}^{I_{n} \times |(\Psi^{(n)}_{M})_{i_{n}}|}$ & Partial matrix  of $\bm{X}^{(n)}$  & Equ. (\ref{SGD_factor_matrix});\\
$\textbf{E}^{(n)}_{:, (\Psi^{(n)}_{M})_{i_{n}}}$ $\in$ $\mathbb{R}^{J_{n} \times |(\Psi^{(n)}_{M})_{i_{n}}|}$ & Partial matrix of $\textbf{E}^{(n)}$    & Equ. (\ref{SGD_factor_matrix}).\\
 \hline
 \hline
\end{tabular}
\end{table}

\subsection{Optimization Process for the Core Tensor} \label{Section31}
Due to the non-trivial coupling of the core tensor $\bm{\mathcal{G}}$, the efficient and effective way to infer it is through an approximation.
A tensor $\bm{\mathcal{G}}$ can be approximated by $R_{core}$ $\leq$ ${J_{n}, n\in\{N\}}$
Kruskal product of low-rank matrices $\{\textbf{B}^{(n)}\in \mathbb{R}^{J_{n}\times R_{core}}|n\in \{N\}\}$
to form $\widehat{\bm{\mathcal{G}}}$:
 \begin{equation}\label{CP-Decomposition}
  \begin{aligned}
\widehat{\bm{\mathcal{G}}}=\sum_{r_{core}=1}^{R_{core}} b^{(1)}_{:,r_{core}}\circ\cdots\circ b^{(n)}_{:,r_{core}}\circ\cdots\circ b^{(N)}_{:,r_{core}}.
   \end{aligned}
\end{equation}

As the direct approximation for the core tensor $\bm{\mathcal{G}}$ may result in instability, we propose to apply the coupling process to approximate STD and tensor approximation.
Specifically, we use Kruskal product of low-rank matrices
$\{\textbf{B}^{(n)}\in \mathbb{R}^{J_{n}\times R_{core}}|n\in\{N\}\}$ as follows:
 \begin{equation}\label{approximated_core_tensor_optimization}
  \begin{aligned}
  \mathop{\arg\min}_{\widehat{\bm{\mathcal{G}}}}
  f&\bigg(\widehat{g}^{(n)}\bigg|x^{(n)}, \{\textbf{A}^{(n)}\}\bigg)\\
  =&\bigg\|x^{(n)}-\textbf{H}^{(n)}\widehat{g}^{(n)}\bigg\|_{2}^{2}+
  \lambda_{\widehat{g}^{(n)}}\bigg\|\widehat{g}^{(n)}\bigg\|_{2}^{2},
   \end{aligned}
\end{equation}
where
$\widehat{g}^{(n)}$ $=$ $Vec(\textbf{B}^{(n)}\textbf{Q}^{(n)^{T}})$ and
$\textbf{Q}^{(n)}$  $=$ $\textbf{B}^{(N)}\odot\cdots \odot\textbf{B}^{(n+1)}\odot\textbf{B}^{(n-1)}\odot\cdots \odot\textbf{B}^{(1)}$.
\newtheorem{theorem}{Theorem}
The tanglement problem resulted from $R_{core}$
Kruskal product of low-rank matrices $\{\textbf{B}^{(n)}\in \mathbb{R}^{J_{n}\times R_{core}}|n\in \{N\}\}$
leads to a non-convex optimization problem for the optimization objective (\ref{approximated_core_tensor_optimization}).
The alternative optimization strategy \cite{ex241} is adopted to update the parameters $\{\textbf{B}^{(n)}\in \mathbb{R}^{J_{n}\times R_{core}}|n\in\{N\}\}$ and
then the non-convex optimization objective (\ref{approximated_core_tensor_optimization}) is turned into the objective as:
 \begin{equation}\label{eventual0_approximated_core_tensor_optimization}
  \begin{aligned}
  \mathop{\arg\min}_{\textbf{B}^{(n)}, n\in\{N\}}
  &f\bigg(\textbf{B}^{(n)}\bigg|x^{(n)}, \{\textbf{A}^{(n)}\}, \{\textbf{B}^{(n)}\}\bigg)\\
  =&\bigg\|x^{(n)}-\textbf{H}^{(n)}Vec(\textbf{B}^{(n)}\textbf{Q}^{(n)^{T}})\bigg\|_{2}^{2}+\lambda_{\textbf{B}}\bigg\|\textbf{B}\bigg\|_{2}^{2},
   \end{aligned}
\end{equation}
where $Vec(\textbf{B}^{(n)}\textbf{Q}^{(n)^{T}})$ $=$ $Vec(\sum\limits_{r=1}^{R_{core}}b^{(n)}_{:,r}q_{:,r}^{(n)^{T}})$.

The optimization objective (\ref{eventual0_approximated_core_tensor_optimization}) is not explicit for the variable $\textbf{B}^{(n)}, n\in\{N\}$ and
it is hard to find the gradient of the variables $\textbf{B}^{(n)}, n\in\{N\}$ under the current formulation.
Thus, we borrow the intermediate variables with a unity matrix as
$\textbf{O}^{(n)}_{r}$ $\in$ $\mathbb{R}^{\prod_{k=1}^{N}J_{k}\times J_{n}}$, $r$ $\in$ $\{R_{core}\}$ and
the unity matrix
$\textbf{U}^{(n)}$ $\in$ $\mathbb{R}^{J_{n}\times J_{n}}$, $n\in\{N\}$ which can present the variables $\textbf{B}^{(n)}, n\in\{N\}$ in a gradient-explicit formation.
The matrix $\textbf{O}^{(n)}_{r}$ is defined as
 \begin{equation}\label{Q_r}
  \begin{aligned}
\textbf{O}^{(n)}_{r}=
\bigg[
&q_{1,r}\textbf{U}^{(n)},
\cdots,
q_{m,r}\textbf{U}^{(n)},\cdots, \\
&q_{\prod\limits_{k=1,k\neq n}^{N}J_{k},r}\textbf{U}^{(n)}
\bigg]^{T}.
   \end{aligned}
\end{equation}
The key transformation of the gradient-explicit formation for the variables $\textbf{B}^{(n)}, n\in\{N\}$ is as
$Vec(\sum\limits_{r=1}^{R_{core}}b^{(n)}_{:,r}q_{:,r}^{(n)^{T}}) =
\sum\limits_{r=1}^{R_{core}} \textbf{O}^{(n)}_{r}b^{(n)}_{:,r}$.
Then the optimization objective (\ref{eventual0_approximated_core_tensor_optimization}) is reformulated into:
 \begin{equation}\label{eventual1_approximated_core_tensor_optimization}
  \begin{aligned}
  \mathop{\arg\min}_{\textbf{B}^{(n)}, n\in\{N\}}
  &f\bigg(\textbf{B}^{(n)}\bigg|x^{(n)}, \{\textbf{A}^{(n)}\}, \{\textbf{B}^{(n)}\}\bigg)\\
  &=\bigg\|x^{(n)}-\textbf{H}^{(n)}\sum\limits_{r=1}^{R_{core}} \textbf{O}^{(n)}_{r}b^{(n)}_{:,r} \bigg\|_{2}^{2}+\lambda_{\textbf{B}}\bigg\|\textbf{B}\bigg\|_{2}^{2}.
   \end{aligned}
\end{equation}

The cyclic block optimization strategy \cite{ex218} is adopted to update the variables
$\big\{b^{(n)}_{:,r_{core}}|r_{core}\in \{R_{core}\}\big\}$ within a low-rank matrix $\textbf{B}^{(n)}, n\in \{N\}$ and eventually the optimization objective is reformulated into:
 \begin{equation}\label{eventual_approximated_core_tensor_optimization}
  \begin{aligned}
  \mathop{\arg\min}_{b^{(n)}_{:,r_{core}}, n\in\{N\}}
  &f\bigg(b^{(n)}_{:,r_{core}}\bigg|x^{(n)}, \{\textbf{A}^{(n)}\}, \{\textbf{B}^{(n)}\}\bigg)\\
  =&\bigg\|x^{(n)}_{r_{core}}-\textbf{H}^{(n)}_{r_{core}}b^{(n)}_{:,r_{core}} \bigg\|_{2}^{2}+\lambda_{\textbf{B}}\bigg\|b^{(n)}_{:,r_{core}}\bigg\|_{2}^{2},
\end{aligned}
\end{equation}
where
$x^{(n)}_{r_{core}}$ $=$ $x^{(n)}-\textbf{H}^{(n)}\sum\limits_{r=1,r\neq r_{core}}^{R_{core}} \textbf{O}^{(n)}_{r}b^{(n)}_{:,r}$ and
$\textbf{H}^{(n)}_{r_{core}}$ $=$ $\textbf{H}^{(n)}\textbf{O}^{(n)}_{r_{core}}\in \mathbb{R}^{\prod\limits_{n=1}^{N}I_{n}\times J_{n}}$.

\begin{theorem}
From the function form of (\ref{eventual_approximated_core_tensor_optimization}), the optimization objective for $b^{(n)}_{:,r_{core}}$ is a $u$-convex and $L$-smooth function.
\end{theorem}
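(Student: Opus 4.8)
The plan is to recognize that, once the remaining blocks $b^{(n)}_{:,r}$ ($r\neq r_{core}$) and all other factors are held fixed, the objective in (\ref{eventual_approximated_core_tensor_optimization}) is an ordinary Tikhonov-regularized least-squares problem in the single vector $b^{(n)}_{:,r_{core}}$, hence a quadratic function with a constant Hessian. Writing $\textbf{A}=\textbf{H}^{(n)}_{r_{core}}$, $\textbf{x}=x^{(n)}_{r_{core}}$ and $\textbf{b}=b^{(n)}_{:,r_{core}}$ for brevity, the objective reads $f(\textbf{b})=\|\textbf{x}-\textbf{A}\textbf{b}\|_{2}^{2}+\lambda_{\textbf{B}}\|\textbf{b}\|_{2}^{2}$. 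First I would expand and differentiate to obtain $\nabla f(\textbf{b})=2(\textbf{A}^{T}\textbf{A}+\lambda_{\textbf{B}}\textbf{I})\textbf{b}-2\textbf{A}^{T}\textbf{x}$ and the constant, symmetric Hessian $\textbf{M}:=\nabla^{2}f(\textbf{b})=2(\textbf{A}^{T}\textbf{A}+\lambda_{\textbf{B}}\textbf{I})$.

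For $L$-smoothness I would exploit the constancy of the Hessian directly: for any $\textbf{b}_{1},\textbf{b}_{2}$ one has $\nabla f(\textbf{b}_{1})-\nabla f(\textbf{b}_{2})=\textbf{M}(\textbf{b}_{1}-\textbf{b}_{2})$, so $\|\nabla f(\textbf{b}_{1})-\nabla f(\textbf{b}_{2})\|_{2}\leq\|\textbf{M}\|_{2}\,\|\textbf{b}_{1}-\textbf{b}_{2}\|_{2}$. This is exactly the $L$-Lipschitz-continuity definition with $L=\|\textbf{M}\|_{2}=2(\sigma_{\max}^{2}(\textbf{A})+\lambda_{\textbf{B}})$, the largest eigenvalue of $\textbf{M}$.

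For $\mu$-convexity I would invoke the second-order characterization underlying the $\mu$-convex definition. Since $\textbf{A}^{T}\textbf{A}$ is positive semidefinite, every eigenvalue of $\textbf{M}$ is at least $2\lambda_{\textbf{B}}$, i.e. $\textbf{M}\succeq 2\lambda_{\textbf{B}}\textbf{I}$. A second-order Taylor expansion of the quadratic $f$ about any $\textbf{y}$ is exact and gives $f(\textbf{x})=f(\textbf{y})+\nabla f(\textbf{y})(\textbf{x}-\textbf{y})+\tfrac{1}{2}(\textbf{x}-\textbf{y})^{T}\textbf{M}(\textbf{x}-\textbf{y})\geq f(\textbf{y})+\nabla f(\textbf{y})(\textbf{x}-\textbf{y})+\lambda_{\textbf{B}}\|\textbf{x}-\textbf{y}\|_{2}^{2}$, which matches the $\mu$-convex inequality with $\mu=2\lambda_{\textbf{B}}$.

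The computation is routine, so the real point to flag is where strong convexity actually comes from. Because $\bm{\mathcal{X}}$ is sparse, the Gram matrix $\textbf{A}^{T}\textbf{A}=(\textbf{H}^{(n)}_{r_{core}})^{T}\textbf{H}^{(n)}_{r_{core}}$ is typically rank-deficient and its smallest eigenvalue may be zero, so the data-fit term alone does not yield $\mu>0$; strong convexity is secured entirely by the regularizer, and the argument collapses if $\lambda_{\textbf{B}}=0$. Hence the only genuine hypothesis to verify is $\lambda_{\textbf{B}}>0$, after which $\mu=2\lambda_{\textbf{B}}$ and $L=2(\sigma_{\max}^{2}(\textbf{A})+\lambda_{\textbf{B}})$ both follow immediately.
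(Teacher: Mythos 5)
Your proof is correct and takes essentially the same route as the paper's: both rest on the observation that, with every other block fixed, the objective (\ref{eventual_approximated_core_tensor_optimization}) is a Tikhonov-regularized least-squares problem in $b^{(n)}_{:,r_{core}}$. The difference is only in the level of detail — the paper asserts the conclusion "obviously" and defers to citations, whereas you carry out the Hessian computation, exhibit the explicit constants $\mu=2\lambda_{\textbf{B}}$ and $L=2\big(\sigma_{\max}^{2}(\textbf{H}^{(n)}_{r_{core}})+\lambda_{\textbf{B}}\big)$, and correctly flag that strong convexity comes entirely from the regularizer (so the claim requires $\lambda_{\textbf{B}}>0$, a hypothesis the paper never states but implicitly uses, since the Gram matrix of a sparse design is typically rank-deficient).
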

\begin{proof}
The proof is divided into the following two parts:
1) The non-convex optimization problem is transformed into fixing factor matrices $\textbf{A}^{(n)}, n\in \{N\}$ and
updating core tensor part, and this transformation
can keep promise that convex optimization for each part and appropriate accuracy \cite{ex234, ex237, ex236, ex241}.
2) The distance function $\big\|x^{(n)}_{r_{core}}-\textbf{H}^{(n)}_{r_{core}}b^{(n)}_{:,r_{core}} \big\|_{2}^{2}$ is an Euclidean distance with
$L_{2}$ norm regularization $\lambda_{\textbf{B}}\big\|b^{(n)}_{:,r_{core}}\big\|_{2}^{2}$ \cite{ex255}.
Thus, the optimization objective (\ref{eventual_approximated_core_tensor_optimization}) is a $u$-convex and $L$-smooth function obviously \cite{ex241}, \cite{ex242}, \cite{ex243}.
The proof details are omitted which can be found on \cite{ex225}, \cite{ex244}.
\end{proof}

\subsection{Optimization Process for Factor Matrices} \label{Section32}
After the optimization step is conducted for $\{\textbf{B}^{(n)}\in \mathbb{R}^{J_{n}\times R_{core}}|n\in\{N\}\}$,
the Kruskal product for approximated core tensor $\hat{g}^{(n)}$ is constructed by Equ. (\ref{CP-Decomposition}).
Thus, we should consider the optimization problem for factor matrices $\textbf{A}^{(n)}, \{n|n\in \{N\}\}$ as:
 \begin{equation}\label{optimization_for_factor}
  \begin{aligned}
  \mathop{\arg\min}_{\textbf{A}^{(n)},n \in\{N\}}
  &f\bigg(\textbf{A}^{(n)}\bigg|\bm{X}^{(n)}, \big\{\textbf{A}^{(n)}\big\}, \widehat{\textbf{G}}^{(n)}\bigg)\\
  &=\bigg\|\bm{X}^{(n)}- \widehat{\bm{X}}^{(n)}\bigg\|_{2}^{2}+
  \lambda_{\textbf{A}}\bigg\|\textbf{A}^{(n)}\bigg\|_{2}^{2},
   \end{aligned}
\end{equation}
where
$\widehat{\bm{X}}^{(n)}$ $=$ $\textbf{A}^{(n)}\widehat{\textbf{G}}^{(n)}\textbf{S}^{(n)^{T}}$ and
$\textbf{S}^{(n)}$       $=$ $\textbf{A}^{(N)}\otimes\cdots \otimes\textbf{A}^{(n+1)}\otimes\textbf{A}^{(n-1)}\otimes\cdots \otimes\textbf{A}^{(1)}$.

The optimization process for the whole factor matrix set $\{\textbf{A}^{(1)},\cdots,\textbf{A}^{(N)}\}$ is non-convex.
The alternative optimization strategy \cite{ex241} is adopted to transform the non-convex optimization problem into $N$ convex optimization problems under
a fine-tuned initialization as:
\begin{equation}\label{Alternative_Optimization_for_Factor_Matrices}
\setlength{\abovedisplayskip}{1pt}
\setlength{\belowdisplayskip}{1pt}
\begin{aligned}
\left\{
\begin{aligned}
 \mathop{\arg\min}_{\textbf{A}^{(1)}} f\bigg(\textbf{A}^{(1)}\bigg)=
 &\bigg\|\bm{X}^{(1)}- \textbf{A}^{(1)}\textbf{E}^{(1)}\bigg\|_{2}^{2}+\lambda_{\textbf{A}}\bigg\|\textbf{A}^{(1)}\bigg\|_{2}^{2};\\
  \vdots\\
 \mathop{\arg\min}_{\textbf{A}^{(n)}} f\bigg(\textbf{A}^{(n)}\bigg)=
 &\bigg\|\bm{X}^{(n)}- \textbf{A}^{(n)}\textbf{E}^{(n)}\bigg\|_{2}^{2}+\lambda_{\textbf{A}}\bigg\|\textbf{A}^{(n)}\bigg\|_{2}^{2};\\
   \vdots\\
 \mathop{\arg\min}_{\textbf{A}^{(N)}} f\bigg(\textbf{A}^{(N)}\bigg)=
 &\bigg\|\bm{X}^{(N)}- \textbf{A}^{(N)}\textbf{E}^{(N)}\bigg\|_{2}^{2}+\lambda_{\textbf{A}}\bigg\|\textbf{A}^{(N)}\bigg\|_{2}^{2},
\end{aligned}
\right.
   \end{aligned}
\end{equation}
where $\textbf{E}^{(n)}$ $=$ $\widehat{\textbf{G}}^{(n)}\textbf{S}^{(n)^{T}}$ $\in$ $\mathbb{R}^{J_{n} \times \prod\limits_{k=1,k\neq n}^{N}I_{k}}$.
The optimization objectives of the $n$th variables are presented as an independent form as:
\begin{equation}\label{N_Optimization_for_Factor_Matrices}
\setlength{\abovedisplayskip}{1pt}
\setlength{\belowdisplayskip}{1pt}
\begin{aligned}
\left\{
\begin{aligned}
\mathop{\arg\min}_{a_{1,:}^{(n)}} f(a_{1,:}^{(n)})=
&\bigg\|\textbf{X}_{1,:}^{(n)}- a_{1,:}^{(n)}\textbf{E}^{(n)}\bigg\|_{2}^{2}+\lambda_{\textbf{A}}\bigg\|a_{1,:}^{(n)}\bigg\|_{2}^{2};\\
 \vdots\\
\mathop{\arg\min}_{a_{i_{n},:}^{(n)}} f(a_{i_{n},:}^{(n)})=
&\bigg\|\textbf{X}_{i_{n},:}^{(n)}- a_{i_{n},:}^{(n)}\textbf{E}^{(n)}\bigg\|_{2}^{2}+\lambda_{\textbf{A}}\bigg\|a_{i_{n},:}^{(n)}\bigg\|_{2}^{2};\\
 \vdots\\
\mathop{\arg\min}_{a_{I_{n},:}^{(n)}} f(a_{I_{n},:}^{(n)})=
&\bigg\|\textbf{X}_{I_{n},:}^{(n)}- a_{I_{n},:}^{(n)}\textbf{E}^{(n)}\bigg\|_{2}^{2}+\lambda_{\textbf{A}}\bigg\|a_{I_{n},:}^{(n)}\bigg\|_{2}^{2}.
\end{aligned}
\right.
   \end{aligned}
\end{equation}
\begin{theorem}From the function form of (\ref{N_Optimization_for_Factor_Matrices}), the optimization objective for $\textbf{A}^{(n)}$ is a $u$-convex and $L$-smooth function.
\end{theorem}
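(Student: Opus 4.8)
The plan is to recognize that, for each fixed row index $i_{n}$, the subproblem in (\ref{N_Optimization_for_Factor_Matrices}) is a regularized least-squares (ridge) objective that is quadratic in the unknown row $a_{i_{n},:}^{(n)}$; hence its Hessian is constant, and the extreme eigenvalues of that Hessian will directly supply the strong-convexity constant $\mu$ and the smoothness constant $L$. First I would transpose to column-vector form, setting $w := (a_{i_{n},:}^{(n)})^{T} \in \mathbb{R}^{J_{n}}$ and rewriting the data-fit term as $\big\|(\textbf{X}_{i_{n},:}^{(n)})^{T} - \textbf{E}^{(n)^{T}} w\big\|_{2}^{2}$, so that $\textbf{E}^{(n)^{T}}$ plays the role of the design matrix. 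Differentiating twice then yields the constant, symmetric Hessian $\nabla^{2} f = 2\big(\textbf{E}^{(n)}\textbf{E}^{(n)^{T}} + \lambda_{\textbf{A}}\textbf{I}\big) \in \mathbb{R}^{J_{n}\times J_{n}}$.

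The key step is to bound the spectrum of this Hessian. Since $\textbf{E}^{(n)}\textbf{E}^{(n)^{T}}$ is positive semidefinite, its eigenvalues lie in $[0,\ \|\textbf{E}^{(n)}\|_{2}^{2}]$, where $\|\bullet\|_{2}$ denotes the spectral norm; consequently every eigenvalue of $\nabla^{2} f$ lies in $[2\lambda_{\textbf{A}},\ 2(\|\textbf{E}^{(n)}\|_{2}^{2}+\lambda_{\textbf{A}})]$. For a quadratic objective the second-order Taylor expansion is exact, so the inequality in the $\mu$-Convex definition holds with $\mu = \lambda_{\min}(\nabla^{2}f) = 2\lambda_{\textbf{A}} > 0$; and because the gradient is affine, $\big\|\nabla f(w_{1}) - \nabla f(w_{2})\big\|_{2} = \big\|\nabla^{2}f\,(w_{1}-w_{2})\big\|_{2} \le \lambda_{\max}(\nabla^{2}f)\,\|w_{1}-w_{2}\|_{2}$, which is exactly the $L$-Lipschitz continuity of $\nabla f$ with $L = 2(\|\textbf{E}^{(n)}\|_{2}^{2}+\lambda_{\textbf{A}})$. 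This establishes both properties for a single row.

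Finally I would lift the per-row result to the full matrix objective $f(\textbf{A}^{(n)})$ of (\ref{Alternative_Optimization_for_Factor_Matrices}): since that objective separates as a sum over the $I_{n}$ independent rows, each contributing the identical Hessian block $2(\textbf{E}^{(n)}\textbf{E}^{(n)^{T}}+\lambda_{\textbf{A}}\textbf{I})$, the Hessian of the stacked variable is block-diagonal with these repeated blocks and therefore shares the same extreme eigenvalues, so the constants $\mu$ and $L$ carry over unchanged. The main obstacle, and the only place the argument is not purely mechanical, is verifying the strict positivity $\mu>0$: this rests entirely on the ridge term $\lambda_{\textbf{A}}>0$ rather than on $\textbf{E}^{(n)}$ having full row rank, because the Kronecker/Kruskal construction $\textbf{E}^{(n)}=\widehat{\textbf{G}}^{(n)}\textbf{S}^{(n)^{T}}$ gives no guarantee that $\textbf{E}^{(n)}\textbf{E}^{(n)^{T}}$ is invertible; without the regularizer one would retain $L$-smoothness but only ordinary, not strong, convexity.
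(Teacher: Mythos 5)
Your proof is correct and follows essentially the same route as the paper's: after the alternating-minimization reduction that fixes $\widehat{\bm{\mathcal{G}}}$ and the other factor matrices, the per-row objective is recognized as a ridge-regularized least-squares (Euclidean distance plus $L_{2}$ penalty) problem, which is strongly convex and smooth. The paper stops at that observation and defers all details to cited references, whereas you actually carry the argument out --- exhibiting the constant Hessian $2\big(\textbf{E}^{(n)}\textbf{E}^{(n)^{T}}+\lambda_{\textbf{A}}\textbf{I}\big)$, the explicit constants $\mu=2\lambda_{\textbf{A}}$ and $L=2\big(\|\textbf{E}^{(n)}\|_{2}^{2}+\lambda_{\textbf{A}}\big)$, the block-diagonal lifting to the full matrix objective, and the correct caveat that strict positivity of $\mu$ rests on $\lambda_{\textbf{A}}>0$ rather than on $\textbf{E}^{(n)}$ having full row rank --- all of which is consistent with, and strengthens, the paper's argument.
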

\begin{proof}
By adopting the alternative strategy \cite{ex234, ex237, ex236, ex241},
we fix $\widehat{\bm{\mathcal{G}}}$ and $\textbf{A}^{(k)}, k\neq n, k\in\{N\}$.
Then, we update $\textbf{A}^{(n)}$.
The distance function $\bigg\|\textbf{X}_{i_{n},:}^{(n)}- a_{i_{n},:}^{(n)}\textbf{E}^{(n)}\bigg\|_{2}^{2}$ is an Euclidean distance with
$L_{2}$ norm regularization $\lambda_{\textbf{A}}\bigg\|a_{i_{n},:}^{(n)}\bigg\|_{2}^{2}$, $i_{n}\in {I_{n}}$
\cite{ex255}.
Thus, the optimization objective (\ref{eventual_approximated_core_tensor_optimization}) is a $u$-convex and $L$-smooth function obviously \cite{ex241}, \cite{ex242}, \cite{ex243}.
Due to the limited space, the proof details are omitted which can be found on \cite{ex225}, \cite{ex244}.
\end{proof}

\subsection{Stochastic Optimization}\label{Section33}
The previous Sections~\ref{Section31} and~\ref{Section32} presented the transformation of the optimization problem.
In this section, the solvent is introduced.
The average SGD method is a basic part of state of the art stochastic optimization models.
Thus, in this section, we present the average SGD for our optimization objectives.
The optimization objectives for the core tensor are presented in Equs.
(\ref{eventual1_approximated_core_tensor_optimization}) and (\ref{eventual_approximated_core_tensor_optimization}).
The optimization objectives for the factor matrix are presented in Equ. (\ref{N_Optimization_for_Factor_Matrices}) which
are in the form of a basic optimization model.
In the industrial and big-data communities,
the HOHDST is very common.
Thus, SGD is proposed to replace the original optimization strategy.

The solution for $b^{(n)}_{:,r_{core}}$ is presented as:
 \begin{equation}\label{Solution_core_tensor}
  \begin{aligned}
  \mathop{\arg\min}_{b^{(n)}_{:,r_{core}}, n\in\{N\}}
  &f\bigg(b^{(n)}_{:,r_{core}}\bigg|x^{(n)}, \{\textbf{A}^{(n)}\}, \{\textbf{B}^{(n)}\}\bigg)\\
 =&\sum\limits_{i\in \Omega^{(n)}_{V}} L_{i}\bigg(b^{(n)}_{:,r_{core}}\bigg|x^{(n)}\bigg)+ \lambda_{\textbf{B}}\bigg\|b^{(n)}_{:,r_{core}}\bigg\|_{2}^{2}.
   \end{aligned}
\end{equation}
If a set $\Psi$ including $M$ randomly entries is selected,
the approximated SGD solution for $b^{(n)}_{:,r_{core}}$ is presented as:
 \begin{equation}\label{SGD_core_tensor}
  \begin{aligned}
&\mathop{\arg\min}_{b^{(n)}_{:,r_{core}}, n\in\{N\}}
f_{\Psi^{(n)}_{V}}\bigg(b^{(n)}_{:,r_{core}}\bigg|x^{(n)}_{\Psi^{(n)}_{V}}, \{\textbf{A}^{(n)}\}, \{\textbf{B}^{(n)}\}\bigg)\\
 &=\sum\limits_{i\in \Psi^{(n)}_{V}} L_{i}\bigg(b^{(n)}_{:,r_{core}}\bigg|x^{(n)}\bigg)+ \lambda_{\textbf{B}}\bigg\|b^{(n)}_{:,r_{core}}\bigg\|_{2}^{2}\\
 &=\bigg\|x^{(n)}_{\Psi^{(n)}_{V}}-\textbf{H}_{\Psi^{(n)}_{V},:}^{(n)}\sum\limits_{r=1}^{R_{core}} \textbf{O}^{(n)}_{r}b^{(n)}_{:,r} \bigg\|_{2}^{2}
  +\lambda_{\textbf{B}}\bigg\|b^{(n)}_{:,r_{core}}\bigg\|_{2}^{2}\\
 &=\bigg\|(x^{(n)}_{r_{core}})_{\Psi^{(n)}_{V}}-\textbf{H}_{\Psi^{(n)}_{V},:}^{(n)}\textbf{O}^{(n)}_{r_{core}}b^{(n)}_{:,r_{core}} \bigg\|_{2}^{2}+\lambda_{\textbf{B}}\bigg\|b^{(n)}_{:,r_{core}}\bigg\|_{2}^{2}.
   \end{aligned}
\end{equation}
The SGD for the approximated function $f_{\Psi^{(n)}_{V}}\bigg(b^{(n)}_{:,r_{core}}\bigg|x^{(n)}_{\Psi^{(n)}_{V}}, \{\textbf{A}^{(n)}\}, \{\textbf{B}^{(n)}\}\bigg)$ is deduced as:
\begin{equation}\label{Gradient_core_tensor}
  \begin{aligned}
  &\frac{\partial f_{\Psi^{(n)}_{V}}\bigg(b^{(n)}_{:,r_{core}}\bigg|x^{(n)}_{\Psi^{(n)}_{V}}, \{\textbf{A}^{(n)}\}, \{\textbf{B}^{(n)}\}\bigg)}{\partial b^{(n)}_{:,r_{core}}}\\
  =&\frac{1}{M}\bigg(
  -\textbf{O}^{(n)T}_{r_{core}}\textbf{H}_{\Psi^{(n)}_{V},:}^{(n)T}(x^{(n)}_{r_{core}})_{\Psi^{(n)}_{V}}\\
   &+\textbf{O}^{(n)T}_{r_{core}}\textbf{H}_{\Psi^{(n)}_{V},:}^{(n)T}\textbf{H}_{\Psi^{(n)}_{V},:}^{(n)}\textbf{O}^{(n)}_{r_{core}}b^{(n)}_{:,r_{core}}\bigg)
   +\lambda_{\textbf{B}}b^{(n)}_{:,r_{core}}.
   \end{aligned}
\end{equation}
The solution for factor matrices $a_{i_{n},:}^{(n)}$, $i_{n}$ $\in$ $\{I_{N}\}$, $n$ $\in$ $\{N\}$ is presented as:
 \begin{equation}\label{Solution_factor_matrix}
  \begin{aligned}
  \mathop{\arg\min}_{a_{i_{n},:}^{(n)}, n \in\{N\}}
  &f\bigg(a_{i_{n},:}^{(n)}\bigg|\bm{X}_{i_{n},:}^{(n)}, \big\{\textbf{A}^{(n)}\big\}, \widehat{\textbf{G}}^{(n)}\bigg)\\
  =&\sum\limits_{j\in (\Omega^{(n)}_{M})_{i_{n}}} L_{j}\bigg(a_{i_{n},:}^{(n)}\bigg|\bm{X}_{i_{n},j}^{(n)}\bigg)+\lambda_{\textbf{A}}\bigg\|a_{i_{n},:}^{(n)}\bigg\|_{2}^{2}.
   \end{aligned}
\end{equation}
If a set $\Psi$ including $M$ randomly chosen entries is selected,
the SGD solution for $a_{i_{n},:}^{(n)}$ can be expressed as:
 \begin{equation}\label{SGD_factor_matrix}
  \begin{aligned}
&\mathop{\arg\min}_{a_{i_{n},:}^{(n)}, n \in\{N\}}
 f_{\Psi^{(n)}_{M}}\bigg(a_{i_{n},:}^{(n)}\bigg|\bm{X}_{i_{n},(\Psi^{(n)}_{M})_{i_{n}}}^{(n)}, \big\{\textbf{A}^{(n)}\big\}, \widehat{\textbf{G}}^{(n)}\bigg)\\
  &=\sum\limits_{j\in (\Psi^{(n)}_{M})_{i_{n}}} L_{j}\bigg(a_{i_{n},:}^{(n)}\bigg|\bm{X}_{i_{n},j}^{(n)}\bigg)+\lambda_{\textbf{A}}\bigg\|a_{i_{n},:}^{(n)}\bigg\|_{2}^{2}\\
  &=\bigg\|\textbf{X}_{i_{n}, (\Psi^{(n)}_{M})_{i_{n}}}^{(n)}- a_{i_{n},:}^{(n)}\textbf{E}^{(n)}_{:, (\Psi^{(n)}_{M})_{i_{n}}}\bigg\|_{2}^{2}+\lambda_{\textbf{A}}\bigg\|a_{i_{n},:}^{(n)}\bigg\|_{2}^{2}.
   \end{aligned}
\end{equation}
The stochastic gradient for the approximated function $f_{\Psi^{(n)}_{V}}\bigg(a_{i_{n},:}^{(n)}\bigg|\bm{X}_{i_{n},(\Psi^{(n)}_{M})_{i_{n}}}^{(n)}, \big\{\textbf{A}^{(n)}\big\}, \textbf{G}^{(n)}\bigg)$ is deduced as:
\begin{equation}\label{Gradient_factor_matrix}
  \begin{aligned}
  &\frac{\partial f_{\Psi^{(n)}_{V}}\bigg(a_{i_{n},:}^{(n)}\bigg|\bm{X}_{i_{n},(\Psi^{(n)}_{M})_{i_{n}}}^{(n)}, \big\{\textbf{A}^{(n)}\big\}, \widehat{\textbf{G}}^{(n)}\bigg)}{\partial a_{i_{n},:}^{(n)}}
   \end{aligned}
\end{equation}
\begin{equation*}
  \begin{aligned}
  =&\frac{1}{M}\bigg(-\textbf{X}_{i_{n}, (\Psi^{(n)}_{M})_{i_{n}}}^{(n)}\textbf{E}^{(n)T}_{:, (\Psi^{(n)}_{M})_{i_{n}}}\\
   &+a_{i_{n},:}^{(n)}\textbf{E}^{(n)}_{:, (\Psi^{(n)}_{M})_{i_{n}}}\textbf{E}^{(n)T}_{:, (\Psi^{(n)}_{M})_{i_{n}}}\bigg)+\lambda_{\textbf{A}}a_{i_{n},:}^{(n)}.
   \end{aligned}
\end{equation*}
The computational details are presented in Algorithm~\ref{alg31}, which shows that
SGD$\_$Tucker is able to divide the high-dimension intermediate matrices
$\bigg\{\textbf{H}_{\Omega^{(n)}_{V},:}^{(n)}, \textbf{S}^{(n)}_{(\Omega^{(n)}_{M})_{i_{n}},:}, \textbf{E}^{(n)}_{:, (\Omega^{(n)}_{M})_{i_{n}}}\big| i_{n} \in\{I_{n}\}, n \in \{N\}\bigg\}$
into
small batches of intermediate matrices
$\bigg\{\textbf{H}_{\Psi^{(n)}_{V},:}^{(n)}, \textbf{S}^{(n)}_{(\Psi^{(n)}_{M})_{i_{n}},:}, \textbf{E}^{(n)}_{:, (\Psi^{(n)}_{M})_{i_{n}}}\big| i_{n} \in\{I_{n}\}, n \in \{N\}\bigg\}$.
We summarized all steps of SGD$\_$Tucker in Algorithm \ref{alg31}.

\begin{theorem}
From the function forms of (\ref{eventual_approximated_core_tensor_optimization}) and (\ref{N_Optimization_for_Factor_Matrices}),
the optimization objectives for the core tensor and factor matrics are both $u$-convex and $L$-smooth functions.
The stochastic update rules of (\ref{Gradient_core_tensor}) and (\ref{Gradient_factor_matrix}) can ensure the convergency of  alternative optimization objectives
(\ref{eventual_approximated_core_tensor_optimization}) and (\ref{N_Optimization_for_Factor_Matrices}), respectively.
\end{theorem}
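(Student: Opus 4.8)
The plan is to split the argument along the two sentences of the statement. For the first claim, the $\mu$-convexity and $L$-smoothness have in fact already been settled by Theorems 1 and 2, so I would simply invoke them and recall the underlying reason: each subproblem in (\ref{eventual_approximated_core_tensor_optimization}) and (\ref{N_Optimization_for_Factor_Matrices}) is a regularized least-squares objective of the generic shape $\|v-\textbf{M}z\|_{2}^{2}+\lambda\|z\|_{2}^{2}$, whose Hessian equals $2(\textbf{M}^{T}\textbf{M}+\lambda\textbf{I})$. This matrix is symmetric positive definite with eigenvalues bounded below by $2\lambda$ and above by $2(\|\textbf{M}\|_{2}^{2}+\lambda)$, which yields $\mu=2\lambda$ in the sense of the $\mu$-Convex definition and $L=2(\|\textbf{M}\|_{2}^{2}+\lambda)$ in the sense of the $L$-Lipschitz definition. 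So the first claim is essentially a restatement and needs no new computation.

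For the convergence claim I would first verify that the minibatch gradients in (\ref{Gradient_core_tensor}) and (\ref{Gradient_factor_matrix}) are \emph{unbiased} estimators of the full gradients. Because each objective decomposes additively over a sample-index set ($\Omega^{(n)}_{V}$ for the core-tensor blocks, $(\Omega^{(n)}_{M})_{i_{n}}$ for a factor-matrix row) and the minibatch $\Psi$ is drawn uniformly, the rescaled partial sum over $\Psi$ has expectation equal to the full average; this is exactly the passage from the exact gradient (\ref{GD}) to its stochastic surrogate (\ref{SGD}). I would then record the standard bounded second-moment condition $\mathbb{E}\|\nabla f_{\Psi}(z)\|_{2}^{2}\le G^{2}$, which holds here since the factors and the sampled entries are bounded.

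With unbiasedness, $\mu$-strong convexity, and $L$-smoothness in hand, the core of the argument is the classical one-step contraction. Fixing a block subproblem with minimizer $z_{\star}$ and iterate $z_{t}$, I would expand $\mathbb{E}\|z_{t+1}-z_{\star}\|_{2}^{2}$ from the update $z_{t+1}=z_{t}-\gamma_{t}\nabla f_{\Psi}(z_{t})$, take the conditional expectation over $\Psi$, and apply strong convexity through $\langle\nabla f(z_{t}),z_{t}-z_{\star}\rangle\ge\mu\|z_{t}-z_{\star}\|_{2}^{2}$ together with the second-moment bound to obtain the recursion $\mathbb{E}\|z_{t+1}-z_{\star}\|_{2}^{2}\le(1-2\gamma_{t}\mu)\,\mathbb{E}\|z_{t}-z_{\star}\|_{2}^{2}+\gamma_{t}^{2}G^{2}$. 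Unrolling this under a diminishing schedule $\gamma_{t}=\Theta(1/t)$ gives the standard $O(1/t)$ rate to the optimum of each convex subproblem, while a constant step size gives geometric convergence to an $O(\gamma G^{2}/\mu)$ neighborhood; this is precisely the convergence guaranteed by the averaged-SGD machinery of \cite{ex225}, \cite{ex244}, which I would cite rather than reproduce.

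The hard part will be the final link. Each inner subproblem is convex, but the joint problem over the core-tensor factors $\{\textbf{B}^{(n)}|n\in\{N\}\}$ and the factor matrices $\{\textbf{A}^{(n)}|n\in\{N\}\}$ is non-convex, so block-wise SGD convergence does not by itself certify convergence to a global minimizer — only to a block-coordinate stationary point of the alternating scheme. I would close this gap by appealing to the monotone-descent property of alternating minimization on the bounded-below, coercive global objective (coercivity being supplied by the $\lambda$-regularization): this yields convergence of the objective values and of a subsequence of iterates. The delicate point is making the stochastic and alternating layers interact cleanly, namely controlling the accumulated SGD error so that each block is solved accurately enough to preserve monotone descent; this is where the diminishing-step-size schedule is essential and where I would defer the technical details to \cite{ex225}, \cite{ex244}.
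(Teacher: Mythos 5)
Your proposal follows essentially the same route as the paper's own proof: both reduce each subproblem to a finite-sum objective $f(\textbf{x})=\frac{1}{n}\sum_{i=1}^{n}f_{i}(\textbf{x})$ that is $\mu$-strongly convex and $L$-smooth (by the preceding two theorems), observe that the minibatch gradient is an unbiased estimator of the full gradient, and then invoke the standard averaged-SGD convergence guarantees of \cite{ex225}, \cite{ex242, ex243, ex244, ex245} rather than reproducing them. The paper's proof stops there in two sentences; your explicit one-step contraction recursion and, especially, your closing caveat that block-wise SGD convergence only certifies a block-coordinate stationary point of the non-convex joint problem make explicit a gap the paper leaves entirely implicit.
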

\begin{proof}
The two function forms of (\ref{eventual_approximated_core_tensor_optimization}) and (\ref{N_Optimization_for_Factor_Matrices}) can be conclude as
$f(x)=\frac{1}{n}\mathop{\sum}_{i=1}^{n}f_{i}(\textbf{x})$,
where $f(\textbf{x})$ is a strongly-convex with constant $\mu$, and each $f_{i}(\textbf{x})$ is smooth and Lipschitz-continuous with constant $L$.
At $t$th iteration, for chosen $f_{i_{t}}$ randomly, and a learning rate sequence $\gamma_{t}>0$, the expectation $\mathbb{E}[\nabla f_{i}(\textbf{x}_{t})|\textbf{x}_{t}]$ of $\nabla f_{i_{t}}(\textbf{x}_{t})$ is equivalent to $\nabla f(\textbf{x}_{t})$ \cite{ex242, ex243, ex244, ex245}.
\end{proof}

\begin{algorithm}[!ht]
    \caption{Stochastic Optimization Strategies on a Training Epoch for SGD$\_$Tucker.}
    \label{alg31}
    \vspace{.1cm}
    $\textbf{Input}$: Sparse tensor $\mathcal{X}$, randomly selected set $\Psi$ with $M$ entries, Learning step $\gamma_{\textbf{A}}$, learning step $\gamma_{\textbf{B}}$; \\
    Initializing
    $\textbf{B}^{(n)}$, $n$ $\in$ $\{N\}$,
    core tensor $\widehat{\mathcal{G}}$,
    $\textbf{A}^{(n)}$, $n$ $\in$ $\{N\}$,
    $\textbf{H}_{\Psi^{(n)}_{V},:}^{(n)}$ $\in$ $\mathbb{R}^{M\times \prod_{n=1}^{N}J_{n}}$, $r$ $\in$ $\{R_{core}\}$, $n$ $\in$ $\{N\}$,
    $\textbf{O}^{(n)}_{r}$ $\in$ $\mathbb{R}^{\prod_{k=1}^{N}J_{k}\times J_{n}}$, $r$ $\in$ $\{R_{core}\}$, $n$ $\in$ $\{N\}$,
    $\widehat{x}^{(n)}_{\Psi^{(n)}_{V}} \in \mathbb{R}^{M}$. \\
    $\textbf{Output}$: $\textbf{B}^{(n)}$, $n$ $\in$ $\{N\}$, $\widehat{\mathcal{G}}$, $\textbf{A}^{(n)}$, $n$ $\in$ $\{N\}$.\\
    \begin{algorithmic}[1]
    \FOR{$n$ from $1$ to $N$}
        \STATE Obtain $\textbf{H}_{\Psi^{(n)}_{V},:}^{(n)}$;\\
        \FOR{$r_{core}$ from $1$ to $R_{core}$}
            \STATE Obtain $\textbf{O}^{(n)}_{r_{core}}$ by Equ. (\ref{Q_r}); \\
            \STATE Obtain $\textbf{W}^{(n)}_{r_{core}}$ $\leftarrow$ $\textbf{H}_{\Psi^{(n)}_{V},:}^{(n)}\textbf{O}^{(n)}_{r_{core}}$;\\
        \ENDFOR
        \FOR{$r_{core}$ from $1$ to $R_{core}$}
            \STATE $\widehat{x}^{(n)}_{\Psi^{(n)}_{V}}$ $\leftarrow$ $x^{(n)}_{\Psi^{(n)}_{V}}$;\\
            \FOR{$r$ from $1$ to $R_{core}$ ($r$ $\neq$ $r_{core}$)}
                \STATE $\widehat{x}^{(n)}_{\Psi^{(n)}_{V}}$ $\leftarrow$ $\widehat{x}^{(n)}_{\Psi^{(n)}_{V}} - \textbf{W}^{(n)}_{r}b^{(n)}_{:,r}$;\\
            \ENDFOR
            \STATE
            $\textbf{C}_{r_{core}}$ $\leftarrow$ $\textbf{W}^{(n)^{T}}_{r_{core}}\textbf{W}^{(n)}_{r_{core}}$;\\
            \STATE
            $\textbf{V}^{(n)}_{r_{core}}$ $\leftarrow$
            $-\textbf{W}^{(n)^{T}}_{r_{core}}\widehat{x}^{(n)}_{\Psi^{(n)}_{V}}+
            \textbf{C}_{r_{core}}b^{(n)}_{:,r_{core}}$;\\ 
            \STATE
            $b^{(n)}_{:,r_{core}}$ $\leftarrow$
            $SGD\bigg(
            M,
            \lambda_{\textbf{B}},
            \gamma_{\textbf{B}},
            b^{(n)}_{:,r_{core}},
            \textbf{V}^{(n)}_{r_{core}}
            \bigg)$;\\ 
        \ENDFOR
    \ENDFOR
    \STATE Obtain $\widehat{\bm{\mathcal{G}}}$ by Equ. (\ref{CP-Decomposition});\\
    \FOR{$n$ from $1$ to $N$}
        \FOR{$i_{n}$ from $1$ to $I_{n}$}
            \STATE Obtain $\textbf{S}^{(n)}_{(\Psi^{(n)}_{M})_{i_{n}},:}(cache_{S})$;\\
            \STATE $\textbf{E}^{(n)}_{:, (\Psi^{(n)}_{M})_{i_{n}}}$ $\leftarrow$ $\widehat{\textbf{G}}^{(n)}\textbf{S}^{(n)^{T}}_{(\Psi^{(n)}_{M})_{i_{n}},:}(cache_{E})$;\\
            \STATE
            $\textbf{C}^{(n)}$ $\leftarrow$ $\textbf{E}^{(n)}_{:, (\Psi^{(n)}_{M})_{i_{n}}}\textbf{E}^{(n)T}_{:, (\Psi^{(n)}_{M})_{i_{n}}}$;\\
            \STATE
            $\textbf{F}^{(n)}$ $\leftarrow$
            $\underbrace{-\textbf{X}_{i_{n}, (\Psi^{(n)}_{M})_{i_{n}}}^{(n)}\textbf{E}^{(n)T}_{:, (\Psi^{(n)}_{M})_{i_{n}}}}_{cache_{Fact1}}+\overbrace{\underbrace{a_{i_{n},:}^{(n)}\textbf{C}^{(n)}}_{cache_{Fact2}}}^{cache_{Factp}, cache_{Factvec}}$; \\ 
            \STATE $a_{i_{n},:}^{(n)}$ $\leftarrow$
            $SGD\bigg( |(\Psi^{(n)}_{M})_{i_{n}}|,
            \lambda_{\textbf{A}},
            \gamma_{\textbf{A}},
             a_{i_{n},:}^{(n)},
             \textbf{F}^{(n)}
             \bigg)$. 
        \ENDFOR
    \ENDFOR
    \STATE $\textbf{Return}$: $\textbf{B}^{(n)}$, $n$ $\in$ $\{N\}$, $\widehat{\mathcal{G}}$, $\textbf{A}^{(n)}$, $n$ $\in$ $\{N\}$.\\
    \end{algorithmic}
\end{algorithm}

\subsection{Parallel and Distributed Strategy} \label{Section35}
\textcolor[rgb]{0.00,0.00,1.00}{We first explain the naive parallel strategy (Section \ref{Section351}), which relies on the coordinate continuity and
the specific style of matricization unfolding of the sparse tensor to keep the
entire continuous accessing in the process of updating $\textbf{A}^{(n)}$, $n$ $\in$ $\{N\}$.
Then, we present the improved parallel strategy (Section \ref{Section352}),
which can save the $N$ compressive styles of the sparse tensor $\mathcal{X}$ to just a compressive format.
At last, the analysis of the communication cost is reported on Section \ref{Section353}.}

\subsubsection{Naive Parallel Strategy} \label{Section351}
To cater to the need of processing big data, the algorithm design shall leverage the increasing number of cores while minimizing the communication overehead among parallel threads.
If there are $L$ threads,
the randomly selected set $\Psi$ is divided into $L$ subsets $\big\{\Psi_{1},\cdots,\Psi_{l},\cdots,\Psi_{L}\big|l\in\{L\}\big\}$ and
the parallel computation analysis follows each step in Algorithm \ref{alg31} as:

(i) Updating the core tensor:

\textbf{Line} 2:
The computational tasks of the $L$ Intermediate matrices
$\bigg\{\textbf{H}_{\Psi_{l_{V},:}^{(n)}}^{(n)}$ $\in$ $\mathbb{R}^{|\Psi_{l}|\times \prod\limits_{n=1}^{N}J_{n}}\bigg|l \in \{L\}, n\in \{N\}\bigg\}$ be allocated to $L$ threads;

\textbf{Line} 4:
Intermediate matrix $\textbf{O}^{(n)}_{r}$ $\in$ $\mathbb{R}^{\prod\limits_{k=1}^{N}J_{k}\times J_{n}}$, $r$ $\in$ $\{R_{core}\}$, $n$ $\in$ $\{N\}$;
thus, the independent computational tasks of the $\prod\limits_{k=1, k\neq n}^{N}J_{k}$ diagonal sub-matrices can be allocated to the $L$ threads;

\textbf{Line} $5$:
The computational tasks of the $L$ intermediate matrices
$\bigg\{\textbf{H}_{\Psi_{l_{V},:}^{(n)}}^{(n)}\textbf{O}^{(n)}_{r}$ $\in$ $\mathbb{R}^{|\Psi_{l}|\times J_{n}}\bigg| l \in \{L\}, n\in \{N\}\bigg\}$ can be allocated to $L$ threads;

\textbf{Line} $8$:
The $L$ assignment tasks
$\bigg\{\widehat{x}^{(n)}_{\Psi^{(n)}_{l_{V}}} \bigg| l \in \{L\}\bigg\}$
can be allocated to the $L$ threads;

\begin{figure}[ht]
\centering
\includegraphics[width=3.5in,height=2.5in]{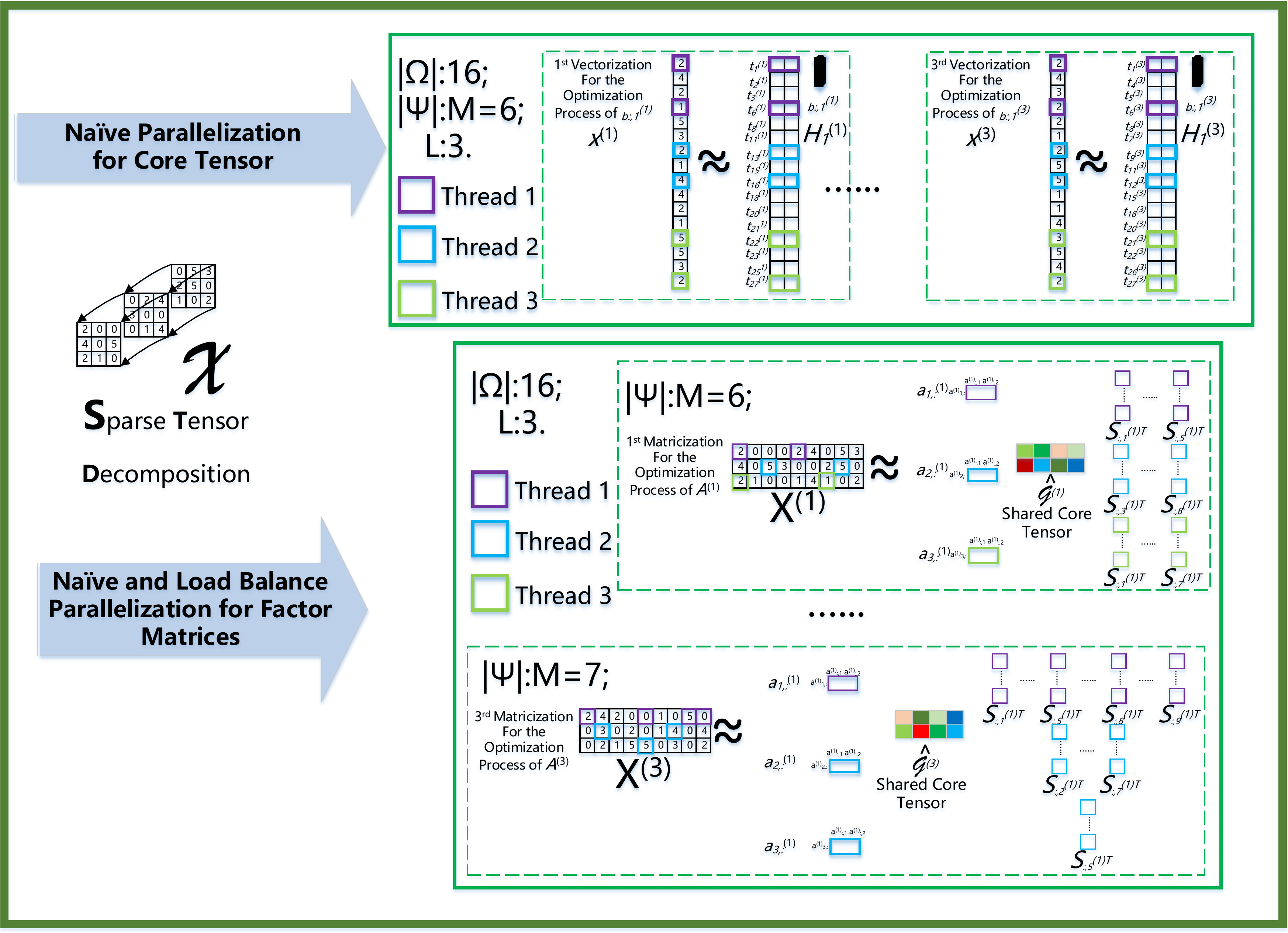}
\caption{Naive Parallel Strategy for SGD$\_$Tucker.}
\label{fig303}
\end{figure}

\textbf{Line} $10$:
The computational tasks of the $L$ intermediate matrices
$\bigg\{(\textbf{W}_{r}^{(n)})_{\Psi_{l_{V},:}^{(n)}}b^{(n)}_{:,r} \in \mathbb{R}^{|\Psi_{l}|} \bigg| r\in\{R_{core}\}, n\in \{N\}\bigg\}$ can be allocated to
$L$ threads;

\textbf{Line} $12$:
The computational tasks of the $L$ intermediate matrices
$\bigg\{
\textbf{C}_{r_{core}}^{l}=(\textbf{W}_{r}^{(n)})^{T}_{\Psi_{l_{V},:}^{(n)}}(\textbf{W}_{r}^{(n)})_{\Psi_{l_{V},:}^{(n)}}\in \mathbb{R}^{J_{n}\times J_{n}}
\bigg|
l \in\{L\},
r\in \{R_{core}\}, n\in \{N\}
\bigg\}$ can be allocated to $L$ threads and the $l$ thread does the intra-thread summation $\textbf{C}_{r_{core}}^{l}$ $=$ $(\textbf{W}_{r}^{(n)})^{T}_{\Psi_{l_{V},:}^{(n)}}(\textbf{W}_{r}^{(n)})_{\Psi_{l_{V},:}^{(n)}}$.
Then, the main thread sums $\textbf{C}_{r_{core}}= \sum\limits_{l=1}^{L}\textbf{C}_{r_{core}}^{l}$;

\textbf{Line} $13$:
The computational tasks of the $L$ intermediate matrices
$\bigg\{
(\textbf{W}_{r}^{(n)})^{T}_{\Psi_{l_{V},:}^{(n)}}\widehat{x}^{(n)}_{\Psi_{l_{V}}^{(n)}} \in \mathbb{R}^{J_{n}}$,
$\textbf{C}_{r}b^{(n)}_{:,r} \in \mathbb{R}^{J_{n}}
\bigg|
r\in \{R_{core}\}, n\in \{N\}
\bigg\}$ can be allocated to $L$ threads.

\textbf{Line} $14$: This step can be processed by the main thread.

(ii) Updating factor matrices:
$I_{n}$ loops including the \textbf{Lines} $20-24$, $n$ $\in$ $\{N\}$ are independent.
Thus, the $I_{n}$, $n$ $\in$ $\{N\}$ loops can be allocated to $L$ threads.

The parallel training process for $\textbf{B}^{(n)}$, $n$ $\in$ $\{N\}$ does not need the load balance.
The computational complexity of $\textbf{S}^{(n)}_{(\Psi^{(n)}_{M})_{i_{n}},:}$ for each thread is proportional to $|(\Psi^{(n)}_{M})_{i_{n}}|$, and
the $I_{n}$, $n$ $\in$ $\{N\}$ independent tasks are allocated to the $L$ threads.
Then, there is load imbalance problem for updating the factor matrices $\textbf{A}^{(n)}$, $n$ $\in$ $\{N\}$.
Load balance can fix the problem of non-even distribution for non-zero elements.

A toy example is shown in Fig. \ref{fig303} which comprises of:

(i) Each thread $l$ first selects the even number of non-zero elements:
The $3$ threads $\{1, 2, 3\}$ select
$\big\{$
$\{$ $\bm{x}^{(1)}_{1},   \bm{x}^{(1)}_{6}$ $\}$,
$\{$ $\bm{x}^{(1)}_{13},  \bm{x}^{(1)}_{16}$ $\}$,
$\{$ $\bm{x}^{(1)}_{22},  \bm{x}^{(1)}_{27}$ $\}$
$\big\}$, respectively.
\textbf{Step 1}:
The $3$ threads $\{1, 2, 3\}$ construct
$\big\{$
$\bm{H}^{(1)}_{1,:}$,
$\bm{H}^{(1)}_{13,:}$,
$\bm{H}^{(1)}_{22,:}$
$\big\}$, respectively.
Then, the $3$ threads $\{1, 2, 3\}$ compute
$\big\{$
$\textbf{W}^{(1)}_{1}=\bm{H}^{(1)}_{1,:}\textbf{O}^{(1)}$,
$\textbf{W}^{(1)}_{13}=\bm{H}^{(1)}_{13,:}\textbf{O}^{(1)}$,
$\textbf{W}^{(1)}_{22}=\bm{H}^{(1)}_{22,:}\textbf{O}^{(1)}$
$\big\}$, respectively.
\textbf{Step 2}:
The $3$ threads $\{1, 2, 3\}$ construct
$\big\{$
$\bm{H}^{(1)}_{6,:}$,
$\bm{H}^{(1)}_{16,:}$,
$\bm{H}^{(1)}_{27,:}$
$\big\}$, respectively.
Then, the $3$ threads $\{1, 2, 3\}$ compute
$\big\{$
$\textbf{W}^{(1)}_{6}=\bm{H}^{(1)}_{6,:}\textbf{O}^{(1)}$,
$\textbf{W}^{(1)}_{16}=\bm{H}^{(1)}_{16,:}\textbf{O}^{(1)}$,
$\textbf{W}^{(1)}_{27}=\bm{H}^{(1)}_{27,:}\textbf{O}^{(1)}$
$\big\}$, respectively.
Each thread does the summation within the thread and the $3$ threads do the entire summation by
the code $\#\ pragma\ omp\ parallel\ for\ reduction\ (+:sum)$
for multi-thread summation.
We observe that the computational process of $\textbf{W}^{(1)^{T}}_{k}\textbf{W}^{(1)}_{k}b^{(1)}_{1,:}$ , $k$ $\in$ $\{1, 6, 13, 16, 22, 27\}$
is divided into the vectors reduction operation $p=\textbf{W}^{(1)}_{k}b^{(1)}_{1,:}$ and $vec=\textbf{W}^{(1)^{T}}_{k}p$.
\textbf{Step 3}: The $b^{(1)}_{:,1}$ is updated.
The process of updating $b^{(3)}_{:,1}$ is similar the process of updating $b^{(1)}_{:,1}$.
The description is omitted.
We observe that each thread selects $2$ elements.
Thus, the load for the $3$ threads is balanced.

(ii) Each thread selects the independent rows and $L$ threads realize naive parallelization for $\textbf{A}^{(n)}$, $n$ $\in$ $\{N\}$ by the $n$th matricization $\textbf{X}^{(n)}$, $n$ $\in$ $\{N\}$:
As show in Fig. \ref{fig303}, the $3$ threads $\{1, 2, 3\}$ update
$\{a_{1,:}^{(1)}, a_{2,:}^{(1)}, a_{3,:}^{(1)}\}$ explicitly.
Thus, the description is omitted.
The $2$ threads $\{1, 2\}$ update
$\bigg\{a_{1,:}^{(1)}, \big\{a_{2,:}^{(1)}, a_{3,:}^{(1)}\big\}\bigg\}$, respectively and independently by
$\bigg\{$
$\big\{$
$\{\bm{X}^{(3)}_{1,1}$, $\bm{S}^{(3)}_{:,1}\}$, $\{\bm{X}^{(3)}_{1,5}$, $\bm{S}^{(3)}_{:,5}\}$, $\{\bm{X}^{(3)}_{1,8}$, $\bm{S}^{(3)}_{:,8}$, $\{\bm{X}^{(3)}_{1,9}$, $\bm{S}^{(3)}_{:,9}\}$
$\big\}$,
$\big\{$
$\{\bm{X}^{(3)}_{2,2}$, $\bm{S}^{(3)}_{:,2}\}$, $\{\bm{X}^{(3)}_{2,7}$, $\bm{S}^{(3)}_{:,7}\}$
$\big\}$,
$\big\{$
$\{\bm{X}^{(3)}_{3,5}$, $\bm{S}^{(3)}_{:,5}\}$ $\big\}$
$\bigg\}$, respectively, with the shared matrix $\widehat{\textbf{G}}^{(3)}$.
Thread $1$ selects $4$ elements for updating $a_{1,:}^{(3)}$.
Thread $2$ selects $2$ elements and $1$ element for updating $a_{2,:}^{(3)}$ and $a_{3,:}^{(3)}$, respectively,
which can dynamically  balance the load.
In this condition, the load for the $3$ threads is slightly more balanced.

As shown in Fig. \ref{fig303},
the naive parallel strategy for $\{ \textbf{A}^{(1)}, \cdots , \textbf{A}^{(N)} \}$ relies on
the matricization format $\{ \bm{X}^{(1)}, \cdots , \bm{X}^{(N)} \}$ of a sparse tensor $\mathcal{X}$, respectively,
which is used to avoid the read-after-write and write-after-read conflicts.
Meanwhile, the transformation process for the compressive format to another one is time consuming.
Thus, the space and computational overheads are not scalable.


\begin{figure}[ht]
\centering
\includegraphics[width=3.5in,height=2.0in]{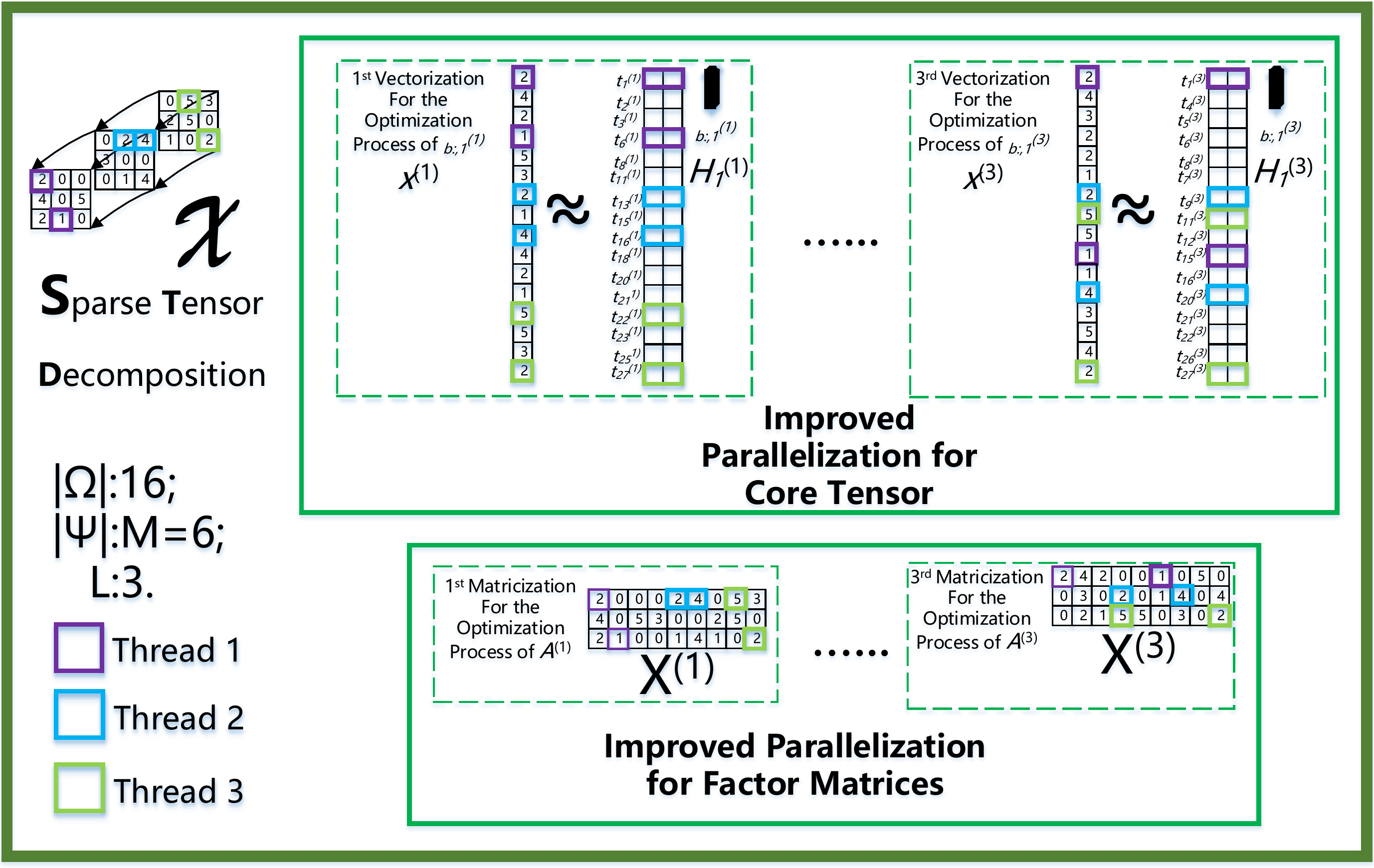}
\caption{Improved Parallel Strategy for SGD$\_$Tucker.}
\label{fig302}
\end{figure}

\subsubsection{Improved Parallel Strategy} \label{Section352}
The improved parallel strategy is developed to use only one  compressive format for the entire updating process and hence save the memory requirements. At the same time,  it can avoid the read-after-write or write-after-read conflicts.
In Fig. \ref{fig302}, a toy example of improved parallel strategy is illustrated.
As show in Fig. \ref{fig302},
the $3$ threads $\{ 1 , 2 , 3 \}$ select $6$ non-zeros elements and the
Coordinate Format (COO) is
$\bigg\{$
$\big\{$ $(1,1,1,2.0)$, $(3,2,1,1.0)$ $\big\}$,
$\big\{$ $(1,2,2,2.0)$, $(1,3,2,4.0)$ $\big\}$,
$\big\{$ $(1,2,3,5.0)$, $(3,3,3,2.0)$ $\big\}$
$\bigg\}$, respectively.
By the structure of COO,
the training process of $B^{(1)}$ and $B^{(3)}$
does not need a specific shuffling order of a sparse tensor.
Thus, the description of updating $b_{:, 1}^{(1)}$ and $b_{:, 1}^{(3)}$ is omitted.

As for  $\bm{X}^{(3)}$ to update $\textbf{A}^{(3)}$,
we neglect this condition because the selected elements of  $3$ threads lie in independent rows and it is trivial to parallelize.
In the style of $\bm{X}^{(1)}$ for updating $\textbf{A}^{(1)}$,
updating $a_{1, :}^{(1)}$ relies on
$\big\{ (1,1,1,2.0), \textbf{S}_{:, 1}^{(1)} \big\}$ (selected by thread $1$),
$\bigg\{ \big\{(1,2,2,2.0), \textbf{S}_{:, 5}^{(1)}\big\}, \big\{(1,3,2,4.0), \textbf{S}_{:, 6}^{(1)}\big\} \bigg\}$ (selected by thread $2$), and
$\big\{(1,2,3,5.0), \textbf{S}_{:, 8}^{(1)}\big\}$ (selected by thread $3$) with the shared matrix $\widehat{\textbf{G}}^{(1)}$.
It has following three steps.

(i) (Lines 20-21 in Algorithm \ref{alg31})
The $3$ threads $\{1, 2, 3\}$ compute
$\bigg\{$
$ \textbf{E}_{:, 1}^{(1)} = \widehat{\textbf{G}}^{(1)} \textbf{S}_{:, 1}^{(1)^{T}} $,
$\big\{
\textbf{E}_{:, 5}^{(1)} = \widehat{\textbf{G}}^{(1)} \textbf{S}_{:, 5}^{(1)^{T}},
\textbf{E}_{:, 6}^{(1)} = \widehat{\textbf{G}}^{(1)} \textbf{S}_{:, 6}^{(1)^{T}}
\}$,
$\textbf{E}_{:, 8}^{(1)} = \widehat{\textbf{G}}^{(1)} \textbf{S}_{:, 8}^{(1)^{T}}$
$\bigg\}$ independently, by the \textbf{private} \textbf{cache} \textbf{matrix} $\{cache_{S}, cache_{E}\}$ of each thread;

(ii) (Lines 22-23 in Algorithm \ref{alg31})
The computational process of $a_{i_{n},:}^{(n)}\textbf{E}_{:, k}^{(1)} \textbf{E}_{:, k}^{(1)^{T}}$, $k$ $\in$ $\{1, 5, 6, 8\}$
is divided into the vectors reduction operation $cache_{Factp}=a_{i_{n},:}^{(n)}\textbf{E}_{:, k}^{(1)}$ and $cache_{Factvec}=cache_{Factp}\textbf{E}_{:, k}^{(1)^{T}}$.
The $3$ threads $\{1, 2, 3\}$ compute
$\bigg\{$
$ a_{i_{n},:}^{(n)}\textbf{E}_{:, 1}^{(1)} \textbf{E}_{:, 1}^{(1)^{T}} $,
$\big\{
a_{i_{n},:}^{(n)}\textbf{E}_{:, 5}^{(1)} \textbf{E}_{:, 5}^{(1)^{T}} ,
a_{i_{n},:}^{(n)}\textbf{E}_{:, 6}^{(1)} \textbf{E}_{:, 6}^{(1)^{T}}
\}$,
$a_{i_{n},:}^{(n)}\textbf{E}_{:, 8}^{(1)} \textbf{E}_{:, 8}^{(1)^{T}}$
$\bigg\}$, respectively and independently, by the \textbf{private} \textbf{cache} $cache_{Factp}$ and $cache_{Factvec}$  of each thread.
Then, the $3$ threads $\{1, 2, 3\}$ can use the synchronization mechanisms, i.e., $atomic$, $cirtical$ or
$mutex$, of OpenMP to accomplish $\prod\limits_{k=1,5,6,8}a_{i_{n},:}^{(n)}\textbf{E}_{:, k}^{(1)}\textbf{E}_{:, k}^{(1)^{T}}$.
Then, the results are returned to \textbf{global} \textbf{shared} \textbf{cache} $cache_{Fact2}$;

(iii) (Line 23 in Algorithm \ref{alg31}).
The $3$ threads $\{1, 2, 3\}$ compute
$\bigg\{$
$ x_{1,1,1}\textbf{E}_{:, 1}^{(1)} $,
$\big\{
x_{1,2,2}\textbf{E}_{:, 5}^{(1)},
x_{1,3,2}\textbf{E}_{:, 6}^{(1)}
\}$,
$x_{1,2,3}\textbf{E}_{:, 8}^{(1)}$
$\bigg\}$, respectively and independently.
Then,
the $3$ threads $\{1, 2, 3\}$ can use the synchronization mechanisms, i.e., $atomic$, $cirtical$ or
$mutex$, to accomplish $\textbf{F}^{(1)}_{1}$.
Then, the results are returned to the \textbf{global} \textbf{shared} \textbf{cache} $cache_{Fact1}$.
Eventually, the $I_{1}$ tasks
$SGD\bigg( 6, \lambda_{\textbf{A}}, \gamma_{\textbf{A}}, a^{(1)}_{1,:}, \textbf{F}^{(1)}_{1} \bigg)$ be allocated to the
the $3$ threads $\{1, 2, 3\}$ in a parallel and \textbf{load} \textbf{balance} way.
Due to the same condition of updating $a_{3, :}^{(1)}$ and limited space,
the description of updating $a_{3, :}^{(1)}$ is omitted.

By the \textbf{global} \textbf{shared} \textbf{caches}
and \textbf{private} \textbf{caches},
$SGD$\_$Tucker$ can handle the parallelization on OpenMP by just a compressive format and
the space overhead is much less than the compressive data structure of a sparse tensor $\mathcal{X}$;
meanwhile, this strategy does not increase extra computational overhead.

\subsubsection{Communication in Distributed Platform} \label{Section353}
In distributed platform,
the communication overhead for a core tensor is
$O\bigg(\prod\limits_{n=1}^{N}J_{n}\bigg)$,
which is non-scalable in HOHDST scenarios.
SGD$\_$Tucker can prune the number of the parameters for constructing an updated core tensor
from $O\bigg(\prod\limits_{n=1}^{N}J_{n}\bigg)$ to $O\bigg(\sum\limits_{n=1}^{N}J_{n}R_{core}\bigg)$ where $R_{core}$ $\ll$ $J_{n}, n\in \{N\}$
while maintaining the same overall accuracy and lower computational complexity.
Hence, nodes only need to communicate the Kruckal product matrices
$\bigg\{\textbf{B}_{r}^{(n)} \in \mathbb{R}^{J_{n}\times R_{core}}\bigg| r\in \{R_{core}\}, n\in
\{N\} \bigg\}$ rather than the entire core tensor $\mathcal{G}$ $\in$ $\mathbb{R}^{J_{1}\times J_{2}\times\cdots \times J_{N}}_{+}$.
Hence,
SGD$\_$Tucker features that
1) a stochastic optimization strategy is adopted to core tensor and factor matrices which can keep the low computation overhead and storage space while not compromising the accuracy;
2) each minor node can select sub-samples from allocated sub-tensor and then compute the partial gradient and the major node gathers the partial gradient from all minor nodes and then obtains and returns the full gradient;
3) the communication overhead for information exchange of the core tensor $\bm{\mathcal{G}}$ is $O\bigg(\sum\limits_{n=1}^{N}J_{n}R_{core}\bigg)$.

\subsection{Complexity Analysis}\label{Section34}
The space and complexity analyses follow the steps which are presented in Algorithm \ref{alg31} as:
\begin{theorem}
The space overhead for updating $\textbf{B}^{(n)}$, $n$ $\in$ $\{N\}$ is
$O\bigg(
\big(MN+R_{core}NJ_{n}\big)\prod\limits_{k=1}^{N}J_{k}+
R_{core}MJ_{n}+
MN+
J_{n}
\bigg)$.
\end{theorem}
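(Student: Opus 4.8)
The plan is to read the space cost directly off Algorithm~\ref{alg31} (Lines 1--16), line by line, using the dimensions already tabulated in Table~\ref{table31}, and then to sum the per-variable costs over the two nested loops: the outer loop over the modes $n\in\{N\}$ and the inner loops over the Kruskal rank index $r_{core}\in\{R_{core}\}$. First I would enumerate the objects that must be resident in memory while $\textbf{B}^{(n)}$ is being updated: the partial coefficient matrix $\textbf{H}_{\Psi^{(n)}_{V},:}^{(n)}$, the unity-padded coefficient matrices $\textbf{O}^{(n)}_{r}$, the products $\textbf{W}^{(n)}_{r_{core}}=\textbf{H}_{\Psi^{(n)}_{V},:}^{(n)}\textbf{O}^{(n)}_{r_{core}}$, the partial residual vectors $x^{(n)}_{\Psi^{(n)}_{V}}$ and $\widehat{x}^{(n)}_{\Psi^{(n)}_{V}}$, the Gram-type matrices $\textbf{C}_{r_{core}}=\textbf{W}^{(n)^{T}}_{r_{core}}\textbf{W}^{(n)}_{r_{core}}$, and the gradient/parameter vectors $\textbf{V}^{(n)}_{r_{core}}$ and $b^{(n)}_{:,r_{core}}$.

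Next I would attach a size to each. The matrix $\textbf{H}_{\Psi^{(n)}_{V},:}^{(n)}$ is $M\times\prod_{k=1}^{N}J_{k}$, so accounting for all $N$ modes contributes $MN\prod_{k=1}^{N}J_{k}$. Each $\textbf{O}^{(n)}_{r}$ is $\prod_{k=1}^{N}J_{k}\times J_{n}$ and there are $R_{core}$ of them per mode, giving $R_{core}NJ_{n}\prod_{k=1}^{N}J_{k}$; the two together yield the leading factor $(MN+R_{core}NJ_{n})\prod_{k=1}^{N}J_{k}$. Each $\textbf{W}^{(n)}_{r_{core}}$ is $M\times J_{n}$, and since all $R_{core}$ of them are produced in the first inner loop (Lines 3--6) and reused in the second (Lines 7--15), they must coexist, contributing $R_{core}MJ_{n}$. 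The partial vectors $x^{(n)}_{\Psi^{(n)}_{V}}$ and $\widehat{x}^{(n)}_{\Psi^{(n)}_{V}}$ are of length $M$, giving $O(MN)$ across modes, while $\textbf{V}^{(n)}_{r_{core}}$ and $b^{(n)}_{:,r_{core}}$ are of length $J_{n}$, giving the standalone $O(J_{n})$ term. Finally I would observe that $\textbf{C}_{r_{core}}$ of size $J_{n}\times J_{n}$ costs $J_{n}^{2}$, which is dominated by $J_{n}\prod_{k=1}^{N}J_{k}$ and is therefore absorbed into the leading term. Collecting the surviving summands reproduces exactly $O\bigl((MN+R_{core}NJ_{n})\prod_{k=1}^{N}J_{k}+R_{core}MJ_{n}+MN+J_{n}\bigr)$.

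The main obstacle is not arithmetic but bookkeeping of multiplicities: I must justify the loop multipliers $N$ and $R_{core}$ against the scope in which each buffer lives. The factor $R_{core}$ on $\textbf{O}^{(n)}_{r}$ and $\textbf{W}^{(n)}_{r_{core}}$ is legitimate precisely because the algorithm precomputes and caches all $R_{core}$ of these before entering the update loop, whereas vectors such as $b^{(n)}_{:,r_{core}}$ are overwritten in place and hence carry no $R_{core}$ multiplier. The factor $N$ reflects accounting for the whole epoch over all modes rather than a single mode; I would state explicitly that the bound is the aggregate working set claimed for updating the entire family $\{\textbf{B}^{(n)}\mid n\in\{N\}\}$, which is why $\textbf{H}_{\Psi^{(n)}_{V},:}^{(n)}$ enters with the coefficient $N$. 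Once these scoping conventions are fixed, the remaining step is the routine domination argument that collapses $J_{n}^{2}$ and lower-order terms, and the stated complexity follows.
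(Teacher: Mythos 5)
Your proposal is correct and follows essentially the same route as the paper: the paper's proof likewise just enumerates the resident intermediate variables $\textbf{H}_{\Psi^{(n)}_{V},:}^{(n)}$, $\textbf{O}^{(n)}_{r_{core}}$, $\textbf{W}^{(n)}_{r_{core}}$, $\widehat{x}^{(n)}_{\Psi^{(n)}_{V}}$, $\textbf{V}^{(n)}_{r_{core}}$ from Lines 1--16 of Algorithm~\ref{alg31}, attaches their per-instance sizes $M\prod_{k}J_{k}$, $J_{n}\prod_{k}J_{k}$, $MJ_{n}$, $M$, $J_{n}$, and leaves the aggregation over $n$ and $r_{core}$ implicit. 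Your version only adds the explicit bookkeeping of which buffers are cached across the inner loop versus overwritten in place, which the paper omits.
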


\begin{theorem}
The computational complexity for updating $\textbf{B}^{(n)}$, $n$ $\in$ $\{N\}$ is
$O\bigg(
(MN+R_{core}NJ_{n})\prod\limits_{k=1}^{N}J_{k}+
R_{core}MJ_{n}+
R_{core}(R_{core}-1)MN+
R_{core}NJ_{n}
\bigg)$.
\end{theorem}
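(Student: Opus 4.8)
The plan is to prove the bound by a direct, line-by-line operation count over the core-tensor block of Algorithm~\ref{alg31} (Lines 1--16), since each of the five summands in the claimed $O(\cdot)$ expression corresponds to the dominant arithmetic step of one group of lines, scaled by the multiplicities of the enclosing loops: the outer mode loop contributes a factor $N$, the two $r_{core}$ loops contribute a factor $R_{core}$, and the inner residual loop (Lines 9--11) contributes a factor $R_{core}-1$. First I would tabulate, for every line, the dimensions of the matrices and vectors it manipulates (these are exactly the shapes recorded in Table~\ref{table31} and in the Input block of Algorithm~\ref{alg31}), read off the flop count of the single dominant product on that line, and then multiply by the appropriate loop multiplicity. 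Summing these per-line contributions and discarding dominated terms would yield the stated expression.

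Concretely, I expect the correspondence to be as follows. Line~2 assembles $\textbf{H}^{(n)}_{\Psi^{(n)}_{V},:}\in\mathbb{R}^{M\times\prod_{k}J_{k}}$, costing $O\!\big(M\prod_{k}J_{k}\big)$ per mode and hence $O\!\big(MN\prod_{k}J_{k}\big)$ over all $N$ modes, which is the first summand. Line~4 forms $\textbf{O}^{(n)}_{r_{core}}\in\mathbb{R}^{\prod_{k}J_{k}\times J_{n}}$ from Equ.~(\ref{Q_r}); assembling and applying this $\prod_{k}J_{k}\times J_{n}$ array once per pair $(n,r_{core})$ gives $O\!\big(R_{core}NJ_{n}\prod_{k}J_{k}\big)$, the second summand. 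The gradient-combination steps in Lines~12--13, built from $\textbf{W}^{(n)T}_{r_{core}}\widehat{x}^{(n)}_{\Psi^{(n)}_{V}}$ and the Gram-type product, account for the $O(R_{core}MJ_{n})$ summand; the inner residual update $\widehat{x}^{(n)}_{\Psi^{(n)}_{V}}\leftarrow\widehat{x}^{(n)}_{\Psi^{(n)}_{V}}-\textbf{W}^{(n)}_{r}b^{(n)}_{:,r}$ (Line~10), whose length-$M$ body executes $R_{core}(R_{core}-1)$ times per mode, yields $O(R_{core}(R_{core}-1)MN)$; and the per-column $SGD$ call on the length-$J_{n}$ vector $b^{(n)}_{:,r_{core}}$ (Line~14) contributes $O(R_{core}NJ_{n})$. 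Adding these five pieces reproduces $(MN+R_{core}NJ_{n})\prod_{k}J_{k}+R_{core}MJ_{n}+R_{core}(R_{core}-1)MN+R_{core}NJ_{n}$.

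\textbf{The main obstacle} is controlling the two intermediates of width $\prod_{k=1}^{N}J_{k}$, namely $\textbf{H}^{(n)}_{\Psi^{(n)}_{V},:}$ and $\textbf{O}^{(n)}_{r_{core}}$, and in particular bounding the product $\textbf{W}^{(n)}_{r_{core}}=\textbf{H}^{(n)}_{\Psi^{(n)}_{V},:}\textbf{O}^{(n)}_{r_{core}}$ of Line~5. A naive dense multiplication of an $M\times\prod_{k}J_{k}$ matrix by a $\prod_{k}J_{k}\times J_{n}$ matrix would cost $O\!\big(R_{core}NMJ_{n}\prod_{k}J_{k}\big)$ and violate the bound, so the argument must exploit the block-scaled-identity structure of $\textbf{O}^{(n)}_{r_{core}}$ dictated by Equ.~(\ref{Q_r}) (only the $\prod_{k\neq n}J_{k}$ diagonal blocks are nonzero) together with the fact that each $\textbf{W}^{(n)}_{r_{core}}$ is computed once and then reused throughout the residual loop rather than recomputed. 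Establishing that this structured evaluation keeps the Line~5 cost within the $\prod_{k}J_{k}$-scaled summands is the delicate point; the remaining work is routine loop-multiplicity bookkeeping, where the only real care needed is tracking which summands carry a factor $N$ and distinguishing the $R_{core}$ factor of the outer $r_{core}$ loop from the $R_{core}-1$ factor of the inner residual loop.
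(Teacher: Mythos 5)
Your proposal takes essentially the same route as the paper: the published proof is precisely this tabulation, listing per-mode costs for the intermediates $\textbf{H}_{\Psi^{(n)}_{V},:}^{(n)}$, $\textbf{O}^{(n)}_{r_{core}}$, $\textbf{W}^{(n)}_{r_{core}}$, $\widehat{x}^{(n)}_{\Psi^{(n)}_{V}}$, and $\textbf{V}^{(n)}_{r_{core}}$ and leaving the loop-multiplicity bookkeeping implicit. If anything you are more careful than the paper, which simply records the dense cost $MJ_{n}\prod_{k=1}^{N}J_{k}$ for $\textbf{W}^{(n)}_{r_{core}}$ and never addresses the structured evaluation of $\textbf{H}_{\Psi^{(n)}_{V},:}^{(n)}\textbf{O}^{(n)}_{r_{core}}$ that you correctly single out as the delicate point.
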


\begin{proof}
In the process updating $\textbf{B}^{(n)}$, $n$ $\in$ $\{N\}$,
the space overhead and computational complexity of intermediate matrices
$\bigg\{
\textbf{H}_{\Psi^{(n)}_{V},:}^{(n)},
\textbf{O}^{(n)}_{r_{core}},
\textbf{W}^{(n)}_{r_{core}},
\widehat{x}^{(n)}_{\Psi^{(n)}_{V}},
\textbf{V}^{(n)}_{r_{core}}
\bigg| n\in\{N\}, r_{core}\in \{R_{core}\} \bigg\}$ are
$\bigg\{
M\prod\limits_{k=1}^{N}J_{k},
J_{n}\prod\limits_{k=1}^{N}J_{k},
MJ_{n},
M,
J_{n}
\bigg| n\in\{N\}, r_{core}\in \{R_{core}\} \bigg\}$ and
$\bigg\{
M\prod\limits_{k=1}^{N}J_{k},
\prod\limits_{k=1}^{N}J_{k},
MJ_{n}
\prod\limits_{k=1}^{N}J_{k}$,
$(R_{core}-1)MJ_{n},
MJ_{n}+MJ_{n}^{2}
\bigg| n\in\{N\}, r_{core}\in \{R_{core}\} \bigg\}$, respectively.
\end{proof}

\begin{theorem}
The space overhead for updating $\textbf{A}^{(n)}$, $n$ $\in$ $\{N\}$ is
$O\bigg(
(max(|(\Psi^{(n)}_{M})_{i_{n}}|)+1)\prod\limits_{k=1}^{N}J_{k}+
max(|(\Psi^{(n)}_{M})_{i_{n}}|)J_{n}+
J_{n}
\bigg)$.
\end{theorem}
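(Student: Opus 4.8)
The plan is to follow exactly the bookkeeping used in the proof of the space overhead for $\textbf{B}^{(n)}$: I would walk through the factor-matrix block of Algorithm~\ref{alg31} (Lines 20--24), read off the shape of every intermediate matrix that must be resident in memory from Table~\ref{table31}, write each shape as the product of its dimensions, and then sum these sizes and collapse the subdominant ones under the big-$O$ notation. Because the loop over $i_{n}$ in Lines 19--24 reuses the same caches $cache_{S}$, $cache_{E}$, $\textbf{C}^{(n)}$, and $\textbf{F}^{(n)}$ on every iteration, the working set is bounded not by the total number of nonzeros but by the largest per-row selection; this is why $\max(|(\Psi^{(n)}_{M})_{i_{n}}|)$, rather than a sum over $i_{n}$, is the natural parameter in the claimed bound.

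Concretely, I would account for the following resident objects. The sampled block $\textbf{S}^{(n)}_{(\Psi^{(n)}_{M})_{i_{n}},:}$ (Line 20) occupies $\max(|(\Psi^{(n)}_{M})_{i_{n}}|)\prod_{k=1,k\neq n}^{N}J_{k}$ cells; the core unfolding $\widehat{\textbf{G}}^{(n)}\in\mathbb{R}^{J_{n}\times\prod_{k\neq n}J_{k}}$ occupies $\prod_{k=1}^{N}J_{k}$ cells; the sampled factor block $\textbf{E}^{(n)}_{:,(\Psi^{(n)}_{M})_{i_{n}}}$ built in Line 21 occupies $\max(|(\Psi^{(n)}_{M})_{i_{n}}|)J_{n}$ cells; the Gram-type matrix $\textbf{C}^{(n)}=\textbf{E}^{(n)}_{:,(\Psi^{(n)}_{M})_{i_{n}}}\textbf{E}^{(n)T}_{:,(\Psi^{(n)}_{M})_{i_{n}}}$ of Line 22 occupies $J_{n}^{2}$ cells; and finally the gradient aggregate $\textbf{F}^{(n)}$, the row $a_{i_{n},:}^{(n)}$ under update, and the observed slice $\textbf{X}^{(n)}_{i_{n},(\Psi^{(n)}_{M})_{i_{n}}}$ contribute $J_{n}$, $J_{n}$, and $\max(|(\Psi^{(n)}_{M})_{i_{n}}|)$ cells respectively. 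The SGD call in Line 24 adds only lower-order scratch space and can be ignored.

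Summing and simplifying is the last step. Using $\prod_{k\neq n}J_{k}\le\prod_{k=1}^{N}J_{k}$ I would merge the $\textbf{S}^{(n)}$ term with the $\widehat{\textbf{G}}^{(n)}$ term into $\big(\max(|(\Psi^{(n)}_{M})_{i_{n}}|)+1\big)\prod_{k=1}^{N}J_{k}$; the $\textbf{E}^{(n)}$ term yields $\max(|(\Psi^{(n)}_{M})_{i_{n}}|)J_{n}$; and the residual $J_{n}^{2}$, $J_{n}$, and $\max(|(\Psi^{(n)}_{M})_{i_{n}}|)$ terms are absorbed, leaving precisely the stated bound. The main obstacle I anticipate is not any single size computation but justifying these absorptions cleanly — in particular arguing that $J_{n}^{2}\le\prod_{k=1}^{N}J_{k}$ and $\prod_{k\neq n}J_{k}\le\prod_{k=1}^{N}J_{k}$ so that both are $O(\prod_{k=1}^{N}J_{k})$ in the regime of interest — together with making precise that the cache reuse across the $i_{n}$ loop is exactly what replaces a sum over rows by the single worst-case factor $\max(|(\Psi^{(n)}_{M})_{i_{n}}|)$.
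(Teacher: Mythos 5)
Your proposal is correct and follows essentially the same route as the paper: the paper's proof simply lists the resident intermediates of Lines 20--24 ($\widehat{\textbf{G}}^{(n)}$, $\textbf{S}^{(n)}_{(\Psi^{(n)}_{M})_{i_{n}},:}$, $\textbf{E}^{(n)}_{:,(\Psi^{(n)}_{M})_{i_{n}}}$, $\textbf{F}^{(n)}$) with sizes $\prod_{k=1}^{N}J_{k}$, $\max(|(\Psi^{(n)}_{M})_{i_{n}}|)\prod_{k=1}^{N}J_{k}$, $\max(|(\Psi^{(n)}_{M})_{i_{n}}|)J_{n}$, and $J_{n}$, and sums them, exactly as you do. Your accounting is in fact slightly more thorough (you also track $\textbf{C}^{(n)}$, the row under update, and the data slice before absorbing them), but this does not change the argument.
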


\begin{theorem}
The computational complexity for updating $\textbf{A}^{(n)}$, $n$ $\in$ $\{N\}$ is
$O\bigg(
(N+MN+M)\prod\limits_{n=1}^{N}J_{n}+
\sum\limits_{n=1}^{N}I_{n}J_{n}
\bigg)$.
\end{theorem}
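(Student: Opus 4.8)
The plan is to follow the same accounting scheme used in the proof of the computational complexity for $\textbf{B}^{(n)}$: traverse the factor-matrix loop (Lines 17--26 of Algorithm~\ref{alg31}), tally the arithmetic cost of each intermediate matrix $\bigl\{\textbf{S}^{(n)}_{(\Psi^{(n)}_{M})_{i_{n}},:}, \textbf{E}^{(n)}_{:,(\Psi^{(n)}_{M})_{i_{n}}}, \textbf{C}^{(n)}, \textbf{F}^{(n)}\bigr\}$ for a fixed row index $i_{n}$, and then sum first over $i_{n}\in\{I_{n}\}$ and afterwards over $n\in\{N\}$. The identity that drives the row-summation is that the selected columns partition the $M$ sampled nonzeros of the mode-$n$ unfolding, so $\sum_{i_{n}=1}^{I_{n}}\bigl|(\Psi^{(n)}_{M})_{i_{n}}\bigr|=M$; this is what collapses every per-row cost that is linear in $\bigl|(\Psi^{(n)}_{M})_{i_{n}}\bigr|$ into a term linear in $M$.

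First I would account for the three operations that carry the full rank-product $\prod_{k=1}^{N}J_{k}$, using the bound $\prod_{k\neq n}J_{k}\le\prod_{k=1}^{N}J_{k}$ throughout. Unfolding the Kruskal-product core tensor $\widehat{\bm{\mathcal{G}}}$ of Equ.~(\ref{CP-Decomposition}) into $\widehat{\textbf{G}}^{(n)}$ for each mode costs $O\bigl(\prod_{k=1}^{N}J_{k}\bigr)$ and, across the $N$ modes, yields the $N\prod_{n=1}^{N}J_{n}$ contribution. The Kronecker-structured columns of $\textbf{S}^{(n)}_{(\Psi^{(n)}_{M})_{i_{n}},:}$ (Line 20) are assembled incrementally at $O\bigl(\prod_{k\neq n}J_{k}\bigr)$ per sampled column, which after the row-sum contributes at the order of $M\prod_{n=1}^{N}J_{n}$. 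The dominant step is forming $\textbf{E}^{(n)}_{:,(\Psi^{(n)}_{M})_{i_{n}}}=\widehat{\textbf{G}}^{(n)}\textbf{S}^{(n)^{T}}_{(\Psi^{(n)}_{M})_{i_{n}},:}$ (Line 21): each of its columns costs $O\bigl(J_{n}\prod_{k\neq n}J_{k}\bigr)=O\bigl(\prod_{k=1}^{N}J_{k}\bigr)$, so over $M$ samples and $N$ modes this gives the $MN\prod_{n=1}^{N}J_{n}$ term.

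Next I would handle the cheap, per-row work that produces the additive $\sum_{n=1}^{N}I_{n}J_{n}$ term. The essential point is never to materialize the $J_{n}\times J_{n}$ matrix $\textbf{C}^{(n)}$: as in the improved parallel strategy of Section~\ref{Section352}, $a_{i_{n},:}^{(n)}\textbf{E}^{(n)}\textbf{E}^{(n)^{T}}$ is evaluated by the two vector reductions $cache_{Factp}=a_{i_{n},:}^{(n)}\textbf{E}^{(n)}_{:,(\Psi^{(n)}_{M})_{i_{n}}}$ and $cache_{Factvec}=cache_{Factp}\,\textbf{E}^{(n)^{T}}_{:,(\Psi^{(n)}_{M})_{i_{n}}}$, each $O\bigl(J_{n}\bigl|(\Psi^{(n)}_{M})_{i_{n}}\bigr|\bigr)$; together with the data term $\textbf{X}^{(n)}_{i_{n},(\Psi^{(n)}_{M})_{i_{n}}}\textbf{E}^{(n)^{T}}_{:,(\Psi^{(n)}_{M})_{i_{n}}}$ of Line 23 and the $O(J_{n})$ stochastic step of Line 24, each row contributes $O(J_{n})$ once its sampled-column count is absorbed into the global $M$, and summing over $i_{n}$ and $n$ produces $\sum_{n=1}^{N}I_{n}J_{n}$. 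Collecting the four groups and discarding lower-order terms under $O(\cdot)$ gives exactly $O\bigl((N+MN+M)\prod_{n=1}^{N}J_{n}+\sum_{n=1}^{N}I_{n}J_{n}\bigr)$.

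I expect the main obstacle to be the bookkeeping around the ragged per-row sample counts $\bigl|(\Psi^{(n)}_{M})_{i_{n}}\bigr|$: one must verify that \emph{every} per-row cost is at most linear in this count so that the identity $\sum_{i_{n}}\bigl|(\Psi^{(n)}_{M})_{i_{n}}\bigr|=M$ applies cleanly. This is precisely why the vector-reduction formulation that keeps the $\textbf{C}^{(n)}$ and $\textbf{F}^{(n)}$ work at $O(J_{n})$ rather than $O(J_{n}^{2})$ per row is indispensable: it is what yields the stated additive $\sum_{n=1}^{N}I_{n}J_{n}$ instead of a term quadratic in $J_{n}$.
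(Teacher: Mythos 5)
Your proposal follows essentially the same route as the paper's own proof: itemize the arithmetic cost of the intermediate quantities $\widehat{\textbf{G}}^{(n)}$, $\textbf{S}^{(n)}_{(\Psi^{(n)}_{M})_{i_{n}},:}$, $\textbf{E}^{(n)}_{:,(\Psi^{(n)}_{M})_{i_{n}}}$, and $\textbf{F}^{(n)}$ per row, collapse the row sum via $\sum_{i_{n}}\bigl|(\Psi^{(n)}_{M})_{i_{n}}\bigr|=M$, and sum over the $N$ modes. The paper's proof is only a terse listing of these per-item costs, so your version simply makes the same accounting explicit (with a slightly different but equally defensible attribution of the $M\prod_{n}J_{n}$ versus $MN\prod_{n}J_{n}$ terms); no gap.
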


\begin{proof}
In the process updating $\textbf{A}^{(n)}$, $n$ $\in$ $\{N\}$,
the space overhead and computational complexity of intermediate matrices
$\bigg\{
\widehat{\textbf{G}}^{(n)},
\textbf{S}^{(n)}_{(\Psi^{(n)}_{M})_{i_{n}},:},
\textbf{E}^{(n)}_{:, (\Psi^{(n)}_{M})_{i_{n}}},
\textbf{F}^{(n)}
\bigg| i_{n} \in \{I_{n}\}, n\in\{N\}
\bigg\}$ are
$\bigg\{
\prod\limits_{k=1}^{N}J_{k},
max(|(\Psi^{(n)}_{M})_{i_{n}}|)\prod\limits_{k=1}^{N}J_{k},
max(|(\Psi^{(n)}_{M})_{i_{n}}|)J_{n},
J_{n}
\bigg| i_{n} \in \{I_{n}\}, n\in\{N\}
\bigg\}$ and
$\bigg\{
\prod\limits_{k=1}^{N}J_{k},
|(\Psi^{(n)}_{M})_{i_{n}}|\prod\limits_{k=1}^{N}J_{k},
|(\Psi^{(n)}_{M})_{i_{n}}|J_{n}$,
$J_{n}
\bigg| i_{n} \in \{I_{n}\}, n\in\{N\}
\bigg\}$, respectively.
\end{proof}

\begin{figure}[htp]
  \centering
    \subfigure[Movielen-10M]{
    \label{fig401 (a3)} 
    \includegraphics[width=1.4in,height=1.4in]{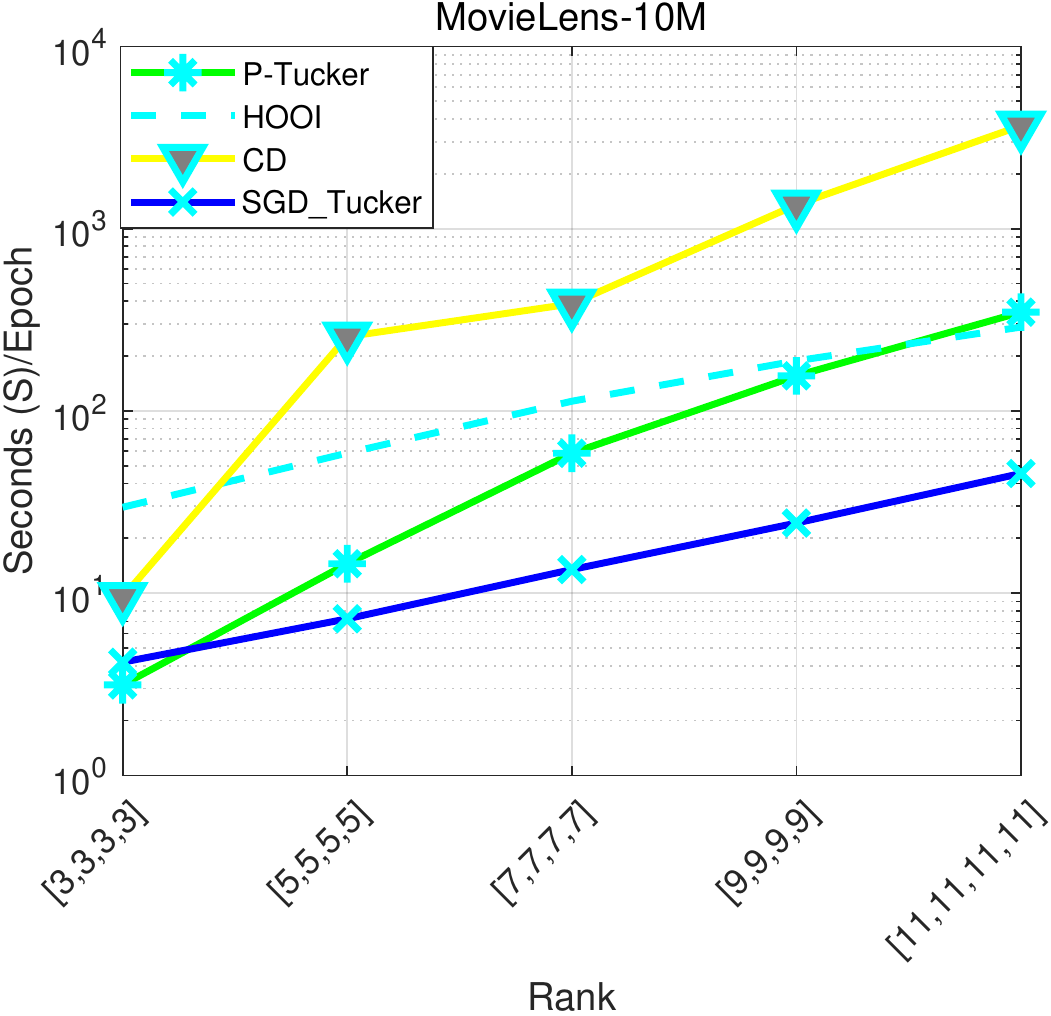}}
    ~
    \subfigure[Movielen-20M]{
    \label{fig401 (a4)} 
    \includegraphics[width=1.4in,height=1.4in]{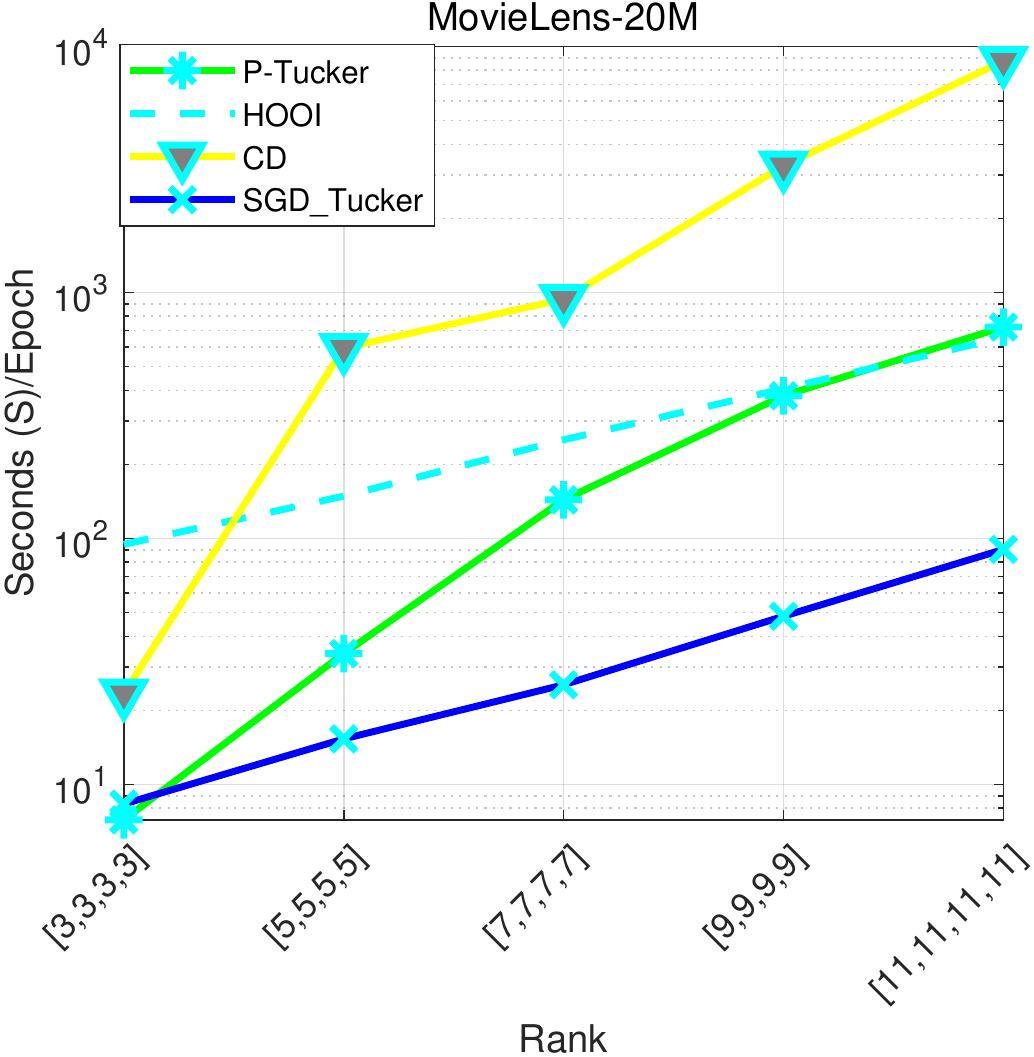}}
    ~
    \subfigure[Netflix-100M]{
    \label{fig401 (a5)} 
    \includegraphics[width=1.4in,height=1.4in]{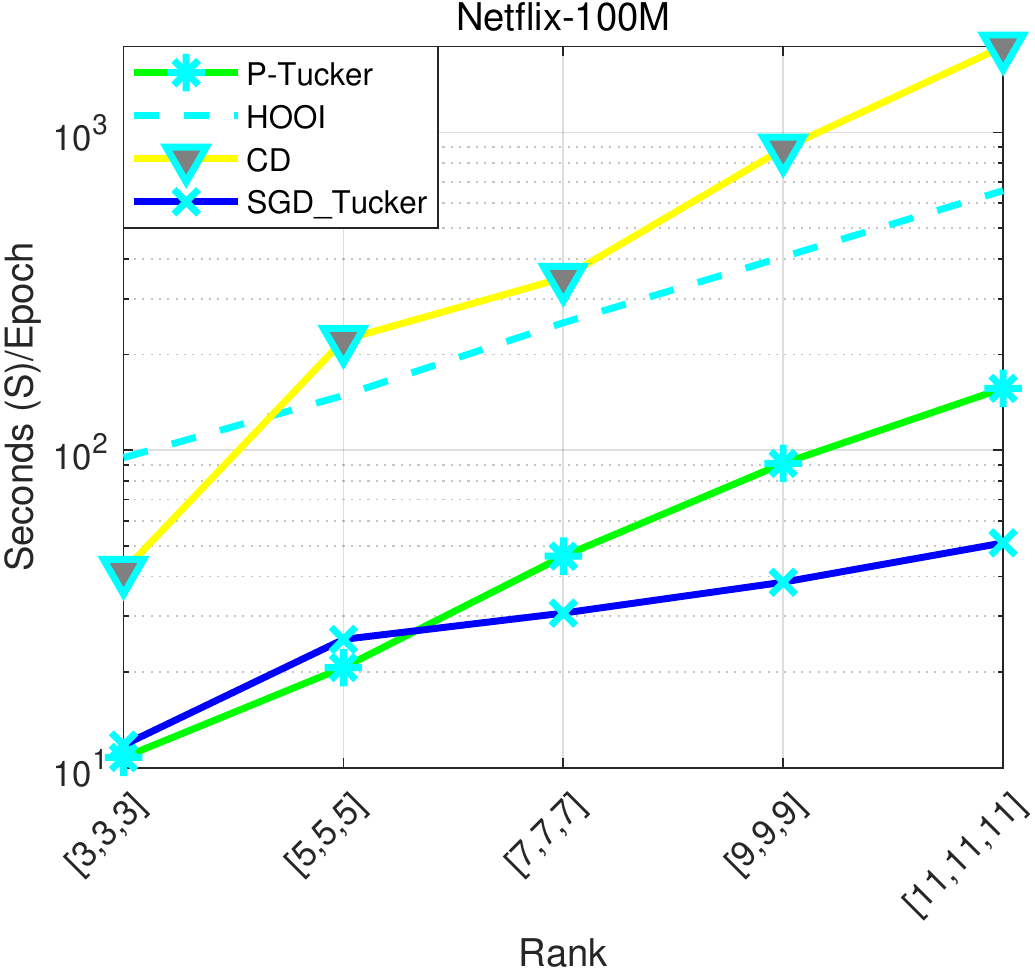}}
        ~
    \subfigure[Yahoo-250M]{
    \label{fig401 (a6)} 
    \includegraphics[width=1.4in,height=1.4in]{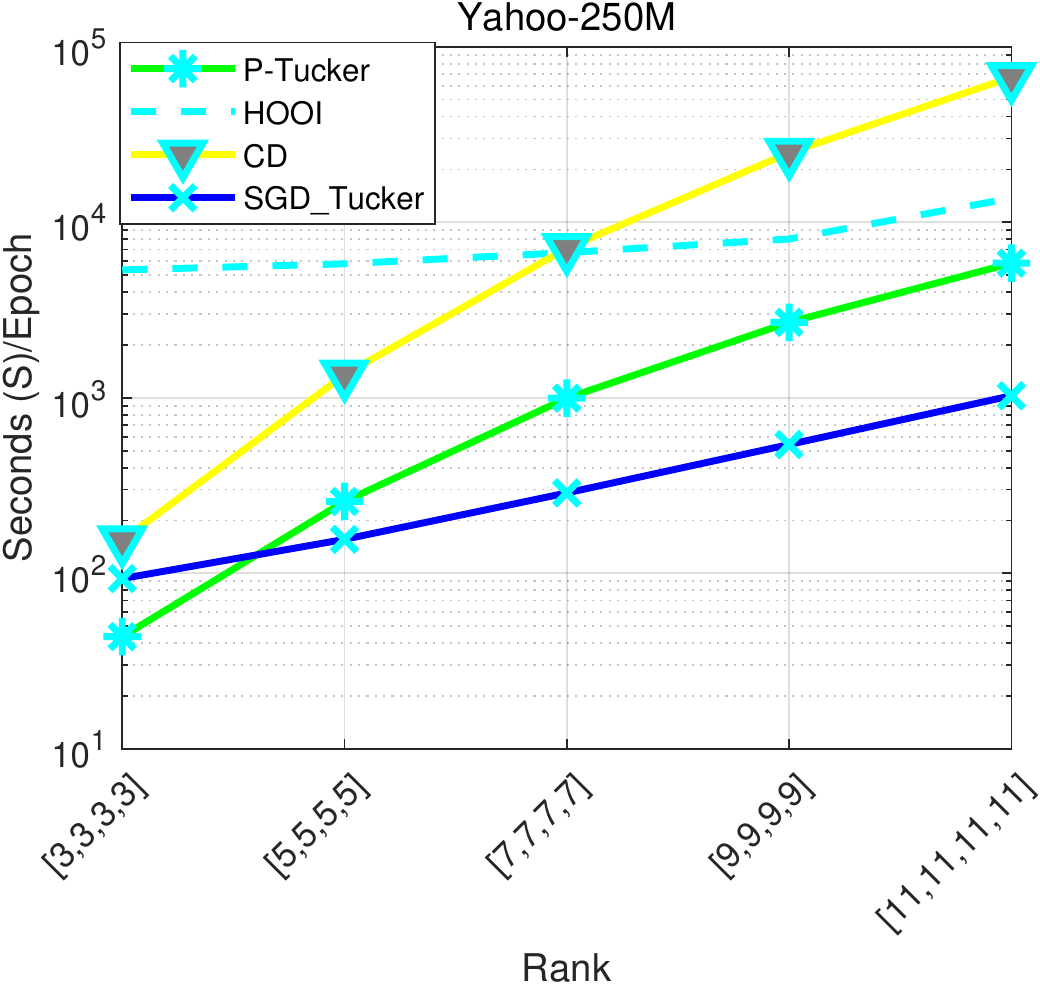}}
\caption{
\textcolor[rgb]{0.00,0.00,1.00}{Rank} scalability for time overhead on full threads.
The computational scalability for P$-$Tucker, HOOI, CD, and SGD$\_$Tucker on
the 4 datasets with successively increased number of total elements, i.e.,Movielen-10M, Movielen-20M, Netflix-100M, and Yahoo-250M.
}
    \label{fig401}
\end{figure}

\begin{figure}[htbp]
  \centering
    \subfigure[Movielen-10M]{
    \label{fig402 (a3)} 
    \includegraphics[width=1.4in,height=1.4in]{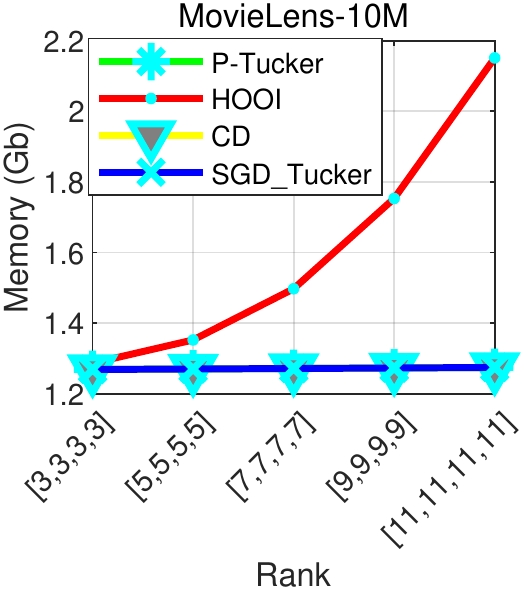}}
    ~
    \subfigure[Movielen-20M]{
    \label{fig402 (a4)} 
    \includegraphics[width=1.4in,height=1.4in]{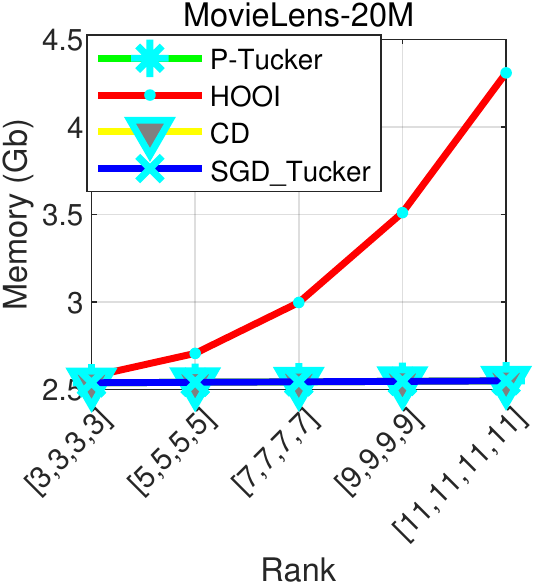}}
    ~
    \subfigure[Netflix-100M]{
    \label{fig402 (a5)} 
    \includegraphics[width=1.4in,height=1.4in]{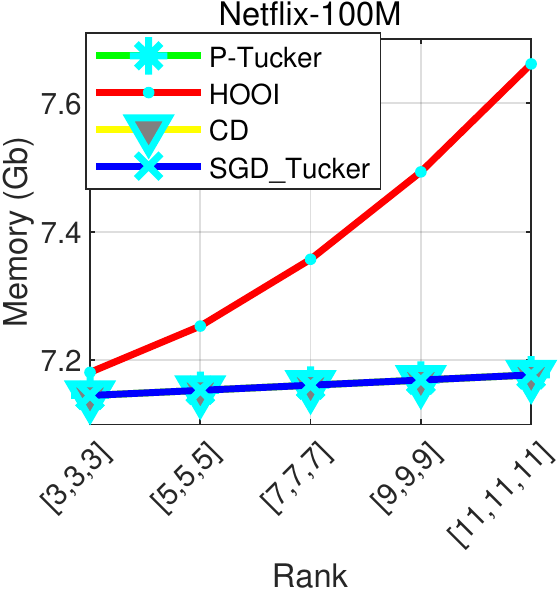}}
        ~
    \subfigure[Yahoo-250M]{
    \label{fig402 (a6)} 
    \includegraphics[width=1.4in,height=1.4in]{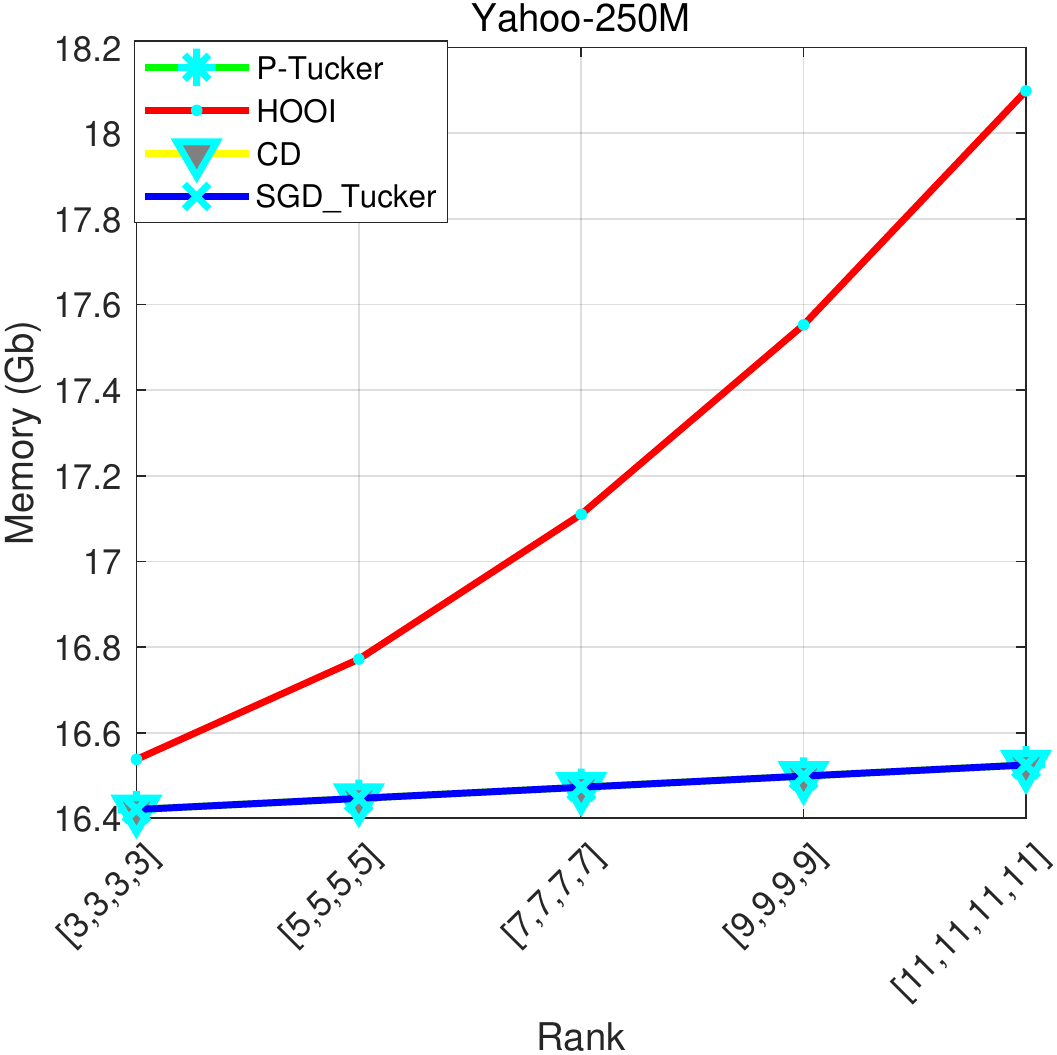}}
         \caption{\textcolor[rgb]{0.00,0.00,1.00}{Rank} scalability for memory overhead on a thread running.
         (In this work,
         \textcolor[rgb]{0.00,0.00,1.00}{GB} refers to GigaBytes and \textcolor[rgb]{0.00,0.00,1.00}{MB} refers to Megabytes).
         The space scalability for P$-$Tucker, CD, HOOI, and SGD$\_$Tucker on
         the 4 datasets with successively increased total elements, i.e., Movielen-10M, Movielen-20M, Netflix-100M, and Yahoo-250M.
}
    \label{fig402}
\end{figure}

\begin{table*}
	\centering
	\footnotesize
	\setlength{\abovecaptionskip}{0pt}
	\caption{Datasets}
	\begin{tabular}{c|cccccc}
		\hline \hline
		\diagbox{Parameters}{Datasets} & Movielens-100K & Movielens-1M & Movielens-10M & Movielens-20M & Netflix-100M & Yahoo-250M \\
        \hline
		$N_{1}$                        & 943        & 6, 040       & 71, 567       & 138, 493      & 480, 189     & 1, 000, 990\\
		$N_{2}$                        & 1, 682           & 3, 706       & 10, 677       & 26, 744       & 17, 770      & 624, 961\\
		$N_{3}$                        & 2              & 4            & 15            & 21            & 2, 182       & 133\\
		$N_{4}$                        & 24             & 24           & 24            & 24            & -            & 24\\
		$|\Omega|$                     & 90, 000        & 990, 252     & 9, 900, 655   & 19, 799, 448  & 99, 072, 112 & 227, 520, 273\\
		$|\Gamma|$                     & 10, 000        &   9, 956     & 99, 398       & 200, 815      & 1, 408, 395  &  25, 280, 002\\
        Order                          &   4            &   4          &     4         &     4         &   3          & 4\\
		Max Value                      & 5.0            & 5.0          & 5.0           & 5.0           & 5.0          & 5.0\\
		Min Value                      & 0.5            & 0.5          & 0.5           & 0.5           & 1.0          & 1.0\\
        \hline
        \hline
	\end{tabular}
	\label{Data sets}
\end{table*}

\section{Evaluation} \label{Section4}
The performance demonstration of SGD$\_$Tucker comprises of two parts:
1) SGD can split the high-dimension intermediate variables into small batches of
intermediate matrices and the scalability and accuracy are presented;
2) SGD$\_$Tucker can be parallelized in a straightfoward manner and
the speedup results are illustrated.
The experimental settings are presented in Section \ref{Section41}.
Sections \ref{Section42} and \ref{Section43}  show the scalability and the influence of parameters, respectively.
At last, Section \ref{Section44} presents the speedup performance of SGD$\_$Tucker and comparison with the state of the art algorithms for STD, i.e., P$-$Tucker \cite{ex234}, CD \cite{ex237}, and HOOI \cite{ex233}.
Due to the limited space,
the full experimental details for HOOI \textcolor[rgb]{0.00,0.00,1.00}{and 2 small datasets, i.e., Movielen-100K and Movielen-1M,} are presented in the supplementary material.

\begin{figure}[htbp]
  \centering
  \subfigure[Computational Time]{
      \label{fig403 (a1)} 
    \includegraphics[width=3.0in,height=2.0in]{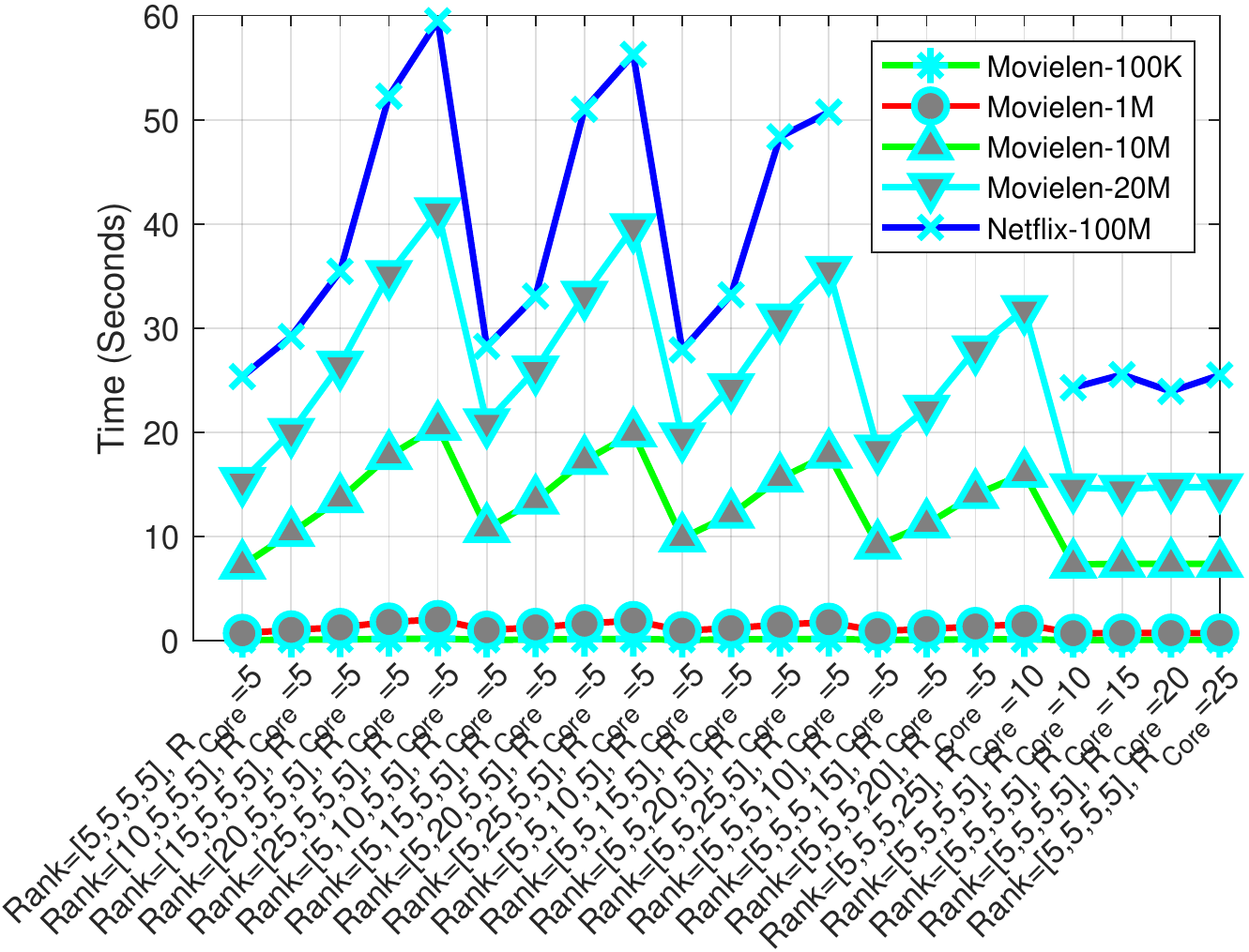}}
    ~
    \subfigure[RMSE]{
    \label{fig403 (a2)} 
    \includegraphics[width=3.0in,height=2.0in]{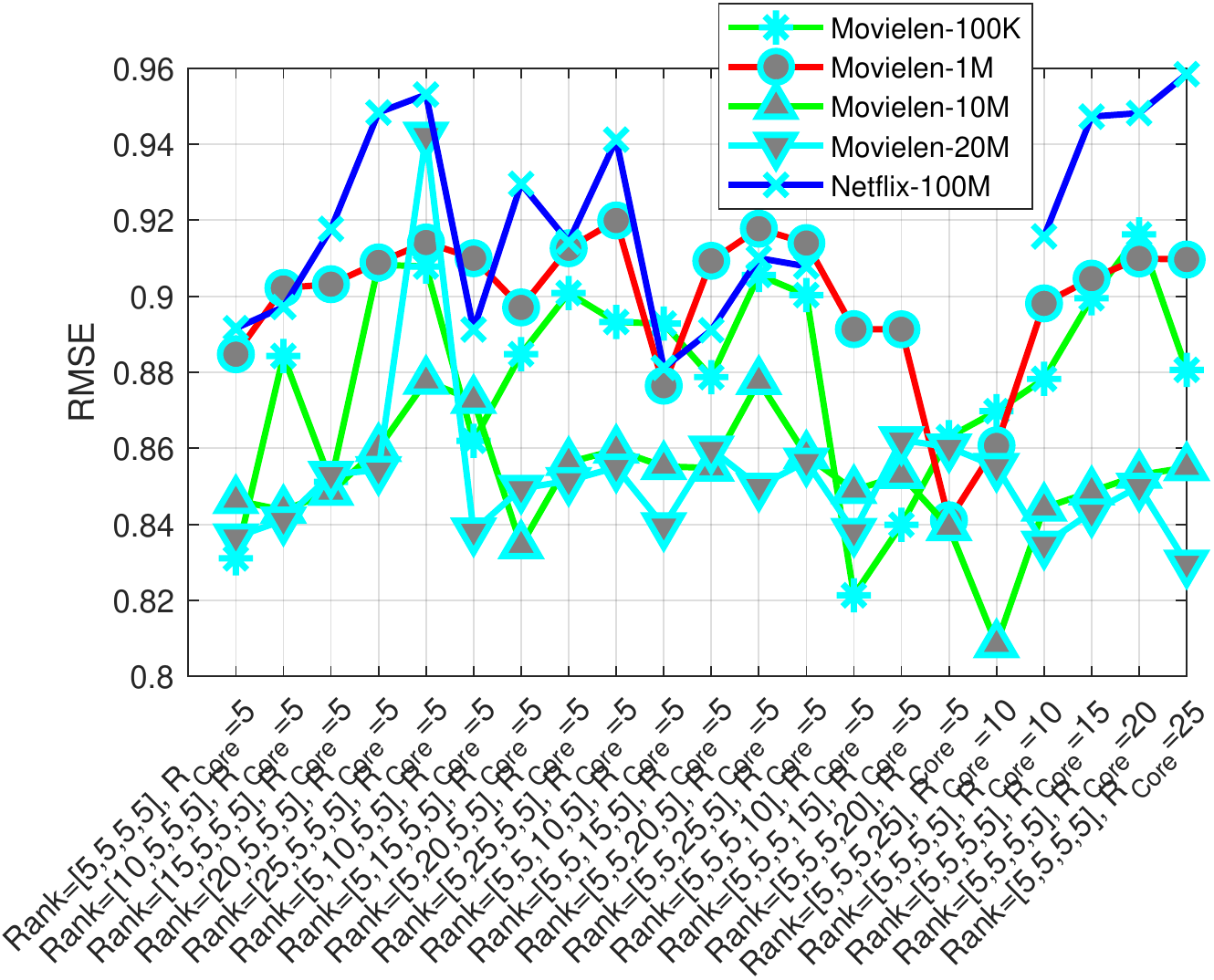}}
    ~
         \caption{Computational overhead, RMSE, and MAE VS. various rank values for SGD$\_$Tucker on 8 Cores.
         Due to limited space, each dataset is presented in $4$-order. As Netflix-100M data set only has $3$-order, its 4th index shall be neglected.}
    \label{fig403}
\end{figure}

\begin{figure}[htbp]
  \centering
    \subfigure[Movielen-10M]{
    \label{fig410 (a3)} 
    \includegraphics[width=1.4in,height=1.4in]{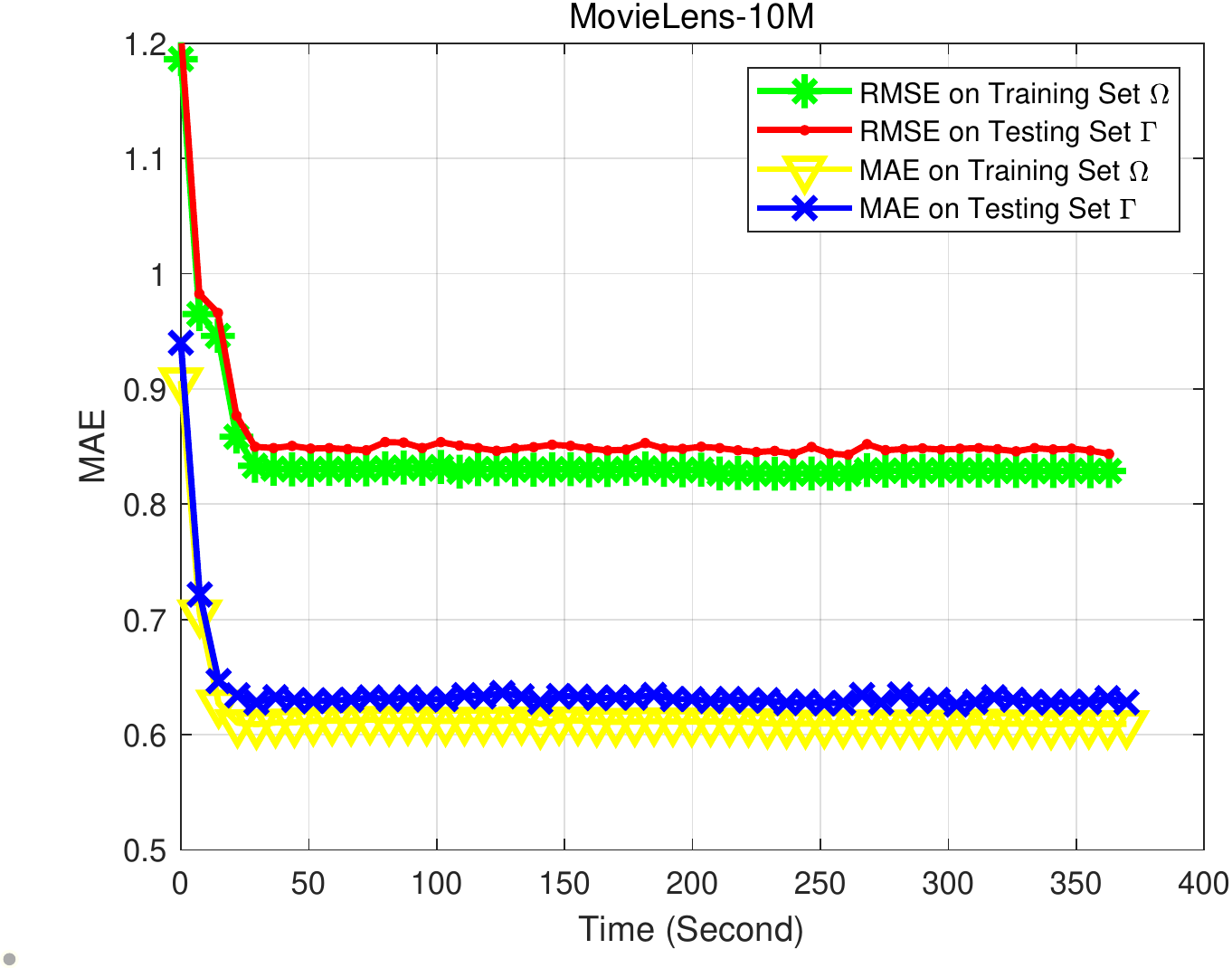}}
    ~
    \subfigure[Movielen-20M]{
    \label{fig410 (a4)} 
    \includegraphics[width=1.4in,height=1.4in]{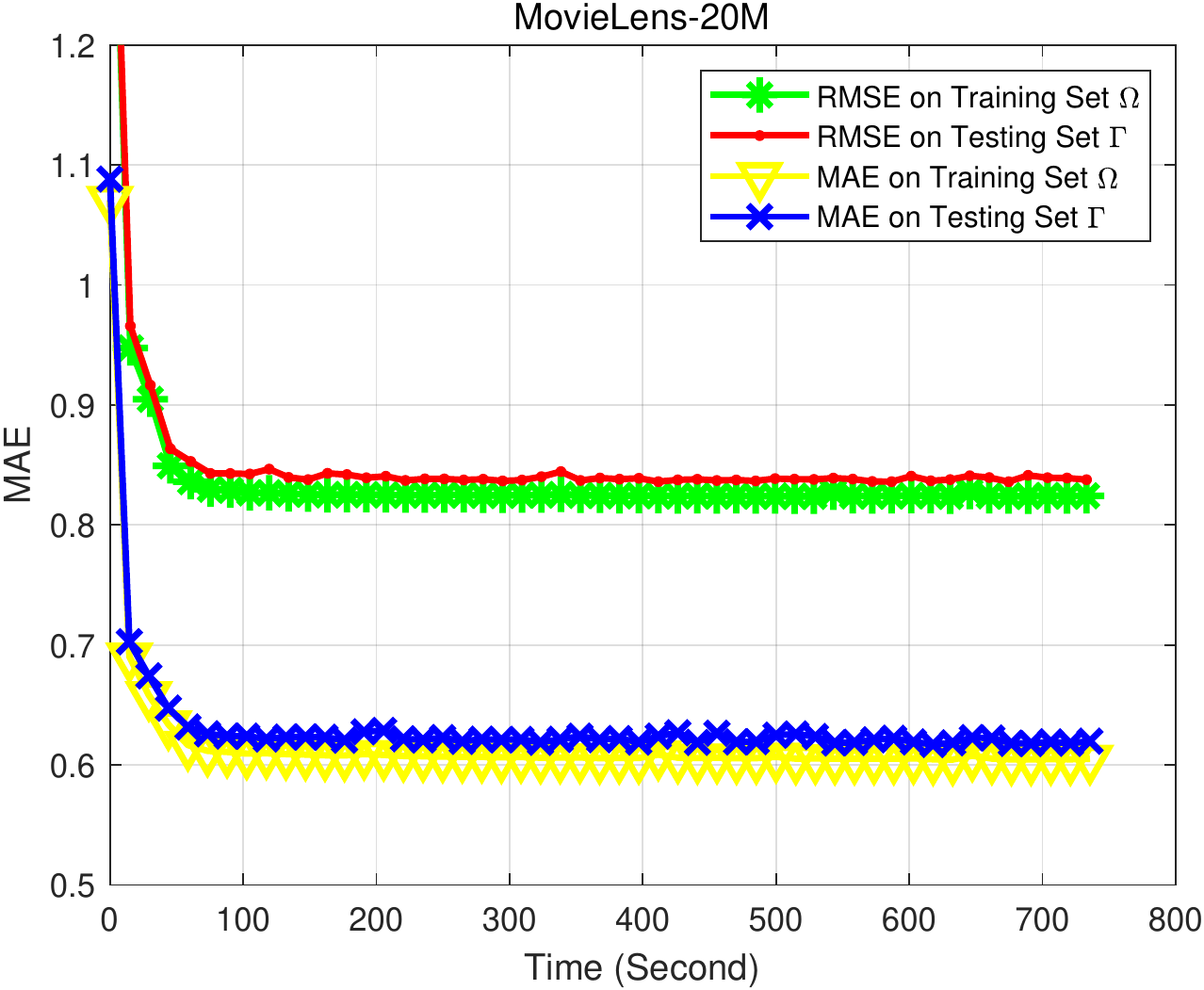}}
    ~
    \subfigure[Netflix-100M]{
    \label{fig410 (a5)} 
    \includegraphics[width=1.4in,height=1.4in]{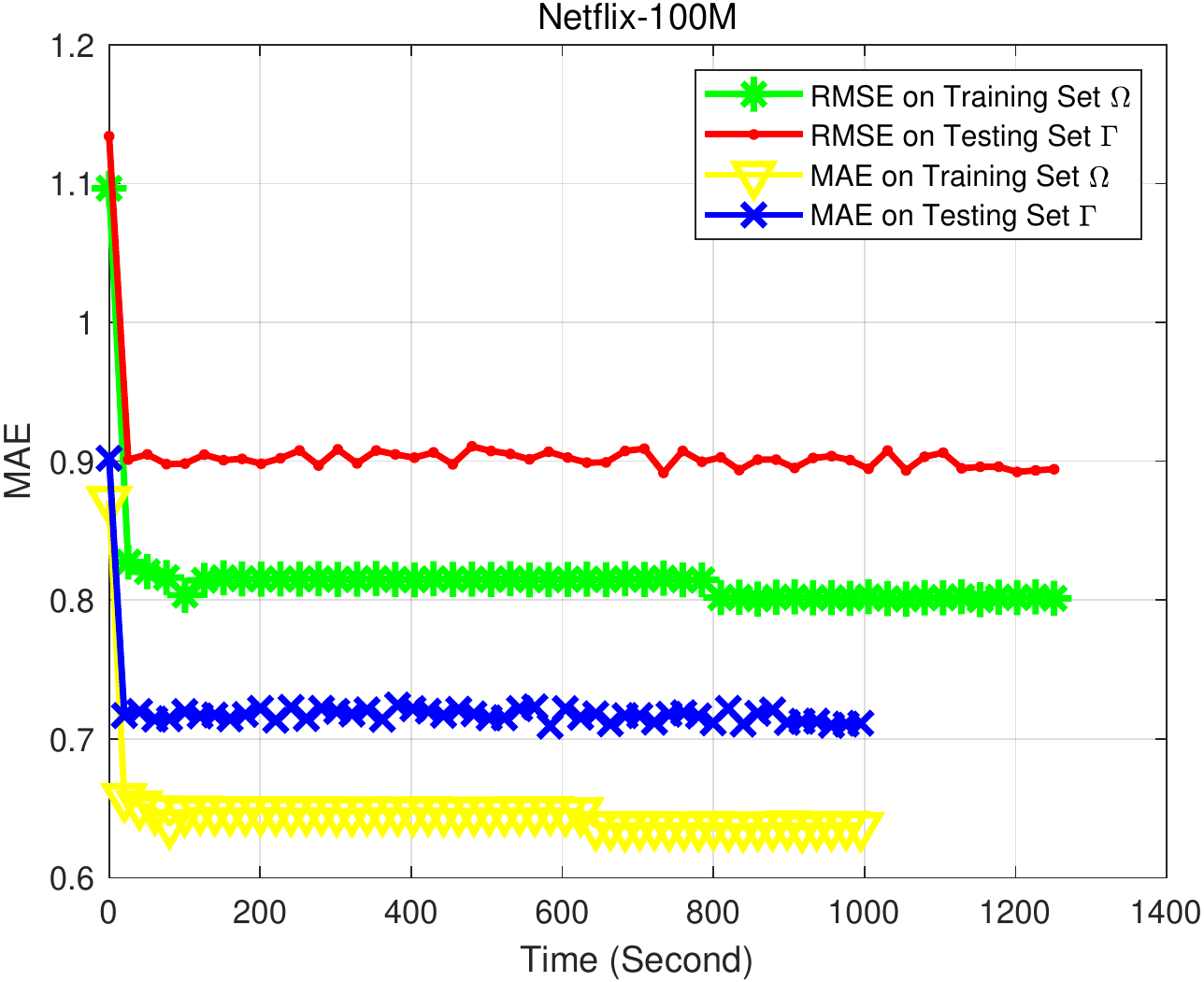}}
        ~
    \subfigure[Yahoo-250M]{
    \label{fig410 (a6)} 
    \includegraphics[width=1.4in,height=1.4in]{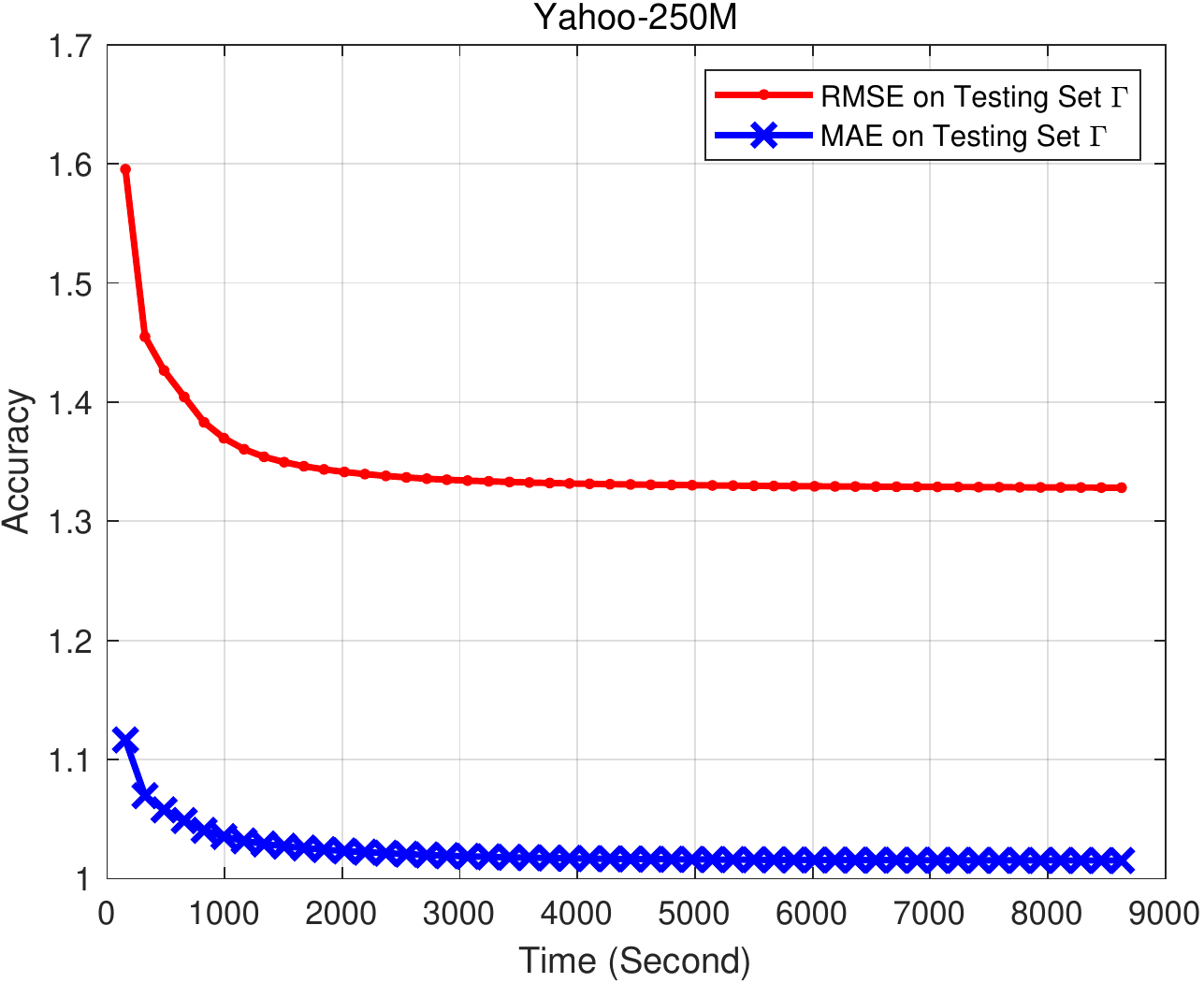}}
         \caption{RMSE and MAE vs time for SGD$\_$Tucker on training set $\Omega$ and testing set $\Gamma$}
    \label{fig410}
\end{figure}

\begin{figure}[htbp]
  \centering
    \subfigure[Movielen-10M]{
    \label{fig405 (a3)} 
    \includegraphics[width=1.4in,height=1.4in]{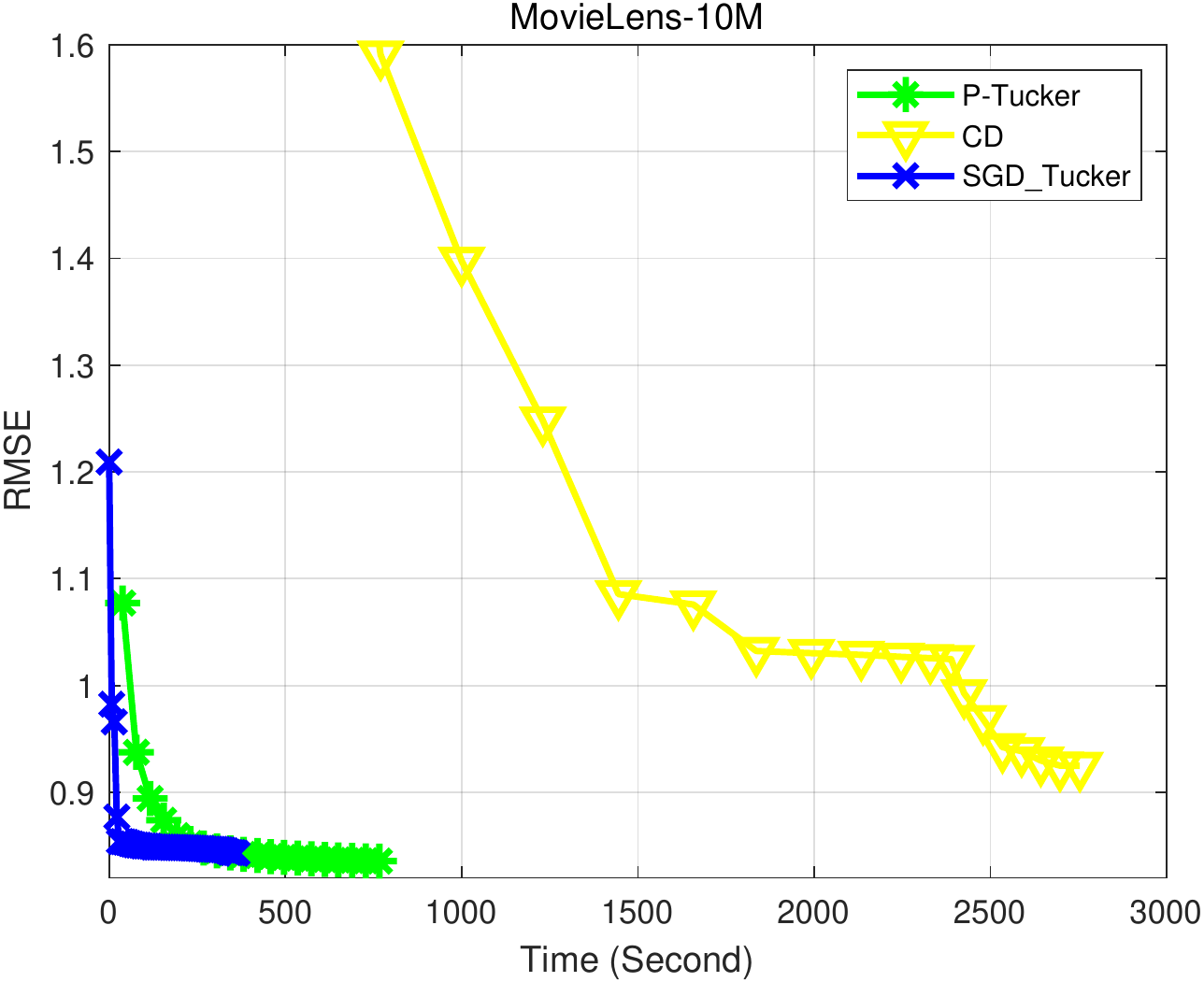}}
    ~
    \subfigure[Movielen-20M]{
    \label{fig405 (a4)} 
    \includegraphics[width=1.4in,height=1.4in]{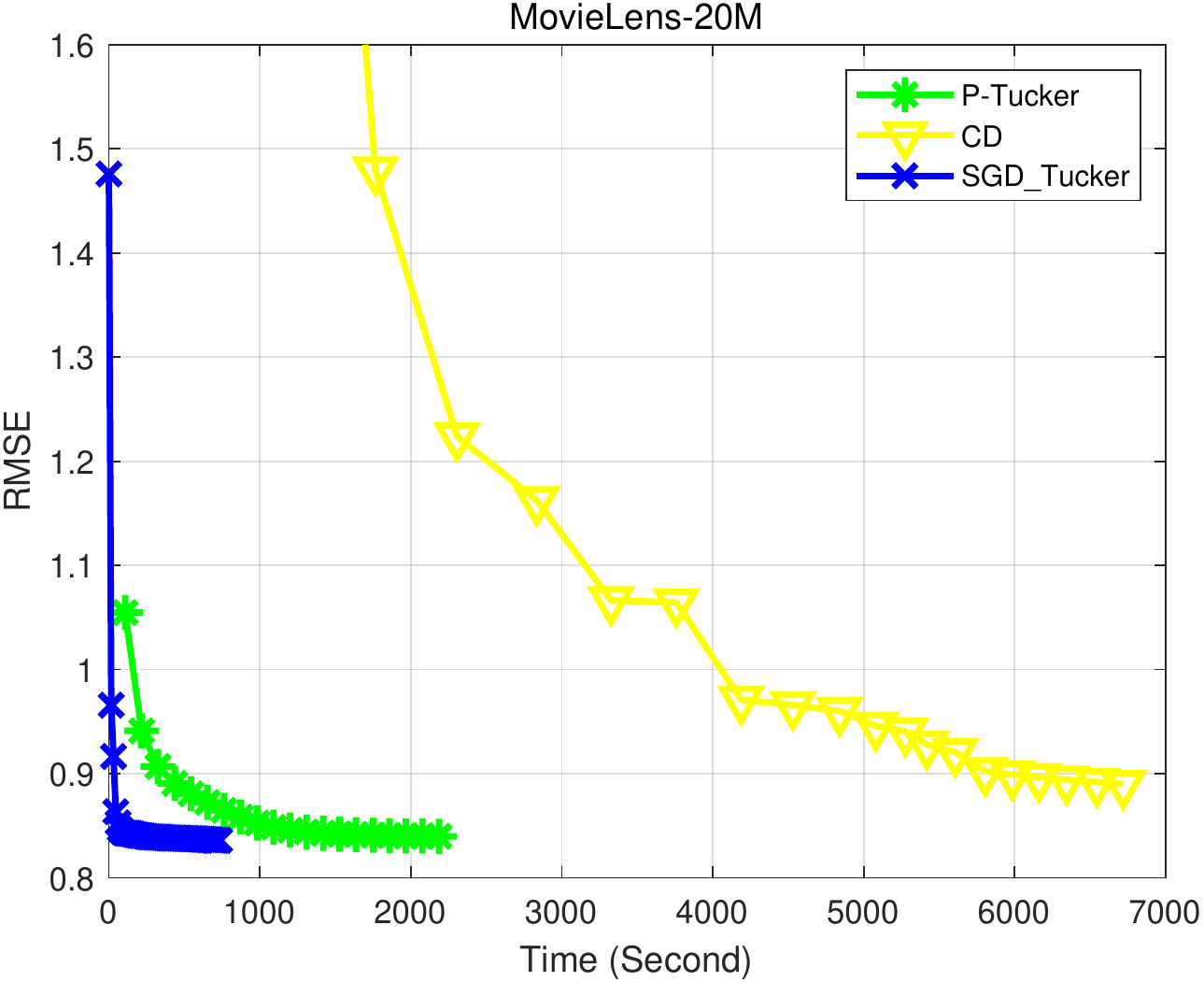}}
    ~
    \subfigure[Netflix-100M]{
    \label{fig405 (a5)} 
    \includegraphics[width=1.4in,height=1.4in]{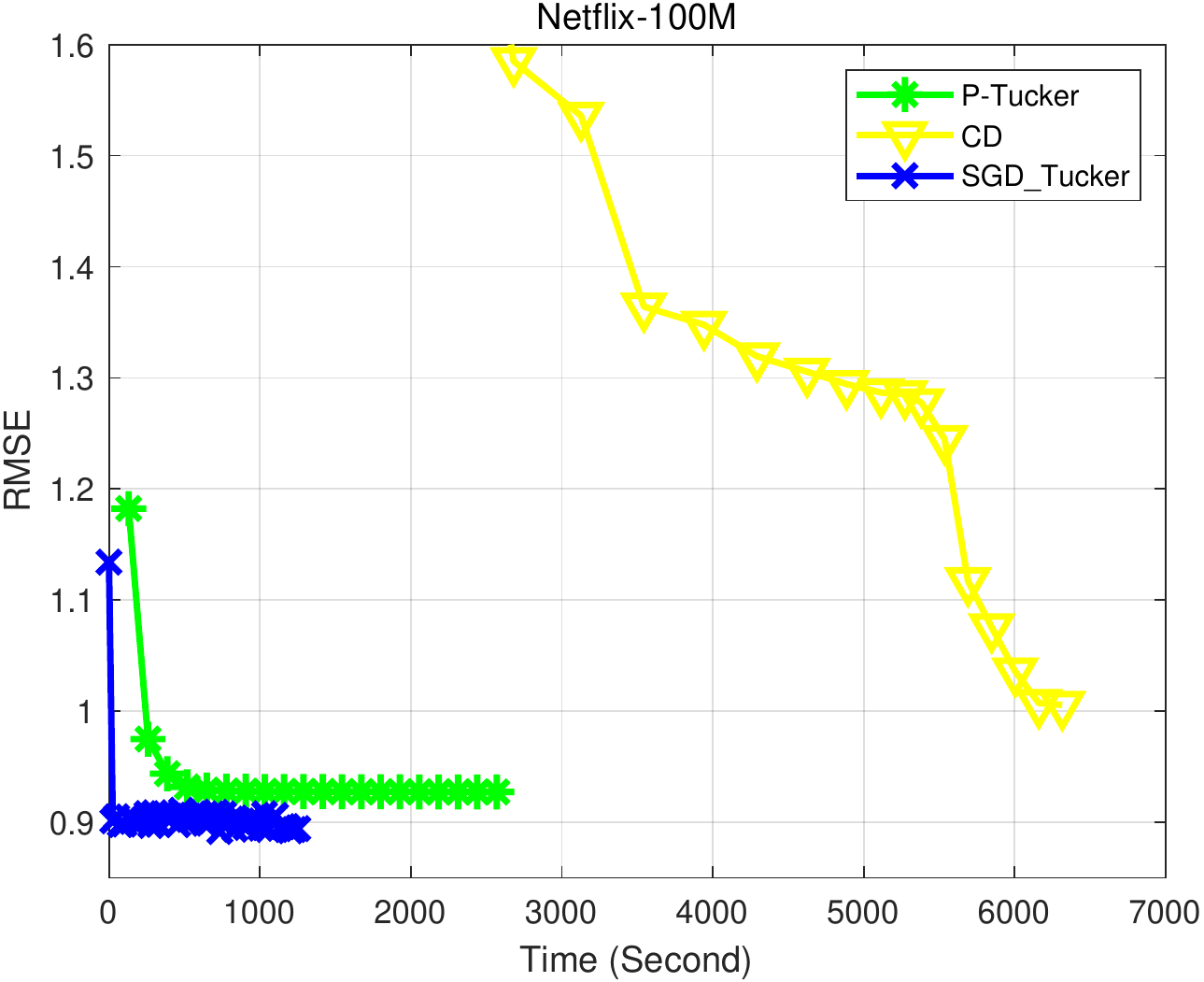}}
        ~
      \subfigure[Yahoo-250M]{
      \label{fig203 (a1)} 
    \includegraphics[width=1.4in,height=1.4in]{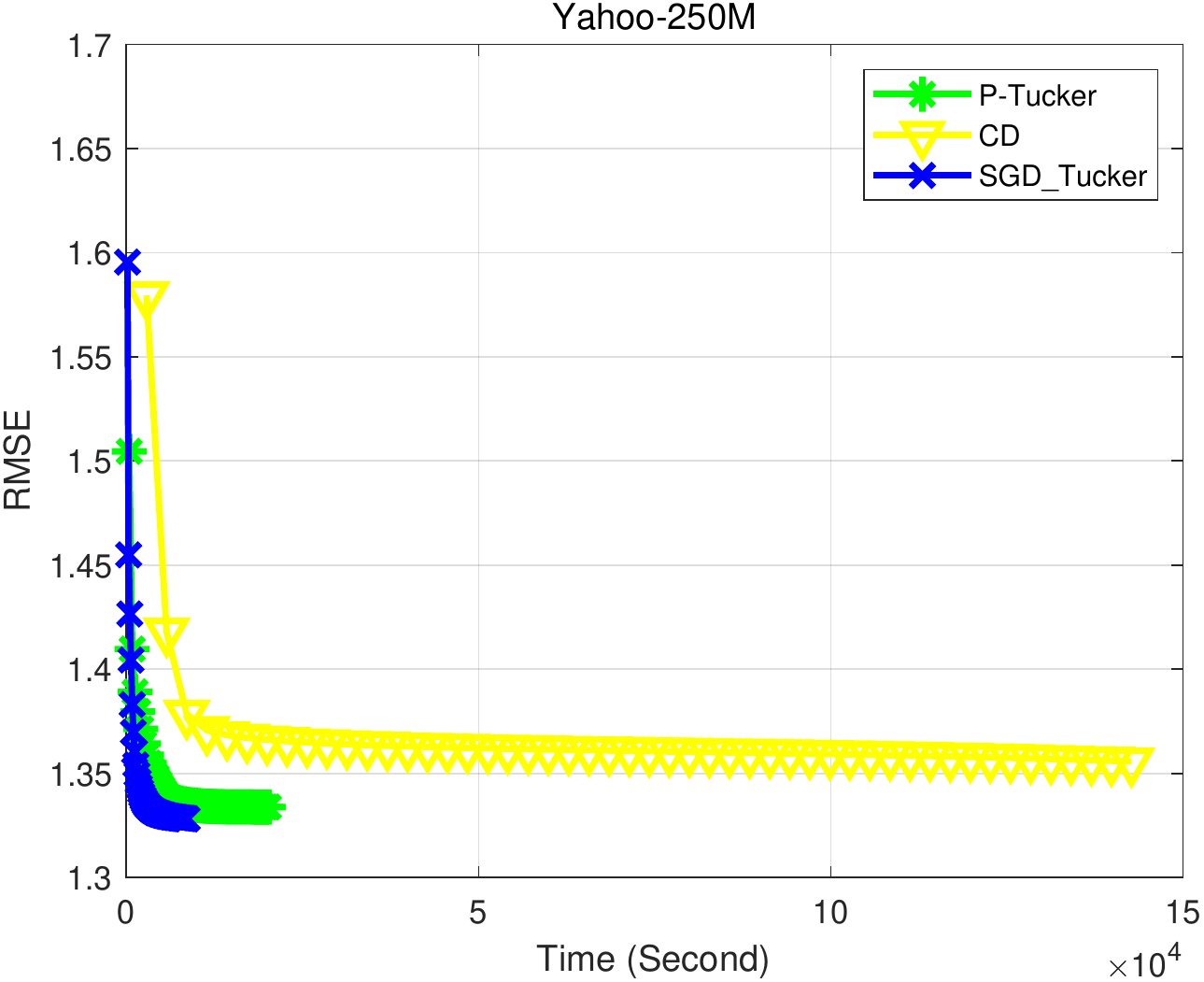}}
         \caption{RMSE comparison of SGD$\_$Tucker, P$-$Tucker, and CD on the 4 datasets.}
    \label{fig405}
\end{figure}

\begin{figure}[htbp]
  \centering
    \includegraphics[width=2.8in,height=2.2in]{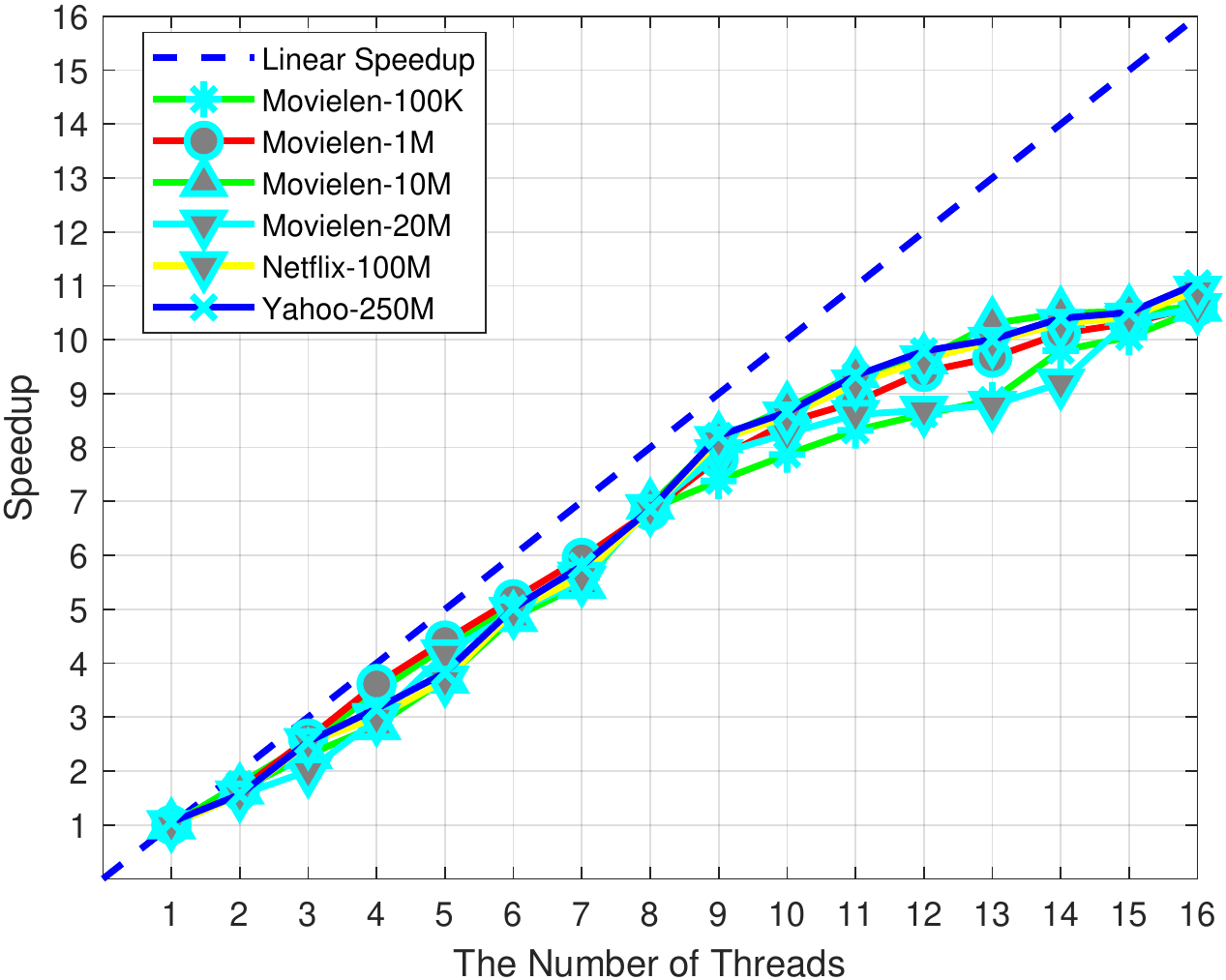}
         \caption{Speedup Performance on the 6 datasets.}
    \label{fig404}
\end{figure}

\subsection{Experimental Settings} \label{Section41}
The CPU server is equipped with 8 Intel(R) Xeon(R) E5-2620 v4 CPUs and each core has 2 hyperthreads,
running on 2.10 GHz, for
the state of the art algorithms for STD, e.g., P$-$Tucker \cite{ex234}, CD \cite{ex237} and HOOI \cite{ex233}.
The experiments are conducted 3 public datasets :
Netflix \footnotemark[1] \footnotetext[1]{https://www.netflixprize.com/},
Movielens \footnotemark[2] \footnotetext[2]{https://grouplens.org/datasets/movielens/}, and
Yahoo-music \footnotemark[3] \footnotetext[3]{https://webscope.sandbox.yahoo.com/catalog.php?datatype}.
\textcolor[rgb]{0.00,0.00,1.00}{The datasets which be used in our experiments can be downloaded in this link.} \footnotemark[4] \footnotetext[4] {https://drive.google.com/drive/folders/1tcnAsUSC9jFty7AfWetb5R\\7nevN2WUxH?usp=sharing}
For Movielens, data cleaning is conducted and $0$ values are changed from zero to $0.5$ and
we make compression for Yahoo-music dataset in which the maximum value 100 be compressed to 5.0.
The specific situation of datasets is shown in Table \ref{Data sets}.
In this way, the ratings of the two data sets are in the same interval,
which facilitates the selection of parameters.
Gaussian distribution $N\bigg(0.5,0.1^{2}\bigg)$ is adopted for initialization;
meanwhile, the regularization parameters $\{\lambda_{\textbf{A}}, \lambda_{\textbf{B}}\}$ are set as $0.01$ and
the learning rate $\{\gamma_{\textbf{A}}, \gamma_{\textbf{B}}\}$ are set as $\{0.002, 0.001\}$, respectively.
For simplification, we set $M=1$.
The accuracy is measured by $RMSE$ and $MAE$ as:
\begin{equation}
\begin{aligned}
\setlength{\abovedisplayskip}{0pt}
\setlength{\belowdisplayskip}{0pt}
RMSE&=\sqrt{\bigg(\sum_{(i_{1},\cdots,i_{N})\in\Gamma}(x_{i_{1},\cdots,i_{N}}-\widehat{x}_{i_{1},\cdots,i_{N}})^{2}\bigg)\bigg/|\Gamma|};\\
MAE&=\bigg(\sum_{(i_{1},\cdots,i_{N})\in\Gamma}\bigg|x_{i_{1},\cdots,i_{N}}-\widehat{x}_{i_{1},\cdots,i_{N}}\bigg|\bigg)\bigg/|\Gamma|,\\
\end{aligned}
\end{equation}
respectively,
where $\Gamma$ denotes the test sets.
All the experiments are conducted on double-precision float numeric.

\subsection{Scalability of the Approach} \label{Section42}
When ML algorithms process large-scale data,
two factors to influence the computational overhead are:
1) time spent for data access on memory, i.e., reading and writing data;
2) time spent on the computational components for executing the ML algorithms.
Fig. \ref{fig401} presents the computational overhead per training epoch.
The codes of P$-$Tucker and CD have fixed settings for the rank value, i.e., $J_{1}=\cdots=J_{n}=\cdots=J_{N}$.
The rank value is set as an increasing order, $\{3, 5, 7, 9, 11\}$.
As shown in Fig. \ref{fig401}, SGD$\_$Tucker has the lowest computational time overhead.
HOOI needs to construct the intermediate matrices $\textbf{Y}_{(n)}$ $\in$ $\mathbb{R}^{I_{n}\times \prod\limits_{k\neq n}^{N}J_{k}}$, $n$ $\in$ $\{N\}$ and SVD for $\textbf{Y}_{(n)}$ (In supplementary material).
P$-$Tucker should construct Hessian matrix and the computational complexity of each Hessian matrix inversion is  $O\big(J_{n}^{3}\big)$, $n$ $\in$ $\{N\}$.
P$-$Tucker saves the memory overhead a lot at the expense of adding the memory accessing overhead.
Eventually, the accessing and computational overheads for inverse operations of Hessian matrices make the overall computational time of P$-$Tucker surpass SGD$\_$Tucker.
The computational structure of CD has linear scalability.
However, the CD makes the update process of each feature element in a feature vector be discrete.
Thus, the addressing process is time-consuming.
Meanwhile, as Fig. \ref{fig401} shows, the computational overhead satisfies the constraints of the time complexity analyses, i.e.,
P$-$Tucker, HOOI, CD (Supplementary material)
and SGD$\_$Tucker (Section \ref{Section34}).

The main space overheads for STD comes from the storage need of the intermediate matrices.
The SVD operation for the intermediate matrices $\textbf{Y}_{(n)}$ $\in$ $\mathbb{R}^{I_{n}\times \prod\limits_{k\neq n}^{N}J_{k}}$, $n$ $\in$ $\{N\}$ makes the HOOI scale
exponentially.
P$-$Tucker constructs the eventual Hessian matrices directly which can avoid the huge construction of intermediate matrices.
However, the overhead for addressing is huge and the long time-consuming for CD lies in the same situation.
CD is a linear approach from the update equation (Supplementary material).
However, in reality, the linearity comes from  the approximation of the second-order Hessian matrices and, thus, the accessing overhead of discrete elements of CD overwhelms the accessing overhead of continuous ones (P$-$Tucker and SGD$\_$Tucker).
As shown in Fig. \ref{fig402},
SGD$\_$Tucker has linear scalability for space overhead and
HOOI is exponentially scalable.
P$-$Tucker and CD have the near-linear space overhead.
However, as Fig. \ref{fig401} shows, the computational overheads of P$-$Tucker and CD is significantly higher than SGD$\_$Tucker.
Meanwhile, as Fig. \ref{fig402} shows, the space overhead satisfies the constraints of the space complexity analyses, i.e.,
HOOI, P$-$Tucker, and CD (Supplementary material) and SGD$\_$Tucker (Section \ref{Section34}).

\subsection{The Influence of Parameters} \label{Section43}
In this section, the influence of the rank on the behavior of the algorithm is presented.
We summarize the influence on the computational time and the RMSE are presented in Fig. \ref{fig403}.
The Netflix-100M dataset has only 3-order.
Owing to the limited space,
the performances on Netflix-100M dataset are combined with other 5 datasets
(Netflix-100M dataset has only 3-order and the performances on the 4th order of Netflix-100M dataset should be negligible).
The update for the Kruskal matrices $\textbf{B}^{(n)}$, $n$ $\in$ $N$ on the steps of beginning and last epochs can also obtain a
comparable results.
Thus, the presentation for the Kruskal matrices $\textbf{B}^{(n)}$, $n$ $\in$ $N$ is omitted.
The influence for computational time is divided into 5 sets, i.e.,
$\big\{J_{1}\in\{5, 10, 15, 20, 25\} , J_{k}=5, k\neq 1, R_{Core}=5\big\}$,
$\big\{J_{2}\in\{5, 10, 15, 20, 25\} , J_{k}=5, k\neq 2, R_{Core}=5\big\}$,
$\big\{J_{3}\in\{5, 10, 15, 20, 25\} , J_{k}=5, k\neq 3, R_{Core}=5\big\}$,
$\big\{J_{4}\in\{5, 10, 15, 20, 25\} , J_{k}=5, k\neq 4, R_{Core}=5\big\}$,
$\big\{J_{n}=5, n\in\{N\}, R_{Core}\in\{10,15,20,25\}\big\}$.
As the Fig. \ref{fig403} (a) shows,
the computation time increases with $J_{n}$, $n$ $\in$ $\{N\}$ increasing linearly and
the $R_{Core}$ only has slight influence for computational time.

The codes of P$-$Tucker and CD have fixed settings for the rank value ($J_{1}=\cdots=J_{n}=\cdots=J_{N}$).
For a fair comparison, the rank values that we select should have slight changes with the RMSE performances of other rank values.
The computational overhead of P$-$Tucker and CD is sensitive to the rank value and RMSE is comparable non-sensitive.
Hence, we choose a slightly small value $J_{n}=5| n\in\{N\}$.
As Fig. \ref{fig403} (b) shows, when rank is set as $J_{n}=5, n\in\{N\}$, the RMSE is equivalent to other situations on average.

\subsection{Speedup and Comparisons} \label{Section44}
We test the RMSE performances on the 6 datasets and the RMSE is used to estimate the missing entries.
The speedup is evaluated as $Time_{1}/Time_{T}$ where $Time_{T}$ is the running time with $T$ threads.
When $T$ $=$ $\{1,\cdots,8\}$,
as the Fig. \ref{fig404} shows, the speedup performance of SGD$\_$Tucker has a near-linear speedup.
The speedup performance is presented in Fig. \ref{fig404} which shows the whole speedup performance of the SGD$\_$Tucker.
The speedup performance is a bit more slower when the number of threads is larger than $8$.
The reason is that the thread scheduling and synchronization occupies a large part of time.
When we use $16$ threads, there is still a $11X$ speedup and the efficiency is $11/16=68\%$.

The rank value for $3$-order tensor is set to $[5,5,5]$ and the rank value for $4$-order tensor is set to $[5,5,5,5]$.
To demonstrate the convergence of SGD$\_$Tucker, the convergence performances of SGD$\_$Tucker for the Movielen and Netflix are presented in Fig. \ref{fig410}.
As shown in Fig. \ref{fig410}, SGD$\_$Tucker can get more stable RMSE and MAE metrices which
mean that SGD$\_$Tucker has more robustness in large-scale datasets than in small datasets.
As Fig. \ref{fig405} shows, SGD$\_$Tucker can not only run faster than the state of the art approaches,
i.e., P$-$Tucker and CD,
but also can obtain higher RMSE value.
SGE$\_$Tucker runs 2$X$ and 20$X$ faster than P$-$Tucker and CD, respectively, to obtain the optimal RMSE value.

\section{Conclusion and Future Works} \label{Section5}
STD is widely used in low-rank representation learning for sparse big data analysis.
Due to the entanglement problem of core tensor and factor matrices,
the computational process for STD has the intermediate variables explosion problem due to following all elements in HOHDST.
To solve this problem, we first derive novel optimization objection function for STD and then propose SGD$\_$Tucker to solve it by dividing the high-dimension intermediate variables into small batches of intermediate matrices. 
Meanwhile, the low data-dependence of SGD$\_$Tucker makes it amenable to
fine-grained parallelization.
The experimental results show that SGD$\_$Tucker has linear computational time and space overheads and
SGD$\_$Tucker runs at least 2$X$ faster than the state of the art STD solutions.
In the future works,
we plan to explore how to accelerate the SGD$\_$Tucker by
the state of the art stochastic models, e.g.,
variance SGD \cite{ex225},
Stochastic Recursive Gradient \cite{ex230}, or
momentum SGD \cite{ex226}.

SGD$\_$Tucker is a linearly scalable method for STD on big-data platforms.
For the future work, we will embed SGD$\_$Tucker into popular distributed data processing platforms such as Hadoop, Spark, or Flink.
We will also aim to extend SGD$\_$Tucker to GPU platforms, i.e., OpenCL and CUDA.

\section*{Acknowledgement}

\ifCLASSOPTIONcaptionsoff
  \newpage
\fi
\small
\bibliographystyle{IEEEtran}
\bibliography{bib}

\end{document}